\documentclass[10pt,aps,prl,twocolumn,longbibliography,floatfix,nobalancelastpage,superscriptaddress]{revtex4-1}

\usepackage[T1]{fontenc}
\usepackage{graphicx}
\usepackage{amssymb}
\usepackage{mathtools}
\usepackage{amsmath}
\usepackage{amsthm}
\usepackage{xr-hyper}
\usepackage[colorlinks=true,linkcolor=blue,citecolor=blue,urlcolor=blue,plainpages=false,pdfpagelabels]{hyperref}
\usepackage{color,xcolor,colortbl}
\usepackage{multirow}
\usepackage{bbm}
\usepackage{float}
\usepackage{array}
\usepackage{verbatim}
\usepackage{braket}
\usepackage{dsfont}

\usepackage{times}
\usepackage{txfonts}
\usepackage[mathcal]{euscript}

\DeclareMathOperator{\Tr}{Tr}

\newcommand{\I}{\mathrm{i}}
\newcommand{\e}{\mathrm{e}}
\newcommand{\G}{\mathrm{\scriptscriptstyle G}}
\renewcommand{\t}{{\scriptscriptstyle\mathsf{T}}}

\newcommand{\ketbra}[1]{\ket{#1}\!\!\bra{#1}}
\newcommand{\ketbraa}[2]{\ket{#1}\!\!\bra{#2}}

\newcommand{\texteq}[1]{\overset{#1}{=}}
\newcommand{\textleq}[1]{\overset{#1}{\leq}}

\newcommand{\textgeq}[1]{\overset{#1}{\geq}}

\newcommand{\lsmatrix}{\left(\begin{smallmatrix}}
\newcommand{\rsmatrix}{\end{smallmatrix}\right)}

\makeatletter
\g@addto@macro\bfseries{\boldmath}
\makeatother

\theoremstyle{plain}%
\newtheorem{theorem}{Theorem}
\newtheorem*{theorem*}{Theorem}
\newtheorem{lemma}{Lemma}
\newtheorem{corollary}[theorem]{Corollary}

\newtheorem{proposition}[theorem]{Proposition}
\newtheorem{remark}[theorem]{Remark}

\theoremstyle{definition}

\allowdisplaybreaks

\begin{document}

\widetext

\title{Extendibility of bosonic Gaussian states}

\author{Ludovico Lami}
\email{ludovico.lami@gmail.com}
\affiliation{School of Mathematical Sciences and Centre for the Mathematics and Theoretical Physics of Quantum Non-Equilibrium Systems, University of Nottingham, University Park, Nottingham NG7 2RD, United Kingdom}

\author{Sumeet Khatri}
\email{skhatr5@lsu.edu}
\affiliation{Hearne Institute for Theoretical Physics, Department of Physics and Astronomy, Louisiana State University, Baton Rouge, Louisiana, 70803, USA}

\author{Gerardo Adesso}
\email{gerardo.adesso@nottingham.ac.uk}
\affiliation{School of Mathematical Sciences and Centre for the Mathematics and Theoretical Physics of Quantum Non-Equilibrium Systems, University of Nottingham, University Park, Nottingham NG7 2RD, United Kingdom}

\author{Mark M. Wilde}
\email{mwilde@lsu.edu}
\affiliation{Hearne Institute for Theoretical Physics, Department of Physics and Astronomy, Louisiana State University, Baton Rouge, Louisiana, 70803, USA}
\affiliation{Center for Computation and Technology, Louisiana State University, Baton Rouge, Louisiana, 70803, USA}

\date{\today}

\begin{abstract}
	
Extendibility of bosonic Gaussian states is a key issue in continuous-variable quantum information. We show that a bosonic Gaussian state is $k$-extendible if and only if it has a Gaussian $k$-extension, and we derive a simple semidefinite program, whose size scales linearly with the number of local modes, to efficiently decide $k$-extendibility of any given bosonic Gaussian state. When the system to be extended comprises one mode only, we provide a closed-form solution. Implications of these results  for the steerability of quantum states and for the extendibility of bosonic Gaussian channels are discussed. We then derive upper bounds on the distance of a $k$-extendible bosonic Gaussian state to the set of all separable states, in terms of trace norm and R\'enyi relative entropies. These bounds, which can be seen as ``Gaussian de Finetti theorems,'' exhibit a universal scaling in the total number of modes, independently of the mean energy of the state. Finally, we establish an upper bound on the entanglement of formation of Gaussian $k$-extendible states, which has no analogue in the finite-dimensional setting.

\end{abstract}

\maketitle

Entanglement is the mainspring of modern quantum technologies. To tally the performance of such technologies, a comprehensive characterization and quantification of entanglement is needed. One of the defining features of entanglement is its {\em monogamy}~\cite{terhal2004,coffman2000,FLV88, RW89, DPS02, complete-extendibility,lancien2016}, the fact that entangled states cannot be shared among arbitrarily many subsystems. Exploring the middle ground of partially shareable states or, precisely, partially {\em extendible} states, offers a rich and practically meaningful lookout into the virtues of entanglement as a resource.

A bipartite quantum state $\rho_{AB}$ of systems $A$ and $B$ is called \textit{$k$-extendible} on $B$ if there exists a quantum state $\widetilde{\rho}_{AB_1\dotsb B_k}$ on  $A$ and $k$ copies $B_1,\dotsc,B_k$ of $B$ that is permutation-invariant with respect to the systems $B_i$ and satisfies $\Tr_{B_2\dotsb B_k}\left[\,\widetilde{\rho}_{AB_1\dotsb B_k}\right]=\rho_{AB}$, where $B_1\equiv B$. It is well-known that a state $\rho_{AB}$ is separable if and only if it is $k$-extendible for all $k\geq 2$~\cite{FLV88, RW89, DPS02, complete-extendibility}. 
The nested sets of $k$-extendible states can thus be used to approximate the set of separable states, which has resulted in work on quantum de Finetti theorems~\cite{deFinetti0, deFinetti1, 1-1/2-de-Finetti, deFinetti2, Koenig2009, deFinetti4, CJYZ16} and other studies of entanglement~\cite{NOP09, BC12}. Extendibility also arises in the contexts of security of quantum key distribution~\cite{MCL06, MRDL09, KL17}, capacities of quantum channels~\cite{NH09, Kaur2018, BBFS18}, Bell's inequalities~\cite{TDS03, KGM17}, and other information-theoretic scenarios~\cite{Lan16, VV-Markov}. More broadly, the extendibility problem is a special case of the QMA-complete quantum marginal problem~\cite{Klyachko_2006, tyc, CLL13, Schilling14, Tyc2015, Liu06, LCV07}, which has been referred to in quantum chemistry as the $N$-representability problem~\cite{Tred57,Coul60,Cole63}. For fixed $k$, the extendibility problem can be formulated as a semidefinite program (SDP), making it efficiently solvable for low-dimensional systems $A$ and $B$~\cite{DPS02, complete-extendibility}. Analytic conditions for $k$-extendibility in finite-dimensional systems are known only for particular values of $k$ and/or for special classes of states~\cite{Ranade09, JV13, CJKLZ14, FLSC14, KGM17}.

In the infinite-dimensional case, of central relevance for quantum-optical realizations, the theory of Gaussian entanglement has been explored thoroughly in the past two decades~\cite{adesso14, BUCCO, revisited}. However, more general extendibility questions have been approached sparingly. The only work that we are aware of is~\cite{Bhat16}, where it was shown that a Gaussian state is separable if and only if it is Gaussian $k$-extendible for all $k$.

Here we study and characterize the full hierarchy of extendibility for quantum Gaussian states. After showing that any Gaussian state is $k$-extendible if and only if it is Gaussian $k$-extendible, we derive a simple SDP in terms of the state's covariance matrix in order to decide its $k$-extendibility. The size of our SDP scales linearly with the number of local modes. We also provide an analytic condition that completely characterizes the set of $k$-extendible states in the case of the extended system containing one mode only, generalizing the well-known positive partial transpose (PPT) criterion~\cite{PeresPPT, HorodeckiPPT, Simon00}.
We then discuss several applications of this result, deriving along the way: (i) analytic conditions for $k$-extendibility for all single-mode Gaussian channels; (ii) a tight de Finetti-type theorem bounding the distance between any $k$-extendible Gaussian state and the set of separable states; tight upper bounds on (iii) R\'{e}nyi relative entropy of entanglement and (iv) R\'{e}nyi entanglement of formation for any $k$-extendible Gaussian state. Our results reach unexplored depths in 
the ocean of continuous-variable quantum information.

\paragraph{Gaussian states.} We recall the basic theory of quantum Gaussian states~\cite{WANG20071, BUCCO, adesso14, weedbrook12}. Let $x_j$ and $p_j$ ($1\leq j\leq n$) denote the canonical operators of a system of $n$ harmonic oscillators (modes), arranged as a vector $r\coloneqq (x_1,p_1, \ldots, x_n, p_n)^\t$. The canonical commutation relations can be compactly written as $[r,r^\t ] = \I \Omega$, where $\Omega\coloneqq \lsmatrix 0 & 1 \\ -1 & 0 \rsmatrix^{\oplus n}$ is the standard symplectic form. Given any (not necessarily Gaussian) $n$-mode state $\rho$, its mean or displacement vector is $s \coloneqq \Tr[r \, \rho] \in \mathds{R}^{2n}$, while its quantum covariance matrix (QCM) is the $2n\times 2n$ real symmetric matrix $V\coloneqq \Tr \left[ \{r-s,(r-s)^\t\} \, \rho\right]$. Gaussian states $\rho^\G$ are (limits of) thermal states of quadratic Hamiltonians and are uniquely identified by their displacement vector $s$ and QCM $V$. We shall often assume $s=0$, since the mean can be adjusted by local displacement unitaries that do not affect $k$-extendibility.
Physically legitimate QCMs $V$ satisfy the Robertson--Schr\"odinger uncertainty principle $V \geq \I \Omega$, hereafter referred to as the bona fide condition~\cite{simon94}. Any matrix obeying this condition can be the QCM of a Gaussian state.

\paragraph{Extendibility of Gaussian states.} Let $\rho_{AB}$ be a (not necessarily Gaussian) state of a bipartite system of $n=n_A+n_B$ modes. We assume that $\rho_{AB}$ has vanishing first moments and finite second moments, so that we can construct its QCM
\begin{equation}\label{V AB}
	V_{AB} = \begin{pmatrix} V_A & X \\ X^{\t} & V_B \end{pmatrix} .
\end{equation}
It can be shown \footnote{See the Supplemental Material for proofs and additional technical derivations.} that every $k$-extension $\widetilde{\rho}_{AB_1\ldots B_k}$ of $\rho_{AB}$ also has (a) vanishing first moments and (b) finite second moments, arranged in a QCM of the form
\begin{equation} \label{V AB1...Bk}
	\widetilde{V}_{AB_1\ldots B_k} = \begin{pmatrix} V_A & X & X & \ldots & X \\ X^{\t} & V_B & Y  & \ldots & Y \\[-1ex] X^{\t} & Y & V_B & \ddots & \vdots \\[-1ex] \vdots & \vdots & \ddots & \ddots & Y \\ X^{\t} & Y & \ldots & Y & V_B \end{pmatrix} ,
\end{equation}
where $Y$ is a symmetric matrix. A similar structure had already been identified in~\cite{Bhat16}; however, there the crucial fact that $Y$ needs to be symmetric was not observed. We are now concerned with the $k$-extendibility of Gaussian states. Our first result indicates that Gaussian states are in some sense a closed set under $k$-extensions:
\begin{theorem}\label{GClosed}
A Gaussian state $\rho_{AB}^\G$ is $k$-extendible if and only if it has a Gaussian $k$-extension. 
\end{theorem}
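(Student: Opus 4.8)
The plan is as follows. One implication is trivial: a Gaussian $k$-extension is in particular a $k$-extension, so it suffices to prove that $k$-extendibility of $\rho^\G_{AB}$ forces the existence of a \emph{Gaussian} $k$-extension. First I would fix an arbitrary, a priori non-Gaussian, $k$-extension $\widetilde{\rho}_{AB_1\ldots B_k}$. The key structural input, recalled just above, is that any such extension has vanishing first moments and a QCM $\widetilde{V}_{AB_1\ldots B_k}$ of the permutation-symmetric block form \eqref{V AB1...Bk}, with $Y=Y^\t$. Since $\widetilde{V}_{AB_1\ldots B_k}$ is the QCM of a genuine quantum state, it automatically satisfies the bona fide condition $\widetilde{V}_{AB_1\ldots B_k}\geq \I\Omega$; hence there exists a Gaussian state $\widetilde{\rho}^\G_{AB_1\ldots B_k}$ with zero mean and exactly this QCM. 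The claim is then that $\widetilde{\rho}^\G$ is the desired Gaussian $k$-extension — in other words, replacing an extension by the Gaussian state with the same first and second moments preserves both the marginal constraint and the permutation symmetry.

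To confirm this I would verify the two defining properties of a $k$-extension. (i) \emph{Correct marginal:} partial traces of Gaussian states are Gaussian, with first and second moments obtained by restriction, so the $AB_1$-marginal of $\widetilde{\rho}^\G$ has zero mean and QCM equal to the top-left $2(n_A+n_B)\times 2(n_A+n_B)$ corner of \eqref{V AB1...Bk}, namely $V_{AB}$ from \eqref{V AB}; since a Gaussian state is uniquely determined by its moments, this marginal equals $\rho^\G_{AB}$. (ii) \emph{Permutation invariance:} any permutation $\pi$ of $B_1,\ldots,B_k$ is implemented by a mode-permutation (passive, hence symplectic) unitary $U_\pi$, and conjugation by $U_\pi$ maps a Gaussian state to the Gaussian state whose moments are permuted accordingly. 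But in the form \eqref{V AB1...Bk} the QCM is manifestly invariant under any permutation of the $B$ blocks — all $A$–$B_i$ blocks equal $X$, all diagonal $B$ blocks equal $V_B$, and (crucially, using $Y=Y^\t$) all off-diagonal $B$–$B$ blocks equal $Y$ — while the mean stays zero, so uniqueness gives $U_\pi\,\widetilde{\rho}^\G\, U_\pi^\dagger=\widetilde{\rho}^\G$. Hence $\widetilde{\rho}^\G$ is a Gaussian $k$-extension of $\rho^\G_{AB}$.

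I expect the only genuinely non-routine ingredient to be the structural fact itself — that every $k$-extension has finite second moments arranged in the block form \eqref{V AB1...Bk} with $Y$ symmetric. Finiteness of the cross-block covariances should follow from Cauchy--Schwarz in terms of the variances within each $B_i$, which are finite because inherited from $\rho_{AB}$; the equality of all the $X$ blocks and of all the off-diagonal $B$–$B$ blocks comes from the marginal and permutation constraints; and $Y=Y^\t$ is then forced by the symmetry of the QCM together with the equality of those off-diagonal blocks. Making all of this fully rigorous (including the vanishing of the first moments) is where the real work lies, and is carried out in the Supplemental Material. Once it is granted, the theorem reduces, as above, to the observation that ``Gaussifying'' the moments of any $k$-extension yields a $k$-extension.
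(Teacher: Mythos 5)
Your proof is correct, but it follows a genuinely different route from the paper's. The paper constructs the Gaussian extension operationally: it takes $m$ copies of the (possibly non-Gaussian) extension, applies passive ``averaging'' unitaries on each group of $B_j$ systems, and invokes the quantum central limit theorem to obtain $\widetilde{\rho}^{\,\G}_{AB_1\ldots B_k}$ as a trace-norm limit; permutation symmetry of the limit is then inherited from the manifest symmetry of the approximating states, while the marginal condition is checked, as you do, by matching moments and using uniqueness of Gaussian states. You instead skip the central-limit construction entirely: since the QCM $\widetilde{V}_{AB_1\ldots B_k}$ of any extension is automatically bona fide, the Gaussian state with those moments exists outright, and both defining properties of a $k$-extension—correct $AB_1$ marginal and invariance under permutations of the $B_j$—follow from the block structure of \eqref{V AB1...Bk} (with $Y=Y^{\t}$) together with the fact that zero-mean Gaussian states are uniquely determined by, and hence inherit the symmetries of, their QCMs. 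This is exactly the observation the paper itself records in the Supplemental Material when re-deriving Theorem~\ref{G k ext thm}, so your argument is more elementary and self-contained; what the paper's CLT route buys is a constructive, physically implementable Gaussification of the given extension (passive optics acting on i.i.d.\ copies), at the price of invoking a nontrivial convergence theorem. Both arguments rest on the same nontrivial structural input, namely Lemma~\ref{simple lemma} (vanishing first moments, finite second moments, and the symmetric block form of the extension's QCM), which you correctly identify—and sketch via Cauchy--Schwarz and the marginal/symmetry constraints—as the place where the real work lies.
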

\begin{proof}
Let $\widetilde{\rho}_{AB_1\ldots B_k}$ be a (not necessarily Gaussian) $k$-extension of $\rho_{AB}^\G$. Consider $m$ identical copies of it across the systems $A_\ell B_{\ell 1}\ldots B_{\ell k}$, where $1\leq \ell\leq m$. For $1\leq j\leq k$, let $U_j$ be a passive unitary that acts on the annihilation operators $b_{\ell j}$ of the systems $B_{\ell j}$ so that $U_j^\dag b_{1j} U_j=\frac{b_{1j}+\ldots + b_{mj}}{\sqrt{m}}$. Set
\begin{equation}
\omega^{(m)}_{A_{1}B_{11} \ldots B_{mk}} \coloneqq (U_1\otimes \ldots \otimes U_k) \left( \bigotimes_{\ell=1}^m \widetilde{\rho}_{A_\ell B_{\ell 1}\ldots B_{\ell k}}\right) (U_1\otimes \ldots \otimes U_k)^\dag\, .
\label{omega extended}
\end{equation}
By the quantum central limit theorem~\cite{Cushen1971, Petz1992}, the reduced state $\omega^{(m)}_{A_1B_{11}\ldots B_{1k}}$ satisfies 
$\lim_{m\to\infty} \left\|\omega^{(m)}_{A_1B_{11}\ldots B_{1k}} - \widetilde{\rho}^{\,\G}_{AB_1\ldots B_k}\right\|_1 = 0$,
where $\widetilde{\rho}^{\,\G}_{AB_1\ldots B_k}$ is the Gaussian state with the same first and second moments as $\widetilde{\rho}_{AB_1\ldots B_k}$, and $A_1\equiv A$, $B_{1j}\equiv B_j$~\cite{Note1}.

We now show that $\widetilde{\rho}^{\,\G}_{AB_1\ldots B_k}$ is indeed a Gaussian $k$-extension of $\rho_{AB}^\G$. First, it is symmetric under the exchange of any two $B$ systems, say $B_1\leftrightarrow B_2$. In fact, (i) the state in~\eqref{omega extended} is invariant under the exchange $(B_{11} , \ldots , B_{m1})\leftrightarrow (B_{12}, \ldots , B_{m2})$; (ii) consequently, the reduced state $\omega^{(m)}_{A_1B_{11}\ldots B_{1k}}$ is invariant under the exchange $B_{11}\leftrightarrow B_{12}$; (iii) symmetry is preserved under limits. Finally, to show that $\widetilde{\rho}^{\,\G}_{AB_1} = \rho^\G_{AB}$ under the identification $B_1\equiv B$, we observe that the QCM of $\widetilde{\rho}^{\,\G}_{AB_1\ldots B_k}$, which is the same as that of $\widetilde{\rho}_{AB_1\ldots B_k}$, is as in~\eqref{V AB1...Bk}. Since its upper-left $2\times 2$ corner corresponds to the QCM of $\rho^\G_{AB}$, we conclude that $\widetilde{\rho}^{\,\G}_{AB_1}$ and $\rho^\G_{AB}$ have the same first and second moments; being Gaussian, they must coincide.
\end{proof}

By virtue of Theorem~\ref{GClosed}, we can confine the search of $k$-extensions of Gaussian states to the same Gaussian realm. The next result shows that this reduces to an efficiently solvable SDP feasibility problem, with the size of the SDP scaling linearly in the number of modes of the $B$ system. In the case of $B$ being composed of one mode only, we find an analytic solution in the form of a simple necessary and sufficient condition for $k$-extendibility.

\begin{theorem}\label{G k ext thm}
	Let $\rho_{AB}$ be a $k$-extendible (not necessarily Gaussian) state of $n_A+n_B$ modes with QCM $V_{AB}$. Then there exists a $2n_B\times 2n_B$ quantum covariance matrix $\Delta_B\geq\I\Omega_B$ such that 
	\begin{equation}\label{k ext necessary 2}
		V_{AB}\geq \I\Omega_A\oplus \left(\left(1-\frac1k\right)\Delta_B+\frac1k\I\Omega_B\right).
	\end{equation}
	Moreover, the above condition is necessary and sufficient for $k$-extendibility when $\rho_{AB}=\rho_{AB}^\G$ is Gaussian. If in addition $n_B=1$, then $\rho_{AB}^\G$ is $k$-extendible if and only if
	\begin{equation}\label{simplified k ext necessary 2}
		V_{AB}\geq\I\Omega_A\oplus\left(-\left(1-\frac2k\right)\I\Omega_B\right) .
	\end{equation}
\end{theorem}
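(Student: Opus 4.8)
The plan is to reduce the bona fide condition $\widetilde{V}\geq\I\Omega$ for a covariance matrix of the permutation-invariant form~\eqref{V AB1...Bk} (with $Y=Y^\t$) to the single semidefinite constraint~\eqref{k ext necessary 2}; this one reduction yields all the assertions. Indeed, if $\rho_{AB}$ is $k$-extendible then, by the structure~\eqref{V AB1...Bk} of $k$-extensions established above, its extension --- being a quantum state --- has a covariance matrix of that form satisfying $\widetilde{V}\geq\I\Omega$, which forces~\eqref{k ext necessary 2}. Conversely, if $\rho_{AB}=\rho_{AB}^\G$ is Gaussian and~\eqref{k ext necessary 2} holds for some $\Delta_B=\Delta_B^\t\geq\I\Omega_B$, I would set $Y:=V_B-\Delta_B$ (again symmetric), verify that the matrix~\eqref{V AB1...Bk} built from this $Y$ obeys $\widetilde{V}\geq\I\Omega$, and observe that the zero-mean Gaussian state $\widetilde{\rho}^\G$ with covariance matrix~\eqref{V AB1...Bk} is permutation-symmetric on the $B_i$'s, since~\eqref{V AB1...Bk} is, and restricts on $AB_1$ to the zero-mean Gaussian state with covariance matrix $V_{AB}$, i.e.\ $\rho_{AB}^\G$; thus $\widetilde{\rho}^\G$ is a Gaussian $k$-extension of $\rho_{AB}^\G$, which is therefore $k$-extendible. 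So everything comes down to the equivalence ``$\widetilde{V}\geq\I\Omega$'' $\Leftrightarrow$ ``$\exists\,\Delta_B=\Delta_B^\t\geq\I\Omega_B$ satisfying~\eqref{k ext necessary 2}.''

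To prove this equivalence I would diagonalise the cyclic block structure in~\eqref{V AB1...Bk} over the $k$ copies. Let $O$ be a real orthogonal $k\times k$ matrix whose first column is $\tfrac{1}{\sqrt{k}}(1,\ldots,1)^\t$; conjugating $\widetilde{V}$ by $\mathbbm{1}_{2n_A}\oplus(O\otimes\mathbbm{1}_{2n_B})$ leaves $\Omega$ invariant and decouples the $B$ sector into one ``symmetric'' $n_B$-mode block with reduced covariance matrix $V_B+(k-1)Y$ and coupling $\sqrt{k}\,X$ to $A$, together with $k-1$ mutually and $A$-decoupled ``antisymmetric'' blocks, each with covariance matrix $V_B-Y$. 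Hence $\widetilde{V}\geq\I\Omega$ is equivalent to $V_B-Y\geq\I\Omega_B$ together with
\begin{equation}\label{plansym}
\begin{pmatrix} V_A & \sqrt{k}\,X \\ \sqrt{k}\,X^\t & V_B+(k-1)Y \end{pmatrix}\geq\I\Omega_A\oplus\I\Omega_B .
\end{equation}
A congruence by $\mathbbm{1}_{2n_A}\oplus\tfrac{1}{\sqrt{k}}\mathbbm{1}_{2n_B}$ (which rescales $\I\Omega_B$ to $\tfrac1k\I\Omega_B$) followed by a rearrangement then recasts~\eqref{plansym} as $V_{AB}\geq\I\Omega_A\oplus\bigl((1-\tfrac1k)\Delta_B+\tfrac1k\I\Omega_B\bigr)$ with $\Delta_B:=V_B-Y$, while $V_B-Y\geq\I\Omega_B$ reads $\Delta_B\geq\I\Omega_B$; since $Y\mapsto V_B-Y$ is a bijection of the real symmetric matrices, this is exactly~\eqref{k ext necessary 2}. (Here the symmetry of $Y$ is what guarantees that $\Delta_B$ is a legitimate covariance matrix.)

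For $n_B=1$ the implication \eqref{k ext necessary 2}$\Rightarrow$\eqref{simplified k ext necessary 2} is immediate: a real symmetric one-mode covariance matrix $\Delta_B\geq\I\Omega_B$ is fixed by complex conjugation, which maps $\I\Omega_B\mapsto-\I\Omega_B$, so also $\Delta_B\geq-\I\Omega_B$, whence $(1-\tfrac1k)\Delta_B+\tfrac1k\I\Omega_B\geq-(1-\tfrac2k)\I\Omega_B$. For the converse I would first assume $V_{AB}>\I\Omega_{AB}$ (hence $V_A-\I\Omega_A>0$), recovering the general case by applying the result to $V_{AB}+\varepsilon\mathbbm{1}$ and letting $\varepsilon\to0^+$ (the $\Delta_B$'s produced remain in a compact set and all conditions are closed). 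Under $V_A-\I\Omega_A>0$, taking Schur complements with respect to $A$ converts~\eqref{simplified k ext necessary 2}, the bona fide condition $V_{AB}\geq\I\Omega_{AB}$, and the desired~\eqref{k ext necessary 2} into, respectively, $C\geq-\I\Omega_B$, $C\geq\I\Omega_B$, and the existence of a real symmetric $\Delta_B$ with $\I\Omega_B\leq\Delta_B\leq C$, where $C:=\tfrac{k}{k-1}\bigl(V_B-\tfrac1k\I\Omega_B-X^\t(V_A-\I\Omega_A)^{-1}X\bigr)$ is $2\times2$ Hermitian. Writing $C=R+d\,\I\Omega_B$ with $R$ real symmetric and $d\in\mathds{R}$ (the real and imaginary parts of $C$), and using the one-mode fact that a real symmetric $2\times2$ matrix $M$ satisfies $M\geq\lambda\,\I\Omega_B$ if and only if $M\geq0$ and $\det M\geq\lambda^2$, the constraints $C\geq\pm\I\Omega_B$ become $R\geq0$ and $\det R\geq(1+|d|)^2$; one then checks that $\Delta_B:=\bigl(1-|d|/\sqrt{\det R}\,\bigr)R$ satisfies $\I\Omega_B\leq\Delta_B\leq C$, completing the argument.

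The step I expect to be the main obstacle is this last converse, \eqref{simplified k ext necessary 2}$\Rightarrow$\eqref{k ext necessary 2} in the one-mode case: one must extract a genuinely \emph{real} symmetric covariance matrix $\Delta_B$ from the generically complex Hermitian matrix $C$, and it is not obvious a priori that the slack afforded by $A$ is exactly enough --- the computation above shows it is, essentially because a single mode has only one symplectic eigenvalue to constrain. The remaining ingredients (the orthogonal-plus-congruence reduction, the symmetry of $Y$, and the compactness argument for singular $V_A-\I\Omega_A$) are routine.
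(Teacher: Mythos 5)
Your proposal is correct, and while it shares the paper's overall skeleton (reduce the bona fide condition for the structured extension QCM of the form \eqref{V AB1...Bk} to \eqref{k ext necessary 2} via $\Delta_B=V_B-Y$, then for $n_B=1$ reduce to the pair of constraints $C\geq\pm\I\Omega_B$ and exhibit a real $\Delta_B$ squeezed in between), it differs in two technically substantive ways. First, you establish the main equivalence by conjugating $\widetilde{V}$ with the orthogonal symplectic map $\mathds{1}\oplus(O\otimes\mathds{1})$ and then applying the scaling congruence $\mathds{1}\oplus\tfrac{1}{\sqrt{k}}\mathds{1}$, which splits the problem into the $A$-coupled symmetric block and the $k-1$ decoupled antisymmetric blocks \emph{before} any Schur complement is taken; the paper instead first takes a (regularized) Schur complement with respect to $V_A-\I\Omega_A$ and only then decomposes along $\ketbra{+}$ versus $\mathds{1}-\ketbra{+}$, so it must develop generalized Schur complements to handle singular $V_A-\I\Omega_A$, an issue your route sidesteps entirely at this stage (your sufficiency step, building the Gaussian extension directly from the bona fide QCM rather than invoking Theorem~\ref{GClosed}, matches the paper's ``complete version'' in the Supplemental Material). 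Second, for $n_B=1$ the paper outsources the real-interpolation step to Lemma~7 of~\cite{revisited} (a real matrix exists between two $2\times2$ Hermitian bounds iff both the bound and its complex conjugate version hold), whereas you prove exactly the instance needed via the explicit ansatz $\Delta_B=\left(1-|d|/\sqrt{\det R}\right)R$ with $C=R+d\,\I\Omega_B$; your verification is sound, since $\Delta_B\geq\I\Omega_B$ amounts to $\sqrt{\det R}\geq 1+|d|$ and $\Delta_B\leq C$ holds with equality at the level of determinants. You also handle the degenerate case by perturbing $V_{AB}\mapsto V_{AB}+\varepsilon\mathds{1}$ and extracting a limit from the compact set $\{\Delta_B:\ 0\leq\Delta_B\leq\tfrac{k}{k-1}(V_B+\mathds{1})\}$, rather than via regularized Schur complements; this is the same compactness device the paper uses for its $k\to\infty$ limit, so it is fully in its spirit. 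The net effect is a more self-contained and arguably cleaner argument; the paper's version buys brevity by citing~\cite{revisited} and by keeping a single Schur-complement formalism throughout.
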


In the proof of Theorem~\ref{G k ext thm}, we employ the following characterization of positive semidefiniteness of Hermitian block matrices~\cite[Theorem~1.12]{ZHANG1}:
\begin{equation}\label{eq-Schur_complement}
	\begin{aligned}
	M\!=\!\begin{pmatrix} P & Z \\ Z^\dag & Q\end{pmatrix} \geq 0\ \Leftrightarrow\ P \geq 0 ,\ M/P\coloneqq Q - Z^\dag P^{-1} Z \geq 0\, ,
	\end{aligned}
\end{equation}
where the matrix $M/P$ is called the \emph{Schur complement} of $M$ with respect to $P$. For details concerning the degenerate case of non-invertible $P$, see~\cite{Note1}. Using~\eqref{eq-Schur_complement}, for any QCM $V_{AB}$ as in~\eqref{V AB}, the inequality in~\eqref{k ext necessary 2} and the condition $\Delta_B\geq\I\Omega_B$ can be written together as
\begin{equation}\label{k ext necessary 1}
	\I\Omega_B\leq \Delta_B \leq \frac{k}{k-1} \left( V_B - X^{\t} (V_A-\I\Omega_A)^{-1} X \right) - \frac{1}{k-1}\, \I\Omega_B\, .
\end{equation}
Analogously,~\eqref{simplified k ext necessary 2} is equivalent to
\begin{equation}\label{simplified k ext necessary 1}
	V_B - X^{\t} (V_A - \I\Omega_A)^{-1} X  \geq - \left( 1- \frac2k\right) \I\Omega_B\, .
\end{equation}

\begin{proof}[Proof of Theorem~\ref{G k ext thm}]
We first establish necessity of~\eqref{k ext necessary 2} for $k$-extendibility of an arbitrary state $\rho_{AB}$.
If $\rho_{AB}$ is $k$-extendible, then there exists a matrix $\widetilde{V}_{AB_1\ldots B_k}$ as in~\eqref{V AB1...Bk} that obeys the bona fide condition $\widetilde{V}_{AB_1\ldots B_k} \geq \I \big(\Omega_A \oplus \Omega_{B_1\ldots B_k}\big)$. Using~\eqref{eq-Schur_complement}, and noting that $V_A\geq \I \Omega_A$ holds  because $\rho_A$ is a valid state, we arrive at the inequality $\big(\widetilde{V}_{AB_1\ldots B_k} - \I\Omega_A\big)\big/ \big(\widetilde{V}_A - \I\Omega_A\big) \geq \I\Omega_{B_1\ldots B_k}$. Using~\eqref{V AB1...Bk}, and letting $\ket{+}\coloneqq \frac{1}{\sqrt{k}}\sum_{j=1}^k \ket{j}\in\mathds{R}^k$, upon elementary manipulations this can be rephrased as
\begin{equation*}
\begin{aligned}
&(\mathds{1}_k - \ketbra{+}) \otimes (V_B - Y-\I \Omega_B) \\
&+ \ketbra{+} \otimes \left( V_B + (k-1) Y - k X^{\t} \left( V_A - \I \Omega_A\right)^{-1} X - \I \Omega_B \right) \geq 0\, .
\end{aligned}
\end{equation*}
Since the first factors of the above two addends are orthogonal to each other, positive semidefiniteness can be imposed separately on the second factors. Letting $\Delta_B\coloneqq V_B-Y$, we obtain~\eqref{k ext necessary 1}, whose equivalence to~\eqref{k ext necessary 2} follows by applying~\eqref{eq-Schur_complement}. To deduce~\eqref{simplified k ext necessary 2} from~\eqref{k ext necessary 2}, simply substitute the complex conjugate bona fide condition $\Delta_B \geq -\I\Omega_B$ into~\eqref{k ext necessary 2}.

By Theorem~\ref{GClosed}, the condition $\widetilde{V}_{AB_1\ldots B_k} \geq \I \big(\Omega_A \oplus \Omega_{B_1\ldots B_k}\big)$ is also sufficient to ensure $k$-extendibility when $\rho_{AB}=\rho_{AB}^\G$ is Gaussian. By the above reduction, this condition is equivalent to that in~\eqref{k ext necessary 2}.

We now prove that when $n_B=1$,~\eqref{simplified k ext necessary 2} implies the existence of a real $\Delta_B$ such that~\eqref{k ext necessary 1} is satisfied. By~\cite[Lemma~7]{revisited}, we know that~\eqref{k ext necessary 1} is satisfied for some real $\Delta_B$ if and only if
\begin{equation}
	\frac{k}{k-1} \left( V_B - X^{\t} (V_A-\I\Omega_A)^{-1} X \right) - \frac{1}{k-1}\, \I\Omega_B \geq \pm \I \Omega_B\, ,
\end{equation}
meaning that both inequalities are satisfied. Using~\eqref{eq-Schur_complement}, we see that the condition with the $+$ reduces to $V_{AB}\geq \I\Omega_{AB}$, which is guaranteed to hold by hypothesis. That with the $-$ yields instead~\eqref{simplified k ext necessary 1}, which is in turn equivalent to~\eqref{simplified k ext necessary 2}.
\end{proof}

Although some of the above manipulations formally resemble those in~\cite{Bhat16}, the two arguments are conceptually different and lead to different conclusions~\cite{Note1}: in fact, in~\cite{Bhat16}, the question of $k$-extendibility of Gaussian states is explicitly mentioned as an outstanding problem.

Recall that a bipartite state is separable if and only if it is $k$-extendible for all $k$~\cite{FLV88, RW89, DPS02, complete-extendibility} and that any $k$-extendible state is also $(k-1)$-extendible. Thus, taking the limit $k\to\infty$ of condition~\eqref{k ext necessary 2} shows that $\rho_{AB}^\G$ is separable if and only if there exists a $2n_B\times 2n_B$ matrix $\Delta_B\geq\I\Omega_B$ such that $V_{AB}\geq \I\Omega_A\oplus \Delta_B$.
This reproduces the analytic condition for separability of Gaussian states found in~\cite[Theorem~5]{revisited}. In the same limit $k\to \infty$, it is also easy to verify that condition~\eqref{simplified k ext necessary 2} reduces to the PPT criterion~\cite{PeresPPT, HorodeckiPPT, Simon00, Werner01, revisited}.


It turns out that the necessary condition in~\eqref{simplified k ext necessary 2} is no longer sufficient when $n_B>1$. This is demonstrated by the example of the $(2+2)$-mode bound entangled Gaussian state constructed in~\cite{Werner01}, which  obeys~\eqref{simplified k ext necessary 2} for all $k$ (because it is PPT) yet it is not even $2$-extendible~\cite{Note1}.

Theorem~\ref{G k ext thm} also reveals an implication of $2$-extendibility for Gaussian {\em steerability}, i.e., Einstein--Podolsky--Rosen steerability via Gaussian measurements~\cite{wise, steerability, Simon16, Kor, Lami16}. The $k=2$ case of~\eqref{simplified k ext necessary 2} shows that any Gaussian state that is 2-extendible on $B$ is necessarily $B\to A$ Gaussian unsteerable, and hence useless for one-sided-device-independent quantum key distribution. When $n_B=1$, this condition is also sufficient, i.e., $2$-extendibility is equivalent to $B\to A$ Gaussian unsteerability.

\paragraph{Extendibility of Gaussian channels.} We now apply Theorem~\ref{G k ext thm} to study $k$-extendibility of single-sender single-receiver Gaussian quantum channels. A quantum channel $\mathcal{N}_{A\to B}$ is called $k$-extendible~\cite{Pankowski2013, Kaur2018} if there exists another quantum channel $\widetilde{\mathcal{N}}_{A\to B_1\dotsb B_k}$ from the sender $A$ to $k$ receivers $B_1,\dotsc, B_k$ such that the reduced channel from the sender to any one of the receivers is the same as the original channel $\mathcal{N}_{A\to B}$.

A Gaussian channel $\mathcal{N}_{A\to B}$ with $n$ input modes and $m$ output modes maps Gaussian states to Gaussian states and is uniquely characterized by a real $2m\times 2n$ matrix $X$, a real symmetric $2m\times 2m$ matrix $Y$, and a real vector $\delta\in \mathds{R}^{2m}$, such that $Y+\I\Omega\geq \I X\Omega X^T$~\cite{BUCCO}. Its action can be described directly in terms of the mean vector $s$ and QCM $V$ of the input Gaussian state as follows: $s\mapsto Xs+\delta$, $V\mapsto XVX^{\t}+Y$. In what follows, we set $\delta=0$ without loss of generality.	

To any channel $\mathcal{N}_{A\to B}$ we can associate its Choi--Jamio\l kowski state $\rho_{AB}^{\mathcal{N}}(r)\coloneqq\mathcal{N}_{A'\to B}\left(\ketbra{\psi_r}^{\otimes n}_{AA'}\right)$, where for~$r>0$ the two-mode squeezed vacuum is defined as $\ket{\psi_r} \coloneqq \mathrm{sech}(r) \sum_{j=0}^{\infty}\tanh(r)^j\ket{j,j}$~\cite{Holevo-CJ}. It can be seen that $\mathcal{N}_{A\to B}$ is $k$-extendible if and only if $\rho_{AB}^{\mathcal{N}}(r)$ is $k$-extendible on $B$ for some (and hence all) $r>0$~\cite{Note1}. The same conclusion follows from arguments in~\cite{nogo3,Wolf2007,NFC09}. For any Gaussian channel $\mathcal{N}$, the state $\rho_{AB}^{\mathcal{N}}(r)$ is Gaussian. Hence, via Theorem~\ref{G k ext thm}, we deduce that a Gaussian channel is $k$-extendible if and only if there exists a $2m\times 2m$ real matrix $\Delta$ such that
\begin{equation}\label{k ext Choi state}
	\I\Omega\leq \Delta \leq \frac{k}{k-1} \left( Y + \I X \Omega X^{\t} \right) - \frac{1}{k-1} \I\Omega\, .
\end{equation}
When $m=1$, this is equivalent to 
$	Y + \I X\Omega X^{\t} + \left( 1-2/k\right) \I \Omega \geq 0$.
If also $n=1=m$, a simplified equivalent condition that incorporates also the complete positivity requirements is
\begin{equation}\label{eq:single-mode-k-ext-condition}
	\sqrt{\det Y} \geq 1 - \frac1k + \left| \det X - \frac1k\right| .
\end{equation}

By applying~\eqref{eq:single-mode-k-ext-condition}, we find necessary and sufficient conditions for the $k$-extendibility of all possible single-mode Gaussian channels, which play a prominent role in modelling optical quantum communication~\cite{Holevo2007,EW07,BUCCO}. By the results of~\cite{Holevo2007}, the following characterization of $k$-extendibility for three fundamental single-mode Gaussian channels suffices to solve the problem for \emph{all} single-mode Gaussian channels~\cite{Note1}:

(i) The thermal channel of transmissivity $\eta\in(0,1)$ and environment thermal photon number $N_B \geq 0$ is defined by $X=\sqrt{\eta}\mathds{1}$ and $Y=(1-\eta)(2N_B +1)\mathds{1}$. It is $k$-extendible if and only if $\eta \leq \frac{N_B+1/k}{N_B+1}$. For the case $N_B=0$, corresponding to a pure-loss channel, this reduces to $\eta \leq 1/k$.

(ii) The amplifier channel of gain $G>1$ and environment thermal photon number $N_B \geq 0$ is defined by $X=\sqrt{G} \mathds{1}$ and $Y=(G-1)(2N_B+1) \mathds{1}$. This channel is $k$-extendible if and only if $N_B>0$ and $G\geq \frac{N_B+1-1/k}{N_B}$.

(iii) The additive noise channel with noise parameter $\xi>0$ is defined by $X=\mathds{1}$ and $Y = \xi \mathds{1}$. This channel is $k$-extendible if and only if $\xi \geq 2\left(1-1/k\right)$.

As expected, the above conditions reduce to their entanglement-breaking counterparts from~\cite{Holevo2008} for $k\to \infty$.

\paragraph{Distance between $k$-extendible and separable states.} A problem of central interest in quantum information theory is determining how close $k$-extendible states are to the set of separable states. In~\cite[Theorem~II.7']{1-1/2-de-Finetti}, it was found that a finite-dimensional $k$-extendible state is $4d^2/k$-close to the set of separable states in trace norm, where $d$ is the dimension of the extended system. Moreover, it was also shown~\cite[Corollary~III.9]{1-1/2-de-Finetti} that the error term in the approximation necessarily depends on $d$ at least linearly. One can instead obtain a $\ln d$ dependence by resorting to different norms~\cite{faithful}.

Can similar estimates be provided in the Gaussian case? Results in this setting have been obtained in~\cite{Koenig2009} for fully symmetric systems of the form $B_1\ldots B_k$. Here we extend these de Finetti theorems to the case where the symmetry is relative to a fixed reference system $A$. We are interested in the distance of a given Gaussian state $\rho_{AB}^\G$ to the set $\text{SEP}(A\!\!:\!\!B)$ of bipartite separable states on systems $A$ and $B$, as measured by either (i) the trace norm, yielding the quantity $\left\| \rho_{AB}^\G-\text{SEP}(A\!:\!B)\right\|_1\coloneqq \inf_{\sigma_{AB}\in\text{SEP}(A:B)} \left\| \rho_{AB}-\sigma_{AB}\right\|_1$, or (ii) the quantum Petz--R\'enyi relative entropy $D_\alpha(\rho\Vert \sigma)\coloneqq \frac{1}{\alpha-1}\ln \Tr[\rho^\alpha \sigma^{1-\alpha}]$ for $\alpha>0$~\cite{PetzRenyi}, which leads to the measure $E_{R,\alpha}(\rho^\G_{AB})\coloneqq \inf_{\sigma_{AB}\in\text{SEP}(A:B)} D_\alpha(\rho_{AB}\Vert\sigma_{AB})$. For $\alpha=1$ the Petz--R\'enyi relative entropy reduces to the Umegaki relative entropy~\cite{U62}, and we obtain the standard relative entropy of entanglement~\cite{VP98}.
We find the following:
\begin{theorem}\label{thm-dist_sep_k_ext}
	Let $\rho_{AB}^\G$ be a $k$-extendible Gaussian state of $n\coloneqq n_A+n_B$ modes. Then,
	\begin{align} \nonumber \\[-5ex]
		\left\| \rho_{AB}^\G-\operatorname{SEP}(A\!:\!B)\right\|_1 &\leq \frac{2n}{k}\, , \label{distance k ext} \\
		E_{R,\alpha}(\rho_{AB}^\G) \leq n\, \ln\left(1+\frac{\eta_{k,\alpha}}{k-1}\right) &\leq \frac{n\, \eta_{k,\alpha}}{k-1}\, , \label{E R k ext}
	\end{align}
	where $\eta_{k,\alpha}=1$ if $\alpha\leq k+1$, and $\eta_{k,\alpha}=2$ otherwise.
\end{theorem}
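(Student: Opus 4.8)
The plan is to construct, for any $k$-extendible Gaussian state $\rho_{AB}^\G$, a good separable approximant and bound its distance to the original state. The key structural input is Theorem~\ref{G k ext thm}: $k$-extendibility of $\rho_{AB}^\G$ guarantees the existence of a QCM $\Delta_B\geq\I\Omega_B$ with $V_{AB}\geq\I\Omega_A\oplus\bigl((1-\tfrac1k)\Delta_B+\tfrac1k\I\Omega_B\bigr)$. The natural candidate separable state is the Gaussian state $\sigma_{AB}^\G$ whose QCM is $V_A\oplus\Delta_B$ (a block-diagonal QCM always corresponds to a separable — indeed product — Gaussian state, since $V_A\geq\I\Omega_A$ and $\Delta_B\geq\I\Omega_B$ are both bona fide). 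Equivalently, by the characterization recalled after Theorem~\ref{G k ext thm}, the $k\to\infty$ limit of the $k$-extendibility condition; for finite $k$ we instead have $V_{AB}\geq\I\Omega_A\oplus\Delta_B - \tfrac1k(\Delta_B-\I\Omega_B)$, so $V_{AB}$ and $V_A\oplus\Delta_B$ differ by a controlled amount. First I would make this precise: writing $W_{AB}$ for the QCM $V_A\oplus\Delta_B$ of the separable state, we have $W_{AB}-\tfrac1k E \leq V_{AB}$ in the semidefinite order for a suitable positive $E$ built from $\Delta_B-\I\Omega_B\geq 0$, and also $V_{AB}\leq W_{AB}$ need not hold, so some care is needed — one likely replaces $\sigma^\G$ by a state with QCM $V_A\oplus\bigl((1-\tfrac1k)\Delta_B+\tfrac1k\I\Omega_B\bigr)$ plus a small isotropic correction, or mixes in thermal noise, to ensure $V_{AB}$ dominates it.

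For the trace-norm bound~\eqref{distance k ext}, the plan is to reduce to the single-mode case by a symplectic normal-form argument and then control the fidelity (or trace distance) between two Gaussian states whose QCMs are close. The cleanest route: use the Gaussian-to-Gaussian structure to bound $\tfrac12\|\rho^\G-\sigma^\G\|_1$ via a quantity like $\sqrt{1-F(\rho^\G,\sigma^\G)^2}$, and bound the fidelity from below using a closed-form or perturbative estimate in terms of $\|V_{AB}^{-1/2}(V_{AB}-W_{AB})V_{AB}^{-1/2}\|$ or the relative spectral gap. Since the mismatch is $O(1/k)$ per mode and distances add subadditively over a tensor-product / direct-sum decomposition into $n$ single-mode pieces (after bringing the pair of QCMs to simultaneous normal form — Williamson-type), one obtains the linear-in-$n$, $1/k$ scaling. **The hard part will be** the per-mode fidelity estimate: one must show that a single-mode QCM perturbation of size $1/k$ costs at most $O(1/k)$ in trace distance, uniformly in the energy of the state — this uniformity is exactly the ``universal scaling independent of mean energy'' claimed, and it is delicate because naive fidelity formulas for Gaussian states degrade as the QCM eigenvalues grow. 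Handling this likely requires exploiting that the perturbation direction is aligned with the symplectic form (it involves $\Delta_B-\I\Omega_B$, whose ``dangerous'' large-symplectic-eigenvalue part is automatically suppressed), so the effective perturbation seen by the fidelity is bounded.

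For the Rényi relative-entropy bound~\eqref{E R k ext}, I would again take $\sigma_{AB}^\G$ with QCM $V_A\oplus\Delta_B$ (suitably corrected) as the separable competitor, so that $E_{R,\alpha}(\rho^\G_{AB})\leq D_\alpha(\rho^\G_{AB}\,\Vert\,\sigma^\G_{AB})$, and then use the known closed-form expression for the Petz–Rényi relative entropy between two Gaussian states in terms of their QCMs. The strategy is to diagonalize: bring $V_{AB}$ and $W_{AB}$ to a form where the relevant matrix $W_{AB}^{-1}V_{AB}$ (or the symplectic analogue entering $D_\alpha$) is controlled, use the sandwich $\bigl(1-\tfrac{\eta_{k,\alpha}}{k}\bigr)W_{AB}\lesssim V_{AB}\lesssim W_{AB}$ (with the two cases $\alpha\le k+1$ vs.\ $\alpha>k+1$ reflecting which of the two one-sided bounds, using $\Delta_B\geq\I\Omega_B$ vs.\ $\Delta_B\geq-\I\Omega_B$, is the binding one in the Gaussian $D_\alpha$ formula), and conclude that each of the $n$ symplectic eigenvalue ratios contributes at most $\ln\bigl(1+\tfrac{\eta_{k,\alpha}}{k-1}\bigr)$, with the $\ln(1+x)\leq x$ step giving the second inequality. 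The main subtlety here, beyond bookkeeping, is justifying the additivity/monotonicity that lets $D_\alpha$ over $n$ modes be bounded by $n$ times a single-mode quantity — which should follow from the multiplicativity of $\Tr[\rho^\alpha\sigma^{1-\alpha}]$ over tensor products once the pair $(V_{AB},W_{AB})$ is block-decomposed — and checking that the $\alpha$-dependence of the Gaussian formula does not spoil the clean threshold at $\alpha=k+1$.
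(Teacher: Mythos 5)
There is a genuine gap, and it lies at the very first step: your choice of separable approximant. You propose the product Gaussian state with QCM $V_A\oplus\Delta_B$ (or small corrections thereof) and claim that the mismatch with $V_{AB}$ is $O(1/k)$ per mode. But the $k$-extendibility condition \eqref{k ext necessary 2} is a \emph{one-sided} matrix inequality against the complex Hermitian matrix $\I\Omega_A\oplus\bigl((1-\tfrac1k)\Delta_B+\tfrac1k\I\Omega_B\bigr)$; it says nothing about $V_{AB}$ being \emph{close} to $V_A\oplus\Delta_B$, and in particular it places no constraint forcing the correlation block $X$ to be small. A classically correlated two-mode Gaussian state (e.g.\ two thermal modes with strongly correlated quadratures) is separable, hence $k$-extendible for every $k$, yet it is far in trace norm from \emph{every} product state; so no bound of the form $2n/k$ can come out of a product-state ansatz, and the claimed sandwich $\bigl(1-\tfrac{\eta_{k,\alpha}}{k}\bigr)W_{AB}\lesssim V_{AB}\lesssim W_{AB}$ with $W_{AB}=V_A\oplus\Delta_B$ is simply false in general (neither side holds as a real-matrix inequality; note also that $V_A\oplus\bigl((1-\tfrac1k)\Delta_B+\tfrac1k\I\Omega_B\bigr)$ is not even a real matrix, hence not a QCM). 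A further technical obstruction to your per-mode reduction is that two generic QCMs cannot be brought to Williamson form by the \emph{same} symplectic, so the tensor-factorization you rely on is unavailable for your pair $(V_{AB},V_A\oplus\Delta_B)$.

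The missing idea is the one the paper uses (Proposition~\ref{separable QCM prop}): add $\tfrac1k$ of the conjugate bona fide condition $V_{AB}\geq-\I\Omega_{AB}$ to \eqref{k ext necessary 2} to conclude that $\tfrac{k+1}{k-1}\,V_{AB}\geq\I\Omega_A\oplus\Delta_B$, i.e.\ the Gaussian state $\sigma^\G_{AB}$ with QCM $\tfrac{k+1}{k-1}V_{AB}$ is separable. This ansatz \emph{keeps} the correlations, and because its QCM is proportional to $V_{AB}$ the two states are simultaneously diagonalized by one symplectic and hence commute. The trace-norm bound then follows from a telescoping inequality plus the single-mode formula $\sup_{\nu\geq1}\|\rho^\G(\nu)-\rho^\G(\lambda\nu)\|_1=2\tfrac{\lambda-1}{\lambda+1}=\tfrac2k$ with $\lambda=\tfrac{k+1}{k-1}$, and the R\'enyi bound follows from factorization of $D_\alpha$ over modes, monotonicity of $D_\alpha$ in $\alpha$, and explicit single-mode computations giving $\sup_\nu D_{k+1}=\ln\tfrac{k}{k-1}$ and $\sup_\nu D_\infty=\ln\tfrac{k+1}{k-1}$, which is exactly where the threshold $\alpha=k+1$ and the values $\eta_{k,\alpha}\in\{1,2\}$ in \eqref{E R k ext} come from --- none of which is recoverable from a product-state competitor. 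Your instinct that commutation/simultaneous normal form and energy-uniformity are the crux is right, but it is the proportional-QCM ansatz, not the block-diagonal one, that delivers them.
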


The proof is in~\cite{Note1}. Remarkably, the upper bounds in~\eqref{distance k ext}--\eqref{E R k ext} hold universally for all Gaussian states, independently, e.g., of their mean photon number. This is in analogy with the main results of~\cite{Koenig2009}, and constitutes a \emph{quantitative} improvement over the finite-dimensional case, where---as we mentioned before---the bound has to depend on the underlying dimension. Furthermore, for two-mode states, the bounds in~\eqref{distance k ext}--\eqref{E R k ext} can be shown to be tight up to a constant for all $k$ and all $\alpha\geq 1$. Namely, for all $k\geq 2$ there exists a $k$-extendible two-mode Gaussian state $\rho_{AB}^\G$ such that $\left\| \rho_{AB}^\G-\text{SEP}(A\!:\!B)\right\|_1\geq \frac{1}{2k-1}$ and $E_{R,1}(\rho_{AB}^\G)\geq E_D(\rho_{AB}^\G)\geq \ln \frac{k}{k-1} - o(1)$ as $r\to\infty$, where $E_D$ denotes the distillable entanglement~\cite{Note1}.

\paragraph{Entanglement of formation of Gaussian $k$-extendible states.} We now show that one can also obtain an upper bound on the entanglement of formation of Gaussian $k$-extendible states. This is a \emph{qualitative} improvement over the finite-dimensional case, as a result of this kind has no analogue in that setting.
We employ the recently developed theory of R\'enyi-2 Gaussian correlation quantifiers~\cite{AdessoSerafini, steerability, Lami16, LL-log-det}, and especially the monogamy of the Gaussian R\'enyi-2 version of the entanglement of formation~\cite{Lami16}, which stems in turn from the equality between this measure and the Gaussian R\'enyi-2 squashed entanglement~\cite{LL-log-det}.

For a bipartite state $\rho_{AB}$ and for some $\alpha\geq 1$, the \textit{R\'enyi-$\alpha$ entanglement of formation} $E_{F,\alpha}(\rho_{AB})$ is defined as the infimum of $\sum_i p_i \, S_{\alpha}\big(\psi_{A}^{(i)}\big)$ over all pure-state decompositions $\sum_i p_i \psi_{AB}^{(i)}=\rho_{AB}$ of $\rho_{AB}$~\cite{Horodecki-review}. Here, $S_\alpha(\sigma)\coloneqq \frac{1}{1-\alpha}\ln \Tr[\sigma^\alpha]$ is the R\'enyi-$\alpha$ entropy.

For a Gaussian state $\rho_{AB}^\G$ with QCM $V_{AB}$, we can derive an upper bound on $E_{F,\alpha}(\rho_{AB}^\G)$ by restricting the decompositions to include pure Gaussian states only. This leads to the \textit{Gaussian R\'enyi-${\alpha}$ entanglement of formation}, given by~\cite{Wolf03, LL-log-det}
\begin{equation}\label{GEoF}
	E^{\G}_{F,\alpha} \left( \rho_{AB}^\G\right) = \inf \big\{ S_\alpha (\gamma_A):\ \text{$\gamma_{AB}$ pure QCM and $\gamma_{AB}\leq V_{AB}$} \big\}\, ,
\end{equation}
where we denote by $S_{\alpha}(W)$ the R\'enyi-$\alpha$ entropy of a Gaussian state with QCM $W$, and ``pure'' QCMs are those that correspond to pure Gaussian states. 
While the typical choice $\alpha=1$ yields the standard entanglement of formation, R\'enyi-$2$ quantifiers arise naturally in the Gaussian setting, as they reproduce Shannon entropies of measurement outcomes~\cite{AdessoSerafini,LL-log-det}. For $\alpha=2$, Eq.~\eqref{GEoF} becomes
\begin{equation}
E^{\G}_{F,2} \left( \rho_{AB}\right) = \min \left\{ M(\gamma_A): \text{$\gamma_{AB}$ pure QCM and $\gamma_{AB}\leq V_{AB}$} \right\},
\end{equation}
where for a positive definite matrix $V$ we set $M(V) \coloneqq S_2(V) = \frac12 \ln \det V$. We then find the following:

\begin{theorem}\label{thm-Gaussian_EoF}
	The R\'enyi-2 Gaussian entanglement of formation of a $k$-extendible Gaussian state $\rho_{AB}^\G$ of $n_A+n_B$ modes with QCM $V_{AB}$ is bounded from above as
$		E^{\G}_{F,2}\left(\rho_{AB}^\G \right) \leq \frac{M(V_A)}{k}\, .$
	Consequently, the standard entanglement of formation of $\rho_{AB}^\G$  satisfies
$		E_{F,1}\left(\rho_{AB}^\G\right) \leq E_{F,1}^{\G}\left(\rho_{AB}^\G\right) \leq n_A \, \varphi\left(\frac{M(V_A)}{n_A k}\right)$,
	where $\varphi(x)\coloneqq \frac{\e^x+1}{2}\ln \left(\frac{\e^x+1}{2}\right)-\frac{\e^x-1}{2}\ln \left(\frac{\e^x-1}{2}\right)$.
\end{theorem}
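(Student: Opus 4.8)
The plan is to reduce the statement to the monogamy of the Gaussian Rényi-2 entanglement of formation and then use the dimension-reduction trick implicit in \eqref{GEoF}. First I would invoke the key structural input from \cite{LL-log-det,Lami16}: for Gaussian states, $E^{\G}_{F,2}$ coincides with the Gaussian Rényi-2 squashed entanglement, and consequently $E^{\G}_{F,2}$ is \emph{monogamous}, in the sense that for any state $\widetilde{\rho}^{\,\G}_{AB_1\ldots B_k}$ one has
\begin{equation}\label{eq-monog}
\sum_{j=1}^k E^{\G}_{F,2}\!\left(\widetilde{\rho}^{\,\G}_{AB_j}\right) \leq E^{\G}_{F,2}\!\left(\widetilde{\rho}^{\,\G}_{A(B_1\ldots B_k)}\right)\leq M(V_A),
\end{equation}
where the last inequality follows since the trivial pure decomposition bound gives $E^{\G}_{F,2}\leq S_2(\widetilde{\rho}^{\,\G}_A)=M(V_A)$ because the reduced state on $A$ is unchanged under the extension. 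By Theorem~\ref{GClosed}, $k$-extendibility of $\rho_{AB}^\G$ gives a Gaussian $k$-extension $\widetilde{\rho}^{\,\G}_{AB_1\ldots B_k}$ with $\widetilde{\rho}^{\,\G}_{AB_j}=\rho_{AB}^\G$ for every $j$; plugging this into \eqref{eq-monog} yields $k\,E^{\G}_{F,2}(\rho_{AB}^\G)\leq M(V_A)$, which is exactly the first claimed bound.

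For the second bound, I would first pass from $E_{F,1}$ to $E^{\G}_{F,1}$, which is immediate from the definition \eqref{GEoF} since restricting to Gaussian pure-state decompositions can only raise the infimum. Then I want to relate $E^{\G}_{F,1}$ to $E^{\G}_{F,2}$ on the level of the optimizing pure QCM. Fix a pure QCM $\gamma_{AB}\leq V_{AB}$ attaining (or nearly attaining) the infimum in $E^{\G}_{F,2}$, so that $\tfrac12\ln\det\gamma_A = M(\gamma_A)\leq M(V_A)/k$. The same $\gamma_{AB}$ is feasible for $E^{\G}_{F,1}$, so $E^{\G}_{F,1}(\rho_{AB}^\G)\leq S_1(\gamma_A)$. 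The point is now purely single-reduced-state: I would bound the von Neumann entropy $S_1$ of a Gaussian state on $n_A$ modes in terms of its Rényi-2 entropy. Writing the symplectic eigenvalues of $\gamma_A$ as $\nu_1,\ldots,\nu_{n_A}\geq 1$, one has $M(\gamma_A)=\sum_i \ln\nu_i$ and $S_1(\gamma_A)=\sum_i g\!\left(\tfrac{\nu_i-1}{2}\right)$ with $g(x)=(x{+}1)\ln(x{+}1)-x\ln x$; note $g\!\left(\tfrac{\nu-1}{2}\right)=\varphi(\ln\nu)$ with $\varphi$ as defined in the theorem. Concavity of $\varphi$ (which I would check by differentiating, the mildly unpleasant step) together with Jensen then gives $\sum_i \varphi(\ln\nu_i)\leq n_A\,\varphi\!\left(\tfrac1{n_A}\sum_i\ln\nu_i\right)=n_A\,\varphi\!\left(\tfrac{M(\gamma_A)}{n_A}\right)$, and since $\varphi$ is increasing this is $\leq n_A\,\varphi\!\left(\tfrac{M(V_A)}{n_A k}\right)$, completing the proof.

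The main obstacle is the first step: establishing (or correctly citing) the monogamy inequality \eqref{eq-monog} for $E^{\G}_{F,2}$ in the precise multipartite form needed, including the subtlety that the reduced state on $A$ — hence the upper bound $M(V_A)$ — is genuinely invariant along the extension and that $E^{\G}_{F,2}$ is being compared across the bipartitions $A:B_j$ versus $A:(B_1\cdots B_k)$. This rests on the identification of $E^{\G}_{F,2}$ with the Gaussian Rényi-2 squashed entanglement from \cite{LL-log-det} and its monogamy from \cite{Lami16}, so the burden is to quote these results in exactly the form used; everything after that is elementary convex analysis of the one-mode entropy functions, with the only genuine computation being the verification that $\varphi$ is concave and increasing on $[0,\infty)$.
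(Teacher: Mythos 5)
Your proposal is correct and follows essentially the same route as the paper: monogamy of $E^\G_{F,2}$ on the Gaussian $k$-extension guaranteed by Theorem~\ref{GClosed}, the trivial bound $E^\G_{F,2}\leq M(V_A)$ from monotonicity of $M$ (unchanged $A$-marginal), and then Jensen's inequality with the concavity and monotonicity of $\varphi$ applied to a near-optimal pure QCM to pass from Rényi-2 to the von Neumann case — which is exactly the content of the paper's Lemma on $E^\G_{F}$ versus $E^\G_{F,2}$.
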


The function $M$ plays the role of some ``effective dimension'' in the bounds above. It is related to other quantities conventionally thought of as infinite-dimensional substitutes for the dimension, such as the mean photon number, defined for a state $\rho$ of $n$ modes as $\left<N\right> = \left<N\right>_\rho \coloneqq \Tr\left[  \left( \sum_j a_j^\dag a_j \right) \rho \right]$. When $\rho$ is zero-mean Gaussian and has QCM~$V$, one has $\left<N\right> = \frac{1}{4} \left( \Tr V - 2n \right)$. By applying the arithmetic--geometric mean inequality, one can show that $M(V) \leq n \ln \left(\frac{2\left<N\right>}{n} + 1\right)$, which can be further relaxed to $M(V)\leq 2\left<N\right>$.

\paragraph{Summary \& outlook.} We accomplished a comprehensive analysis of the $k$-extendibility of Gaussian quantum states. We determined that a Gaussian state is $k$-extendible if and only if it is Gaussian $k$-extendible, which allowed us to derive a simple semidefinite program that solves the problem completely in a computationally efficient way. When the extended system contains one mode only, we fully characterized the set of $k$-extendible Gaussian states by a simple analytic condition reminiscent of the PPT criterion. We demonstrated further applications to Gaussian state steerability, $k$-extendiblity of Gaussian channels, bounding the distance between $k$-extendible and separable states, and the R\'{e}nyi entanglement of formation for Gaussian states. Our results also yield necessary criteria for $k$-extendibility of non-Gaussian states based on second moments. This work sheds novel light onto the fine structure of entanglement and its uses in continuous-variable systems.

It remains an intriguing open problem to find an analytic condition for $k$-extendibility of arbitrary Gaussian states. 
Another topic for future work is to explore applications of Theorem~\ref{G k ext thm} to the non-asymptotic capacities of Gaussian channels, in light of recent work~\cite{Kaur2018,BBFS18} exploiting $k$-extendibility to bound the performance of quantum processors.


\begin{acknowledgments}

\paragraph{Acknowledgments.} LL and GA acknowledge financial support from the European Research Council under the Starting Grant GQCOP (Grant No.~637352).
SK and MMW acknowledge support from the NSF under Grant No.~1714215. S.K. acknowledges support from the NSERC PGS-D.

\end{acknowledgments}

\bibliography{biblio}{}

\clearpage

\onecolumngrid
\begin{center}
\vspace*{\baselineskip}
{\textbf{\large Supplemental Material}}\\
\end{center}

\renewcommand{\theequation}{S\arabic{equation}}
\setcounter{equation}{0}
\setcounter{figure}{0}
\setcounter{table}{0}
\setcounter{section}{0}
\setcounter{page}{1}
\makeatletter

\section{Background}

For a continuous-variable system of $n$ modes, we let $r=(x_1,p_1,\dotsc,x_n,p_n)^{\t}$ denote the vector of position- and momentum- quadrature operators. These operators satisfy the canonical commutation relations $[x_j,p_k]=\I\delta_{j,k}\mathds{1}$ for all $1\leq j,k\leq n$, which we can rewrite compactly as
\begin{equation}
    [r,r^\t] = \I \Omega = \I \begin{pmatrix} 0 & 1 \\ -1 & 0 \end{pmatrix}^{\oplus n} .
    \label{CCR}
\end{equation}
Given a quantum state represented by a density matrix $\rho$, we can construct the real vector $s\in\mathds{R}^{2n}$ defined by $s_j=\Tr[r_j\, \rho ]$ for all $1\leq j\leq 2n$, called the mean vector of $\rho$, and the real symmetric $2n\times 2n$ matrix $V$ satisfying $V_{ij}=\Tr[ \{r_i-s_i, r_j-s_j\} \,\rho]$, called the quantum covariance matrix (QCM) of $\rho$, where $\{A,B\}\coloneqq AB+BA$ is the anticommutator. The QCM of every quantum state $\rho$ necessarily satisfies a Robertson--Schr\"odinger uncertainty principle of the form~\cite{simon94}
\begin{equation}
    V\geq \I \Omega
    \label{Heisenberg}.
\end{equation}
The QCM $V_{AB}$ of any bipartite state $\rho_{AB}$ has the block matrix form
\begin{equation}\label{eq-V_AB_supp}
	V_{AB}=\begin{pmatrix} V_A & X \\ X^{\t} & V_B \end{pmatrix},
\end{equation}
where $V_A$ and $V_B$ are the QCMs of the reduced states $\rho_A$ and $\rho_B$, respectively, and $X$ describes the correlations between the systems $A$ and $B$.

A state $\rho$ on $n$ modes is called a \textit{Gaussian} state if it is either a thermal state of some quadratic Hamiltonian, i.e., if it can be written in the form $\rho=\frac{\e^{-\beta H(A,x)}}{\Tr[\e^{-\beta H(A,x)}]}$ for some $\beta > 0$, $x\in\mathds{R}^{2n}$, and a real symmetric $2n\times 2n$ matrix $A$, where $H(A,x)=\frac{1}{2}r^{\t}Ar+r^{\t}x$ is quadratic in the canonical operators, or it is a limit of states of that form.

Gaussian states are uniquely described by their mean vector and quantum covariance matrix. Furthermore, any Gaussian state $\rho$ on $n$ modes with QCM $V$ can always be brought into a canonical form by means of a symplectic unitary $U_S$, which acts on the mode operators $r$ as $U_{\!S}^{\mathstrut} r\, U_{\!S}^\dag = Sr$, where $S$ is a symplectic matrix, i.e., a matrix satisfying the defining relation $S\Omega S^{\t}=\Omega$. If the Williamson canonical form of $V$ is
\begin{equation}\label{V Williamson}
	V = S \left( \bigoplus_{j=1}^n \nu_j \mathds{1}_2\right) S^{\t},
\end{equation}
one has
\begin{equation} \label{rho canonical}
	\rho = U_S^\dag \left( \bigotimes_{j=1}^n \rho^\G(\nu_j) \right) U_S\, ,
\end{equation}
where the canonical form of one-mode Gaussian states is defined in the Fock basis by
\begin{equation}\label{1 mode canonical}
	\rho^\G(\nu)\coloneqq \frac{2}{\nu+1} \sum_{\ell=0}^\infty \left(\frac{\nu-1}{\nu+1}\right)^\ell \ketbra{\ell}.
\end{equation}

For $\lambda\in [0,1]$ we denote by $\mathcal{L}_\lambda$ the \textit{attenuator channel} of parameter $\lambda$, defined by
\begin{equation}\label{attenuator}
	\mathcal{L}_\lambda(\cdot) \coloneqq \Tr_B \left[ U_\lambda ((\cdot)_A \otimes{|0\rangle\langle 0|}_B) U_\lambda^\dag \right],
\end{equation}
where $U_\lambda$ is the symplectic unitary that implements a beam splitter with transmissivity $\lambda$, and on two modes---whose annihilation operators we denote by $a,b$---takes the form
\begin{equation}\label{beam splitter}
	U_\lambda \coloneqq e^{-\arccos\left(\!\sqrt{\lambda}\right)\,(ab^\dag - a^\dag b)} = e^{-\sqrt{\frac{1-\lambda}{\lambda}} ab^\dag} \lambda^{\frac{a^\dag a - b^\dag b}{2}} e^{\sqrt{\frac{1-\lambda}{\lambda}} a^\dag b},
\end{equation}
where we use the function $\arccos:[0,1]\to [0,\pi/2]$, and the last identity is a rewriting of~\cite[Eq.~(5.116)]{BUCCO}. Upon tedious but straightforward algebraic manipulations,~\eqref{attenuator} and~\eqref{beam splitter} together yield the following Kraus representation of the attenuator channel~\cite[Eq.~(4.5)]{Ivan2011}:
\begin{equation}\label{attenuator Kraus}
	\mathcal{L}_\lambda(\cdot) = \sum_{j=0}^\infty \frac{\left(\frac{1}{\lambda} -1\right)^j}{j!}\ a^j \lambda^{a^\dag a/2} (\cdot) \lambda^{a^\dag a/2} (a^\dag)^j\, .
\end{equation}

\section{Extensions of states of continuous-variable systems}

Throughout this section we clarify some subtleties related to extensions in continuous-variable systems. We start by asking whether states with bounded energy are in some sense a closed set under $k$-extensions. The reason why this is important is because those states are naturally the most physically relevant.

\begin{lemma} \label{simple lemma}
Let $\rho_{AB}$ be a (not necessarily Gaussian) state with vanishing first moments and finite second moments, identified by a QCM $V_{AB}$ as in~\eqref{V AB}. Then any $k$-extension $\widetilde{\rho}_{AB_1\ldots B_k}$ has (a) vanishing first moments and (b) finite second moments with a corresponding QCM of the form as in~\eqref{V AB1...Bk}.
\end{lemma}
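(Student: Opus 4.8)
The plan is to reduce every assertion to a statement about a low-order marginal of $\widetilde{\rho}_{AB_1\ldots B_k}$, exploiting the two rigid constraints built into the notion of a $k$-extension: permutation invariance among the systems $B_i$, and the prescription $\widetilde{\rho}_{AB_1}=\rho_{AB}$ (with $B_1\equiv B$). From these alone, by tracing out appropriate subsystems and relabelling, one gets that $\widetilde{\rho}_A=\rho_A$, that the marginal on $AB_i$ equals $\rho_{AB}$ for every $i$, that $\widetilde{\rho}_{B_i}=\rho_B$ for every $i$, and that $\widetilde{\rho}_{B_iB_j}$ is one and the same state for all $i\neq j$. These identifications carry all the weight.

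I would first settle finiteness of the second moments, which is what makes the QCM $\widetilde{V}$ meaningful at all (a generic $k$-extension is assumed only to be a density operator). Since the total photon-number observable decomposes as $N_A+\sum_{i=1}^k N_{B_i}$, its expectation value in $\widetilde{\rho}$ equals $\langle N\rangle_{\rho_A}+k\langle N\rangle_{\rho_B}$, which is finite because $\rho_{AB}$ — and hence $\rho_A$ and $\rho_B$ — has finite second moments; finiteness of the mean energy is equivalent to finiteness of all quadrature second moments, so $\widetilde{V}$ is a well-defined real symmetric matrix (a crude $|\langle\{r_i,r_j\}\rangle|\leq\langle r_i^2\rangle+\langle r_j^2\rangle$ disposes of the cross terms directly if preferred). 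Next I would observe that each entry of the first-moment vector of $\widetilde{\rho}$ is the mean of a single quadrature and therefore depends only on $\widetilde{\rho}_A=\rho_A$ or on some $\widetilde{\rho}_{B_i}=\rho_B$; since $\rho_{AB}$ has vanishing first moments, so do $\rho_A$ and $\rho_B$, which gives claim (a), after which $\widetilde{V}_{ij}=\Tr[\{r_i,r_j\}\widetilde{\rho}]$.

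Finally I would read off the block structure. The diagonal blocks are fixed by $\widetilde{\rho}_A=\rho_A$ and $\widetilde{\rho}_{B_i}=\rho_B$, yielding $V_A$ and $V_B$; the $A$–$B_i$ blocks are fixed by $\widetilde{\rho}_{AB_i}=\rho_{AB}$, yielding $X$ for every $i$. For the $B_i$–$B_j$ blocks, permutation invariance makes all of them equal to a common matrix $Y$; moreover, relabelling modes acts on $\widetilde{V}$ by conjugation with a permutation matrix, so invariance of $\widetilde{\rho}$ under the transposition $B_i\leftrightarrow B_j$ forces the $(B_i,B_j)$ block of $\widetilde{V}$ to coincide with the $(B_j,B_i)$ block, which — $\widetilde{V}$ being symmetric — equals $Y^{\t}$; hence $Y=Y^{\t}$. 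Assembling the blocks reproduces exactly the matrix in~\eqref{V AB1...Bk}, giving claim (b).

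The only genuinely non-bookkeeping points are the two just highlighted: the energy estimate, which is indispensable merely to give meaning to $\widetilde{V}$ since the extension is not assumed a priori to have finite moments, and the symmetry $Y=Y^{\t}$. I expect the latter to be the subtle one — it does \emph{not} follow from symmetry of $\widetilde{V}$, which only forces the $(B_i,B_j)$ and $(B_j,B_i)$ blocks to be mutual transposes — and it genuinely requires the permutation invariance; this is precisely the feature that the earlier treatment in~\cite{Bhat16} overlooked.
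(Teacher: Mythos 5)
Your proposal is correct and follows essentially the same route as the paper: every claim is reduced to the one- and two-party marginals fixed by the definition of a $k$-extension, with finiteness of the cross moments handled by the diagonal ones, and the symmetry $Y=Y^{\t}$ extracted from invariance of $\widetilde{\rho}_{B_iB_j}$ under the swap $B_i\leftrightarrow B_j$ — exactly the point the paper also singles out as the ingredient missed in~\cite{Bhat16}. The photon-number phrasing of the finiteness step is a cosmetic variant of the paper's direct computation of $\Tr\big[r_{B_j,\beta}^2\,\widetilde{\rho}\big]$ via the marginal, so no substantive difference remains.
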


\begin{proof}
We only show that the second moments must be finite, as the claims concerning the first moments are proved in an analogous fashion. To see this, it suffices to show that: (i) $\Tr\big[r_{A, \alpha}^2 \, \widetilde{\rho}_{AB_1\ldots B_k} \big] < \infty$ for all $1\leq \alpha\leq 2n_A$; and that (ii) $\Tr\big[r_{B_j,\,\beta}^2\, \widetilde{\rho}_{AB_1\ldots B_k}  \big] < \infty$ for all $1\leq j\leq k$ and $1\leq \beta\leq 2n_B$. Here, $r_{B_j,\,\beta}$ denotes the $\beta$-th component of the vector of canonical operators acting on $B_j$, and analogously for $r_{A,\alpha}$. For (ii) one writes
\begin{equation}
\Tr\left[r_{B_j,\,\beta}^2\, \widetilde{\rho}_{AB_1\ldots B_k} \right] = \Tr \left[ r_{B_j,\,\beta}^2 \, \widetilde{\rho}_{AB_j}  \right] = \Tr \left[ r_{B,\,\beta}^2 \, \rho_{AB}^\G  \right] < \infty\, .
\end{equation}
The proof of (i) follows the same lines.

Since the second moments of $\,\widetilde{\rho}_{AB_1\ldots B_k}$ have been shown to be finite, we can now form the corresponding quantum covariance matrix $\widetilde{V}_{AB_1\dotsb B_k}$. The particular structure in~\eqref{V AB1...Bk} results from the requirements imposed on $k$-extensions. One such requirement is $\widetilde{\rho}_{AB_j}=\rho_{AB}$ for all $1\leq j\leq k$, which implies that the first row of $\widetilde{V}_{AB_1\dotsb B_k}$ must feature identical copies of the matrix $X$. Indeed, for all $1\leq j\leq k$,
\begin{align}
    \left(\widetilde{V}_{AB_1\dotsb B_k}\right)_{A,\alpha\,;\, B_j,\,\beta} &= \Tr[(r_{A,\alpha}r_{B_j,\,\beta}+r_{B_j,\,\beta}r_{A,\alpha})\,\widetilde{\rho}_{AB_1\dotsb B_k}] \\
    &=\Tr[(r_{A,\alpha}r_{B_j,\,\beta}+r_{B_j,\,\beta}r_{A,\alpha})\,\widetilde{\rho}_{AB_j}]\\
    &=\Tr[(r_{A,\alpha}r_{B_j,\,\beta}+r_{B_j,\,\beta}r_{A,\alpha})\,\rho_{AB}]\\
    &=X_{\alpha,\,\beta}.
\end{align}
Symmetry with respect to the systems $B_1,\dotsc,B_k$ implies that $\widetilde{\rho}_{B_{j_1}B_{j_2}}=\widetilde{\rho}_{B_1B_2}$ for all $1\leq j_1,j_2\leq k$. Furthermore, all such two-party reduced states are invariant under swapping of the two subsystems. Therefore, the matrix $Y$ whose entries are given by
\begin{align}
    Y_{i,\ell}&\coloneqq \Tr[(r_{B_1,i}r_{B_2,\ell}+r_{B_2,\ell}r_{B_1,\ell})\widetilde{\rho}_{AB_1\dotsb B_k}]\\
    &=\Tr[(r_{B_1,i}r_{B_2,\ell}+r_{B_2,\ell}r_{B_1,\ell})\widetilde{\rho}_{B_1B_2}]
\end{align}
is symmetric: $Y_{i,\ell}=Y_{\ell,i}$ for all $1\leq i,\ell\leq 2n_B$. The covariance matrix $\widetilde{V}_{AB_1\dotsb B_k}$ thus has the structure as in~\eqref{V AB1...Bk}.
\end{proof}

Before we move on, we want to elaborate on the role of the symmetry requirement in Theorem~\ref{GClosed}. For the sake of this discussion, let us call a state $\rho_{AB}$ \emph{generally $k$-extendible} on the $B$ system if there exists a quantum state $\widetilde{\rho}_{AB_1\ldots B_k}$ on $A$ and $k$ copies of $B$ such that $\widetilde{\rho}_{AB_i}\equiv \rho_{AB}$ for all $i$, where $\widetilde{\rho}_{AB_i}$ refers to the reduced state over the systems $AB_i$, and for a fixed $i$ we identified $B_i\equiv B$. Unlike in the definition we adopt in the main text, here we are not requiring the whole state to be symmetric, only the reductions to always be the same. To stress the difference between the two cases, here and for the rest of this section we refer to standard $k$-extendible state as \emph{symmetrically $k$-extendible}.

Clearly, if $\rho_{AB}=\rho_{AB}^\G$ is Gaussian, we may want to look at Gaussian general $k$-extensions, in the same way as we did in the main text for symmetric $k$-extensions. Fortunately, it turns out that Theorem~\ref{GClosed} implies that  these concepts are  equivalent.

\begin{corollary} \label{GClosed stronger}
Let $\rho_{AB}^\G$ be a Gaussian state. Then the following are equivalent:
\begin{enumerate}
    \item[(i)] $\rho_{AB}^\G$ is generally $k$-extendible;
    \item[(ii)] $\rho_{AB}^\G$ is symmetrically $k$-extendible;
    \item[(iii)] $\rho_{AB}^\G$ has a Gaussian general $k$-extension;
    \item[(iv)] $\rho_{AB}^\G$ has a Gaussian symmetric $k$-extension.
\end{enumerate}
\end{corollary}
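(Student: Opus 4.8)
The plan is to deduce the four-way equivalence from Theorem~\ref{GClosed} by closing the cyclic chain of implications $\mathrm{(iv)}\Rightarrow\mathrm{(iii)}\Rightarrow\mathrm{(i)}\Rightarrow\mathrm{(ii)}\Rightarrow\mathrm{(iv)}$. Three of these four links are essentially tautological, and the single substantive step, $\mathrm{(ii)}\Rightarrow\mathrm{(iv)}$, is precisely Theorem~\ref{GClosed}, which can be invoked verbatim.

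For the tautological links: $\mathrm{(iv)}\Rightarrow\mathrm{(iii)}$ and $\mathrm{(iii)}\Rightarrow\mathrm{(i)}$ follow at once from the definitions, since a permutation-invariant extension $\widetilde\rho_{AB_1\ldots B_k}$ automatically satisfies $\widetilde\rho_{AB_i}=\widetilde\rho_{AB_1}=\rho^\G_{AB}$ for every $i$ --- so every (Gaussian) symmetric $k$-extension is in particular a (Gaussian) general $k$-extension --- and a Gaussian general $k$-extension is trivially a general $k$-extension. The remaining link $\mathrm{(i)}\Rightarrow\mathrm{(ii)}$ is obtained by the standard symmetrization (group-averaging) trick, which in fact works for arbitrary, not necessarily Gaussian, states. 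Given a general $k$-extension $\widetilde\rho_{AB_1\ldots B_k}$, let $W_\pi$ be the unitary on $\mathcal{H}_{B_1}\otimes\cdots\otimes\mathcal{H}_{B_k}$ permuting the tensor factors according to $\pi\in S_k$, and set
\[
\widehat\rho_{AB_1\ldots B_k}\coloneqq \frac{1}{k!}\sum_{\pi\in S_k}(\mathds{1}_A\otimes W_\pi)\,\widetilde\rho_{AB_1\ldots B_k}\,(\mathds{1}_A\otimes W_\pi)^\dag\, .
\]
This is a finite convex mixture of states, hence a state, and it is manifestly invariant under all permutations of $B_1,\ldots,B_k$. Its reduction onto $AB_1$ is the average over $\pi$ of $\widetilde\rho_{AB_{m(\pi)}}$, where the index $m(\pi)$ ranges over all of $\{1,\ldots,k\}$ as $\pi$ varies; since $\widetilde\rho_{AB_m}=\rho^\G_{AB}$ for every $m$ by hypothesis, the average equals $\rho^\G_{AB}$, so $\widehat\rho_{AB_1\ldots B_k}$ is a symmetric $k$-extension of $\rho^\G_{AB}$.

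I expect no genuine obstacle. The only point warranting care is that the averaging above be legitimate in the infinite-dimensional setting, which it is: the $W_\pi$ are well-defined unitaries on the separable Hilbert space $\mathcal{H}_B^{\otimes k}$, conjugation by them is a trace-norm isometry on the trace-class operators, the partial trace is linear and trace-norm continuous, and the sum is finite, so no convergence issues arise. Once Theorem~\ref{GClosed} is available, the corollary is thus a matter of bookkeeping: symmetric extensions are automatically general extensions, symmetrization turns a general extension into a symmetric one while preserving every two-party marginal, and the Gaussianity demanded by $\mathrm{(iv)}$ is then supplied by Theorem~\ref{GClosed} rather than needing to be tracked through the symmetrization step.
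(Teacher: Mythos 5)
Your proof is correct and follows essentially the same route as the paper: the implications $\mathrm{(iv)}\Rightarrow\mathrm{(iii)}\Rightarrow\mathrm{(i)}$ by definition, the standard $S_k$-symmetrization average to pass from a general to a symmetric extension, and Theorem~\ref{GClosed} for the only substantive step producing the Gaussian symmetric extension. The paper organizes these as $(i)\Leftrightarrow(ii)$ plus $(ii)\Leftrightarrow(iv)$ rather than as a cycle, but the ingredients are identical.
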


\begin{proof}
The general and well-known fact that $(i)\Leftrightarrow(ii)$ follows immediately by symmetrization arguments: given a general $k$-extension $\widetilde{\rho}_{AB_1\ldots B_k}$, we can construct the symmetric $k$-extension
\begin{equation}
    \widetilde{\rho}_{AB_1\ldots B_k}^{\,(\text{sym})}\coloneqq \frac{1}{k!}\sum_{\pi\in S_k} \left(\mathds{1}_A\otimes U_{B_1\ldots B_k}(\pi) \right) \widetilde{\rho}_{AB_1\ldots B_k} \left(\mathds{1}_A\otimes U_{B_1\ldots B_k}(\pi) \right)^\dag\, ,
\end{equation}
where $S_k$ denotes the symmetric group over $k$ elements, and $U_{B_1\ldots B_k}(\pi)$ is the unitary that permutes the $B$ systems according to $\pi\in S_k$. This shows that indeed $(i)\Leftrightarrow (ii)$. Theorem~\ref{GClosed} guarantees that $(ii)\Leftrightarrow(iv)$. Also, the implications $(iv)\Rightarrow (iii)\Rightarrow (i)$ follow by definition. This completes the proof.
\end{proof}

Due to this equivalence result, the above definition of general $k$-extendibility does not add much to our discussion. We therefore do not consider it further, except very briefly in Remark~\ref{comparison Bhat rem} below.

\section{Gaussian extendibility: degenerate cases}

In this section, we argue that the proof of Theorem~\ref{G k ext thm} remains valid also in the degenerate cases where $V_A-\I \Omega_A$ is not invertible and thus some of the intermediate statements in the main text cease to hold as written. Namely,~\eqref{k ext necessary 1} and~\eqref{simplified k ext necessary 1} do not seem to make sense when $V_A-\I \Omega_A$ does not possess an inverse. As it turns out, the right way to interpret these conditions is via a regularization procedure. For instance,~\eqref{simplified k ext necessary 1} is said to hold if it holds for all $\epsilon>0$ when one performs the substitution  $V_A\mapsto V_A+\epsilon\mathds{1}_A$.

A regularization procedure can be used to define the Schur complement with respect to a non-invertible block: given a positive semidefinite matrix $M$ as in~\eqref{eq-Schur_complement}, where $P$ is not necessarily invertible, we define $M/P$ as
\begin{equation}
    M/P \coloneqq \lim_{\epsilon\to 0^+} \left( Q - Z^\dag (P + \epsilon \mathds{1})^{-1} Z\right) ,
    \label{generalized Schur}
\end{equation}
provided that such a limit exists. Observe that this happens if and only if $\mathrm{im}(Z)\perp \ker(P)$, where $\mathrm{im}$ denotes the image (or range), and orthogonality is between subspaces. When this is the case, one still has $M/P = Q- Z^\dag P^{-1} Z$ provided that the inverse is taken on the support. This definition of generalized Schur complement fits well into the positive semidefiniteness condition in~\eqref{eq-Schur_complement}. In fact, note that
\begin{equation}\label{eq-Schur_complement epsilon}
	M = \begin{pmatrix} P & Z \\ Z^\dag & Q\end{pmatrix} \geq 0\ \Leftrightarrow\ P \geq 0 ,\ M/(P+\epsilon \mathds{1})\coloneqq Q - Z^\dag (P+\epsilon \mathds{1})^{-1} Z \geq 0\quad \forall\ \epsilon>0\, .
\end{equation}
Hence, one can still claim that the fundamental equivalence in~\eqref{eq-Schur_complement} holds formally, provided that Schur complements are intended as generalized. It is understood that on the r.h.s.\ of \eqref{eq-Schur_complement} one still requires that $M/P$ actually exists. We now review the proof of Theorem~\ref{G k ext thm} in light of these considerations.

\begin{proof}[Proof of Theorem~\ref{G k ext thm} (complete version)]
In what follows, we assume that $k\geq 2$. From Theorem~\ref{GClosed}, we know that a Gaussian state $\rho_{AB}^\G$ is $k$-extendible if and only if it has a Gaussian $k$-extension. If $\rho_{AB}^\G$ has vanishing first moments, then the same is true for any extension, by Lemma~\ref{simple lemma}. Hence, searching for a Gaussian $k$-extension amounts to asking whether there exists a legitimate QCM $\widetilde{V}_{AB_1\ldots B_k}$ of the form as in~\eqref{V AB1...Bk} that satisfies the bona fide condition
\begin{equation}
    \widetilde{V}_{AB_1\ldots B_k} \geq \I \Omega_{AB_1\ldots B_k} = (\I \Omega_A) \oplus (\I \Omega_{B_1}) \oplus \cdots \oplus  (\I \Omega_{B_k})\, .
    \label{Heisenberg V AB1...Bk}
\end{equation}
In fact, it is not difficult to verify that Gaussian states with zero mean possess the same symmetries as their QCMs. Specifically, any Gaussian state with a QCM of the form as in~\eqref{V AB1...Bk} is both invariant under permutation of any two $B$ systems and such that its reduction to $AB_1$ coincides with the original state with QCM $V_{AB}$ as in~\eqref{V AB}.

We now rephrase~\eqref{Heisenberg V AB1...Bk} using the suitably regularized conditions in~\eqref{eq-Schur_complement}, as given explicitly by~\eqref{eq-Schur_complement epsilon}. Remember that $V_A\geq \I \Omega_A$ holds by hypothesis since the reduced state on $A$ is a legitimate density matrix. Making the dependence on the regularizing parameter $\epsilon>0$ explicit for clarity, we thus obtain that
\begin{align}
0 &\leq \Big( \widetilde{V}_{AB_1\ldots B_k} + \epsilon \mathds{1}_A \oplus 0_{B_1\ldots B_k} - \I \Omega_{AB_1\ldots B_k} \Big)\, \Big/\, \Big( V_A + \epsilon \mathds{1}_A - \I \Omega_A \Big) \\
&= \begin{pmatrix} V_B - \I \Omega_B & Y  & \ldots & Y \\[-1ex] Y & V_B - \I \Omega_B & \ddots & \vdots \\[-1ex] \vdots & \ddots & \ddots & Y \\ Y & \ldots & Y & V_B - \I \Omega_B \end{pmatrix} - \begin{pmatrix} X^\t \\ X^\t \\ \vdots \\ X^\t \end{pmatrix} \left( V_A + \epsilon \mathds{1}_A - \I\Omega_A\right)^{-1} \begin{pmatrix} X & X & \ldots & X \end{pmatrix} , \label{Schur big matrix}
\end{align}
to be obeyed for all $\epsilon>0$. We can conveniently write the above inequality by making the identification $\bigoplus_{j=1}^k \mathds{R}^{2n_B} = \mathds{R}^k \otimes \mathds{R}^{2n_B}$ in terms of the underlying vector spaces. Introducing $\ket{+}\coloneqq \frac{1}{\sqrt{k}}\sum_{j=1}^k \ket{j}\in\mathds{R}^k$, the inequality~\eqref{Schur big matrix} becomes
\begin{align}
0 &\leq \mathds{1}_k \otimes (V_B - \I \Omega_B - Y) + k \ketbra{+} \otimes Y - k\ketbra{+}\otimes \left( X^\t \left( V_A + \epsilon \mathds{1}_A - \I \Omega_A\right)^{-1} X\right) \\
&= \left( \mathds{1}_k - \ketbra{+}\right) \otimes (V_B - \I \Omega_B - Y) + \ketbra{+}\otimes \left( V_B + (k-1) Y - \I \Omega_B - k X^\t \left( V_A + \epsilon \mathds{1}_A - \I \Omega_A\right)^{-1} X \right)
\end{align}
Now, since $\ketbra{+}$ and $\mathds{1} - \ketbra{+}$ are projectors onto orthogonal subspaces, the above relation is equivalent to
\begin{align}
    V_B - \I \Omega_B - Y &\geq 0\, , \label{intermediate condition 1}\\
    V_B + (k-1) Y - \I \Omega_B - k X^\t \left( V_A + \epsilon \mathds{1}_A - \I \Omega_A\right)^{-1} X &\geq 0\, . \label{intermediate condition 2}
\end{align}
Introducing the alternative parametrization $\Delta \coloneqq V_B - Y$, we can rephrase this as
\begin{equation}
\I \Omega_B\leq \Delta_B\leq \frac{k}{k-1}\left( V_B - X^\t \left( V_A + \epsilon \mathds{1}_A - \I \Omega_A\right)^{-1} X\right) - \frac{1}{k-1}\, \I \Omega_B\qquad \forall\ \epsilon>0\, ,
\label{condition k ext epsilon}
\end{equation}
reproducing~\eqref{k ext necessary 1}. Using again~\eqref{eq-Schur_complement epsilon} -- this time backwards -- we see that~\eqref{condition k ext epsilon} is equivalent to the existence of a real matrix $\Delta_B\geq \I\Omega_B$ such that~\eqref{k ext necessary 2} is obeyed.

As already mentioned in the main text, to see that~\eqref{k ext necessary 2} implies~\eqref{simplified k ext necessary 2} one substitutes the (complex conjugate) bona fide condition $\Delta_B \geq -\I\Omega_B$ into~\eqref{k ext necessary 2}. Observe that this is possible since $\Delta_B$ is real.

Now, the problem is to prove that~\eqref{simplified k ext necessary 2} is also sufficient to guarantee the existence of a real $\Delta\geq \I \Omega_B$ that satisfies~\eqref{k ext necessary 2} when $n_B=1$. In order to do this, it suffices to prove that~\eqref{simplified k ext necessary 2} is equivalent to~\eqref{condition k ext epsilon}. To this end, we employ~\cite[Lemma~7]{revisited}, which states that given two $2\times 2$ Hermitian matrices $M,N$, there exists a real matrix $R$ such that $M\leq R\leq N$ if and only if both $M\leq N$ and $M^*\leq N$ hold true, with $M^*$ being the complex conjugate of $M$. Since  all matrices in~\eqref{condition k ext epsilon} are $2\times 2$ when $n_B=1$, we can rephrase it as
\begin{equation}
\pm \I \Omega_B \leq \frac{k}{k-1}\left( V_B - X^\t \left( V_A + \epsilon \mathds{1}_A - \I \Omega_A\right)^{-1} X\right) - \frac{1}{k-1}\, \I \Omega_B\qquad \forall\ \epsilon>0\, ,
\end{equation}
which upon elementary algebraic manipulations translates to the following two conditions:
\begin{equation}
\begin{aligned}
V_B - X^\t \left( V_A + \epsilon \mathds{1}_A - \I \Omega_A\right)^{-1} X &\geq \I \Omega_B\, , \\
\frac{k}{k-1}\left( V_B - X^\t \left( V_A + \epsilon \mathds{1}_A - \I \Omega_A\right)^{-1} X \right) &\geq - \frac{k-2}{k-1}\, \I \Omega_B\, ,
\end{aligned}
\qquad\quad \forall\ \epsilon>0\, .
\label{simplified condition k ext epsilon}
\end{equation}
The first inequality follows from the bona fide condition $V_{AB}\geq \I\Omega_{AB}$ via an application of~\eqref{eq-Schur_complement epsilon}, while the second is equivalent to~\eqref{simplified k ext necessary 2} again via~\eqref{eq-Schur_complement epsilon}.

Finally, the fact that~\eqref{k ext necessary 2} and~\eqref{simplified k ext necessary 2} fail to be equivalent already for $n_A=n_B=k=2$ is demonstrated by the example of the Werner--Wolf state of~\cite{Werner01}, as the discussion in the next section shows.
\end{proof}

\begin{remark} \label{comparison Bhat rem}
We take the chance here to draw a thorough comparison between our results and techniques and those of~\cite{Bhat16}. The starting point is the structure of the $k$-extended QCM in~\eqref{V AB1...Bk}, to be compared with~\cite[Eq.~(2.1)]{Bhat16}. We see that the fact that our matrix $Y$ (to be identified with their $\theta_k$) needs to be symmetric was not recognized in~\cite{Bhat16} as descending directly from the required symmetry of the extended state $\rho_{AB_1\ldots B_k}$ under the exchange of any two $B$ systems. We have instead proved this explicitly (Lemma~\ref{simple lemma}). Also, we have shown in Corollary~\ref{GClosed stronger} that a Gaussian state is generally $k$-extendible if and only if it has a Gaussian symmetric $k$-extension; this provides another way to show that if a general (possibly non-symmetric) $k$-extension of the state exists, then the corresponding matrix $Y$ that appears in its QCM as in~\eqref{V AB1...Bk} can always be taken to be symmetric.

Following the argument in~\cite{Bhat16}, one notes that the symmetry of $\theta_k$ is recovered via~\cite[Theorem~2.1]{Parthasarathy2015} as a consequence of the bona fide condition \emph{when complete extendibility is assumed}. At this precise point the reasoning in~\cite{Bhat16} ceases to apply to $k$-extendible states with finite $k$, and holds instead only for completely extendible ones. Therefore, while the algebraic manipulations that lead to our~\eqref{intermediate condition 1}--\eqref{intermediate condition 2} are identical to those in~\cite{Bhat16}, and both are indeed elementary applications of known properties of Schur complements, the conclusions that one is allowed to draw from them here are significantly more powerful than those obtained in~\cite{Bhat16}. For example, we are able to derive simple necessary and sufficient conditions for the $k$-extendibility of Gaussian states, solving an outstanding open problem that was explicitly stated as such in~\cite{Bhat16}.

Last but not least, in~\cite{Bhat16} it does not seem to have been observed that the case $n_B=1$ can be solved analytically, yielding a necessary and sufficient condition that resembles PPT-ness. This special case is of paramount physical importance because of its applicability to the theory of quantum communication over single-sender single-receiver Gaussian channels.
\end{remark}

Before we move on, we want to dwell on the possibility of deriving the separability condition for Gaussian states found in~\cite[Theorem~5]{revisited} directly from Theorem~\ref{G k ext thm}. As argued in the main text, due to the fact that $k$-extendibility is a stronger property than $(k-1)$-extendibility, it suffices take the limit $k\to\infty$ of the condition in~\eqref{k ext necessary 2}. Technically, the evaluation of this limit is made less obvious by the fact that the choice of $\Delta_B$ may depend on $k$. This is \textit{a priori} a problem because the set of QCMs is not compact. However, it is not difficult to verify that any $\Delta_B$ satisfying~\eqref{k ext necessary 2} must automatically satisfy also $\Delta_B\leq V_B$. Since the set of QCMs with this property is compact, the sequence of matrices $\Delta_B$ admits a converging subsequence, and we can take the limit on that subsequence~\cite{Bhat16}. This shows that~\cite[Theorem~5]{revisited} is in fact a corollary of our more general Theorem~\ref{G k ext thm}.

\section{Extendibility of two-mode Gaussian states}

We devote this section to the analytical characterization of $k$-extendibility in the simple case of two-mode Gaussian states, as can be deduced from our general Theorem~\ref{G k ext thm}. This is in complete analogy with the analytical characterization of separability given in~\cite{Simon00}. We start by recalling that the QCM of any two-mode Gaussian state can be brought into the normal `Simon' form
\begin{equation}
V_{AB} = \begin{pmatrix} a & & c_+ & \\ & a & & c_- \\ c_+ & & b & \\ & c_- & & b \end{pmatrix}
\label{Simon form}
\end{equation}
by means of local symplectic operations. Here, in compliance with the notation of~\eqref{V AB}, the first two rows and columns pertain to the first mode, and the remaining two to the second. As shown in~\cite{Simon00}, the bona fide condition $V_{AB}\geq \I\Omega_{AB}$ translates to a set of simple inequalities involving the local symplectic invariants $a,b,c_+,c_-$:
\begin{equation}
    V_{AB}\geq \I \Omega_{AB}\quad \Longleftrightarrow\quad \left\{\begin{array}{l} a,b> 0\, ,\ \ ab>c_\pm^2\, , \\ (ab-c_+^2)(ab-c_-^2)+1\geq a^2+b^2+2c_+c_-\geq 2\, . \end{array}\right.
    \label{Simon condition explicit}
\end{equation}
We can make the local symmmetry more explicit by rephrasing these conditions in terms of local symplectic invariants only. Referring to~\eqref{V AB} for notation, we obtain that
\begin{equation}
    V_{AB}\geq \I \Omega_{AB}\quad \Longleftrightarrow\quad \left\{\begin{array}{l} V_{AB}>0\, ,\\ \det (V_{AB}) + 1 \geq \det V_A + \det V_B + 2 \det X \geq 2\, . \end{array}\right.
    \label{Simon condition symplectic}
\end{equation}
By appropriately modifying the second condition~\footnote{Let us sketch the derivation here for readability. First we use the Schur factorization theorem to write $\det V_{AB}=(\det V_A) (\det (V_B-X^\t V_A^{-1} X))$. Then we apply the special determinant formula $\det (M-N)=\det M - \Tr[M\Omega N^\t \Omega^\t] + \det N$, valid for $2\times 2$ real matrices, to $M=V_B$ and $N=X^\t V_A^{-1} X$. Finally, we employ also the $2\times 2$ matrix inversion formula $(\det V_A) V_A^{-1}=\Omega V_A \Omega^\t$. Formula~\eqref{Simon condition symplectic 2} is obtained by putting all together, remembering that $\Omega$ is skew-symmetric.}, one can also rewrite it as
\begin{equation}
    (\det V_A) (\det V_B) - \Tr\left[V_A\Omega X \Omega V_B \Omega X^\t \Omega \right] + \left( 1 - \det X\right)^2 \geq \det V_A + \det V_B \geq 2\left(1- \det X\right)\, ,
    \label{Simon condition symplectic 2}
\end{equation}
which reproduces~\cite[Eq.~(17)]{Simon00}, once one accounts for the extra factor $1/2$ that comes from the different conventions adopted to define QCMs~\footnote{Curiously, the rightmost inequality in~\eqref{Simon condition symplectic 2} -- equivalently, the rightmost inequality in the second line of~\eqref{Simon condition explicit} -- does not appear in~\cite{Simon00}. It is however necessary and not implied by the others, as the example $a=b=1$, $c_+=1-\epsilon=-c_-$ shows.}.

We now write out the conditions for a two-mode Gaussian state with QCM as in~\eqref{Simon form} to be $k$-extendible. In what follows, we usually assume that the given QCM already obeys the bona fide condition.

\begin{proposition} \label{2-mode ext prop}
    Let $V_{AB}\geq \I \Omega_{AB}$ be a two-mode QCM partitioned as in~\eqref{V AB}. Then it is $k$-extendible on $B$ if and only if
    \begin{equation}
        \det V_{AB} \geq \left( 1-\frac2k\right)^2 (\det V_A -1) + \det V_B - 2\left(1-\frac2k\right) \det X\, ,
        \label{2-mode k-ext symplectic 1}
    \end{equation}
    which can be equivalently formulated as
    \begin{equation}
        (\det V_A)(\det V_B) - \Tr\left[ V_A \Omega X \Omega V_B \Omega X^\t \Omega \right] + \left(1-\frac2k+\det X\right)^2 \geq + \left(1-\frac2k \right)^2 \det V_A + \det V_B\, .
        \label{2-mode k-ext symplectic 2}
    \end{equation}
    If the QCM is parametrised as in~\eqref{Simon form}, then~\eqref{2-mode k-ext symplectic 1} and~\eqref{2-mode k-ext symplectic 2} translate to
    \begin{equation}
    (ab-c_+^2)(ab-c_-^2) \geq \left(1-\frac2k\right)^2 (a^2-1) + b^2 - 2 \left(1-\frac2k\right) c_+c_-\, .
    \label{2-mode k-ext explicit}
    \end{equation}
\end{proposition}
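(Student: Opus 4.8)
The plan is to read off all three conditions from the $n_B=1$ instance of Theorem~\ref{G k ext thm}, which asserts that a two-mode Gaussian state is $k$-extendible on $B$ precisely when~\eqref{simplified k ext necessary 2} holds, i.e.\ when $V_{AB}\geq \I\Omega_A\oplus\big(-t\,\I\Omega_B\big)$ with $t\coloneqq 1-\tfrac2k$. First I would observe that this inequality, as well as each of~\eqref{2-mode k-ext symplectic 1}--\eqref{2-mode k-ext explicit}, is invariant under local symplectic transformations $S_A\oplus S_B$: one has $S_A\Omega_AS_A^\t=\Omega_A$ and $S_B\Omega_BS_B^\t=\Omega_B$, while $\det V_A$, $\det V_B$, $\det X$, $\det V_{AB}$, and $\Tr[V_A\Omega X\Omega V_B\Omega X^\t\Omega]$ are all local symplectic invariants. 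Hence it suffices to treat the case where $V_{AB}$ is already in the Simon normal form~\eqref{Simon form}. There $\det V_A=a^2$, $\det V_B=b^2$, $\det X=c_+c_-$, $\det V_{AB}=(ab-c_+^2)(ab-c_-^2)$, and a one-line matrix multiplication gives $\Tr[V_A\Omega X\Omega V_B\Omega X^\t\Omega]=ab(c_+^2+c_-^2)$; substituting these, both~\eqref{2-mode k-ext symplectic 1} and~\eqref{2-mode k-ext symplectic 2} collapse to~\eqref{2-mode k-ext explicit}, so the proposition reduces to proving that, in Simon form, $k$-extendibility is equivalent to~\eqref{2-mode k-ext explicit}.

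For that equivalence I would apply the Schur-complement criterion~\eqref{eq-Schur_complement} to the $4\times4$ Hermitian matrix $M\coloneqq V_{AB}-\big(\I\Omega_A\oplus(-t\,\I\Omega_B)\big)=\lsmatrix V_A-\I\Omega_A & X\\ X^\t & V_B+t\I\Omega_B\rsmatrix$ with respect to its top-left block $P\coloneqq V_A-\I\Omega_A$. Since the bona fide hypothesis $V_{AB}\geq\I\Omega_{AB}$ forces $V_A\geq\I\Omega_A$, i.e.\ $P\geq0$, the condition $M\geq0$ is equivalent to positivity of the Schur complement $Q\coloneqq(V_B+t\I\Omega_B)-X^\t P^{-1}X$ (using the regularized form~\eqref{eq-Schur_complement epsilon} in the borderline case $a=1$). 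Now $Q$ is a $2\times2$ Hermitian matrix, so $Q\geq0$ iff $\Tr Q\geq0$ and $\det Q\geq0$; the key simplification is that the trace condition comes for free, because $\Tr[\I\Omega_B]=0$ gives $\Tr Q=\Tr\big[(V_B-\I\Omega_B)-X^\t P^{-1}X\big]$, and $(V_B-\I\Omega_B)-X^\t P^{-1}X\geq0$ is precisely the Schur complement of the bona fide inequality $V_{AB}\geq\I\Omega_{AB}$, which holds by assumption. Thus $k$-extendibility is equivalent to the single scalar inequality $\det Q\geq0$.

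What remains is a bounded computation: in Simon form $P=\lsmatrix a & -\I\\ \I & a\rsmatrix$ has determinant $a^2-1$, $X=\mathrm{diag}(c_+,c_-)$, and $V_B+t\I\Omega_B=\lsmatrix b & \I t\\ -\I t & b\rsmatrix$, so evaluating $X^\t P^{-1}X$, computing $\det Q$, multiplying through by $(a^2-1)^2>0$ and then dividing by $a^2-1$ --- the case $a=1$ being trivial, since $\det V_A=1$ forces $X=0$ and product states are $k$-extendible for every $k$ --- yields, after the elementary identity $a^2b^2-ab(c_+^2+c_-^2)+c_+^2c_-^2=(ab-c_+^2)(ab-c_-^2)$, exactly~\eqref{2-mode k-ext explicit} with $t=1-\tfrac2k$. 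The coordinate-free equivalence~\eqref{2-mode k-ext symplectic 1}$\Leftrightarrow$\eqref{2-mode k-ext symplectic 2} then follows from the master identity $\det V_{AB}=(\det V_A)(\det V_B)-\Tr[V_A\Omega X\Omega V_B\Omega X^\t\Omega]+(\det X)^2$ --- obtained from $\det V_{AB}=(\det V_A)\det(V_B-X^\t V_A^{-1}X)$ together with the $2\times2$ formulas $\det(M-N)=\det M-\Tr[M\Omega N^\t\Omega^\t]+\det N$ and $(\det V_A)V_A^{-1}=\Omega V_A\Omega^\t$, exactly as in the derivation of~\eqref{Simon condition symplectic} --- upon completing the square in $\det X$.

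I expect the only real obstacle to be organizational rather than deep: one has to recognize that positivity of the $2\times2$ Schur complement $Q$ collapses to the single inequality $\det Q\geq0$ because its trace is already guaranteed by the bona fide condition, and then carry out the $\I$-laden $2\times2$ determinant computation without sign errors. With Theorems~\ref{GClosed} and~\ref{G k ext thm} in hand, nothing else of substance is required.
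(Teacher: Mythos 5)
Your proposal is correct, and it reaches the same chain of equivalences as the paper but by a noticeably different reduction. Both arguments start from the $n_B=1$ case of Theorem~\ref{G k ext thm}, pass to the Simon form~\eqref{Simon form}, and boil the matrix inequality $V_{AB}\geq \I\Omega_A\oplus\left(-\left(1-\frac2k\right)\I\Omega_B\right)$ down to a $2\times 2$ positivity check; the difference lies in which $2\times 2$ object is analyzed and how the ``extra'' scalar condition is discharged. The paper permutes rows and columns so as to split into the real $x$- and $p$-blocks, writes the condition as $\lsmatrix Q & -\I D\\ \I D & P\rsmatrix\geq 0$, reduces to $M=P^{1/2}D^{-1}QD^{-1}P^{1/2}\geq\mathds{1}$, and must then verify both $1+\lambda\mu\geq\lambda+\mu$ and $\lambda+\mu\geq 2$; the second condition is shown to be superfluous by a case split ($c_+c_-\geq 0$ gives a separable, hence $k$-extendible, state by~\cite{Simon00}, while $c_+c_-<0$ is covered by the bona fide condition~\eqref{Simon condition explicit}), and $k=2$ needs separate treatment because $D$ is then singular. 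You instead take the Schur complement of the shifted matrix with respect to $V_A-\I\Omega_A$, so the condition becomes positivity of the Hermitian $2\times 2$ matrix $Q=(V_B+(1-\frac2k)\I\Omega_B)-X^\t(V_A-\I\Omega_A)^{-1}X$, and you kill the trace condition in one line via $\Tr[\I\Omega_B]=0$ together with positivity of the Schur complement of the bona fide inequality itself—no case analysis on $c_+c_-$, no appeal to the separability criterion, and no $k=2$ exception, at the price of treating the degenerate $a=1$ case separately (correctly: a pure marginal forces $X=0$). Your passage between~\eqref{2-mode k-ext explicit},~\eqref{2-mode k-ext symplectic 1} and~\eqref{2-mode k-ext symplectic 2} via the identity $\det V_{AB}=(\det V_A)(\det V_B)-\Tr\left[V_A\Omega X\Omega V_B\Omega X^\t\Omega\right]+(\det X)^2$ is exactly the manipulation the paper relegates to a footnote, so that part coincides. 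In short, your route is somewhat cleaner and more self-contained; the paper's route makes the contact with Simon's separability analysis explicit, which is why the $c_+c_-\geq 0$ shortcut is natural there.
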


\begin{proof}
Without loss of generality, we can prove~\eqref{2-mode k-ext explicit}. Applying the inequality~\eqref{simplified k ext necessary 2} and permuting second and third rows and columns, we see that $k$-extendibility of $V_{AB}$ on $B$ is equivalent to the matrix inequality
\begin{equation}
    \begin{pmatrix} a & c_+ & -\I & 0 \\ c_+ & b & 0 & (1-2/k)\I \\ \I & 0 & a & c_- \\ 0 & -(1-2/k)\I & c_- & b \end{pmatrix} \eqqcolon \begin{pmatrix} Q & -\I D \\ \I D & P \end{pmatrix} \geq 0\, ,
\end{equation}
where the blocks $Q,P,D$ are all $2\times 2$. Since QCMs are positive definite, it automatically holds that $Q,P>0$. Hence, we can use the block positivity conditions in~\eqref{eq-Schur_complement} (nondegenerate case) to rephrase this as
\begin{equation}
    Q \geq D P^{-1} D
\end{equation}
and in turn as
\begin{equation}
    M \coloneqq P^{1/2} D^{-1} Q D^{-1} P^{1/2} \geq \mathds{1}\, .
    \label{M matrix}
\end{equation}
Here, we assumed that $D$ is invertible, i.e., that $k\neq 2$. We will see \textit{a posteriori} that this assumption is anyway immaterial as far as the final result is concerned. Call $\lambda,\mu>0$ the two eigenvalues of the above positive definite matrix $M$. The conditions $\lambda,\mu\geq 1$ are equivalent to $1+\lambda\mu \geq \lambda+\mu\geq 2$. Note that 
\begin{align}
    \lambda\mu &= \det (M) = \frac{\det Q \det P}{\det D^2} = \frac{(ab-c_+^2)(ab-c_-^2)}{\left(1-2/k\right)^2}\, ,\\
    \lambda+\mu &=\Tr[M]=\Tr\left[PD^{-1}QD^{-1}\right] = a^2+\frac{b^2}{\left( 1-2/k\right)^2} - \frac{2c_+c_-}{1-2/k}\, .
\end{align}
We now argue that the condition $\lambda +\mu\geq 2$ is superfluous. In fact, if $c_+c_-< 0$ it is already implied by the rightmost inequality in the second line of~\eqref{Simon condition explicit}. Otherwise, if $c_+c_-\geq 0$ it is well-known~\cite{Simon00} that $V_{AB}$ corresponds to a separable state, hence it is $k$-extendible for all $k$, and $\lambda+\mu\geq 2$ must be satisfied. We are thus left with the condition $1+\lambda\mu \geq \lambda+\mu$, which is exactly~\eqref{2-mode k-ext explicit}. For the special case $k=2$ this reduces to the simple inequality $(ab-c_+^2)(ab-c_-^2) \geq b^2$, which is elementarily seen to be equivalent to $M\geq \mathds{1}$, where $M$ is given by~\eqref{M matrix}. This shows that the working assumption $k\neq 2$ is in fact unnecessary, as claimed. Finally, by rewriting~\eqref{2-mode k-ext explicit} we obtain directly~\eqref{2-mode k-ext symplectic 1}, and from it also~\eqref{2-mode k-ext symplectic 2}. Observe that~\eqref{2-mode k-ext symplectic 2} can be derived from~\eqref{2-mode k-ext symplectic 1} by elementary linear algebra techniques, as explained in~\cite{Simon00} and reviewed above.
\end{proof}

As a final remark, observe that in the limit $k\to\infty$ either of the inequalities~\eqref{2-mode k-ext symplectic 1}--\eqref{2-mode k-ext explicit} reduces to the separability condition in~\cite[Eq.~(19)]{Simon00}, as expected.

\section{The Werner--Wolf state is not 2-extendible}

This subsection is devoted to the analysis of the Werner--Wolf bound entangled Gaussian state of~\cite{Werner01} from the point of view of $k$-extendibility. This bipartite Gaussian state of a system composed of $2+2$ modes is particularly interesting because it can be shown to be $2$-unextendible yet to obey~\eqref{simplified k ext necessary 2} for all $k$. It thus demonstrates that condition~\eqref{simplified k ext necessary 2} is no longer equivalent to~\eqref{k ext necessary 2} when the local subsystems consist of at least two modes each. We start by recalling that its QCM is given by~\cite[Eq.~(9)]{Werner01}
\begin{equation}
\gamma_{AB} = \begin{pmatrix}
 2 & 0 & 0 & 0 & 1 & 0 & 0 & 0 \\
 0 & 1 & 0 & 0 & 0 & 0 & 0 & -1 \\
 0 & 0 & 2 & 0 & 0 & 0 & -1 & 0 \\
 0 & 0 & 0 & 1 & 0 & -1 & 0 & 0 \\
 1 & 0 & 0 & 0 & 2 & 0 & 0 & 0 \\
 0 & 0 & 0 & -1 & 0 & 4 & 0 & 0 \\
 0 & 0 & -1 & 0 & 0 & 0 & 2 & 0 \\
 0 & -1 & 0 & 0 & 0 & 0 & 0 & 4 \\
\end{pmatrix} .
\label{WW state}
\end{equation}
The above expression in understood to pertain to the following ordering of the four pairs of canonical operators: $x_{A,1},p_{A,1},x_{A,2},p_{A,2},x_{B,1},p_{B,1},x_{B,2},p_{B,2}$. We start with a simple result that connects the simplified condition in~\eqref{simplified k ext necessary 2} with the positive partial transposition (PPT) criterion~\cite{PeresPPT}.

\begin{lemma} \label{lemma PPT}
A QCM $V_{AB}$ obeys~\eqref{simplified k ext necessary 2} if and only if the corresponding Gaussian state is PPT, i.e., if and only if $V_{AB}\geq (\I\Omega_A) \oplus (-\I \Omega_B)$.
\end{lemma}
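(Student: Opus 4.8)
The plan is to show that, for a bona fide QCM $V_{AB}$, the condition in~\eqref{simplified k ext necessary 2} --- which is to be read as holding for \emph{every} $k\geq 2$ --- is equivalent to the single matrix inequality $V_{AB}\geq(\I\Omega_A)\oplus(-\I\Omega_B)$, the latter being exactly the bona fide condition for the partially transposed state $\rho_{AB}^{T_B}$ (partial transposition on $B$ conjugates the QCM by the mirror reflection on $B$, which amounts to the sign flip $\Omega_B\mapsto-\Omega_B$). Both implications rest only on the facts that the positive semidefinite order is stable under convex combinations and under limits, so no Schur-complement regularization is needed.

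For the ``if'' direction I would use the two inequalities available for free: the bona fide condition $V_{AB}\geq(\I\Omega_A)\oplus(\I\Omega_B)$ (since $V_{AB}$ is a QCM) and the PPT condition $V_{AB}\geq(\I\Omega_A)\oplus(-\I\Omega_B)$. Taking their convex combination with weights $\tfrac1k$ and $1-\tfrac1k$ --- both in $[0,1]$ when $k\geq 2$ --- gives
\[
V_{AB}\ \geq\ \tfrac1k\big((\I\Omega_A)\oplus(\I\Omega_B)\big)+\big(1-\tfrac1k\big)\big((\I\Omega_A)\oplus(-\I\Omega_B)\big)\ =\ (\I\Omega_A)\oplus\Big(-\big(1-\tfrac2k\big)\I\Omega_B\Big),
\]
which is precisely~\eqref{simplified k ext necessary 2} for that $k$; since $k$ is arbitrary, we are done. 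For the ``only if'' direction I would let $k\to\infty$: the matrix $(\I\Omega_A)\oplus(-(1-\tfrac2k)\I\Omega_B)$ converges in norm to $(\I\Omega_A)\oplus(-\I\Omega_B)$, and closedness of the cone of positive semidefinite matrices upgrades the family of inequalities~\eqref{simplified k ext necessary 2} to the limiting inequality $V_{AB}\geq(\I\Omega_A)\oplus(-\I\Omega_B)$, i.e.\ PPT.

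I do not expect a real obstacle --- the authors themselves describe this as a ``simple result''. The only subtlety worth flagging is the quantifier: ``obeys~\eqref{simplified k ext necessary 2}'' must be interpreted as holding for all $k\geq 2$, because for any single finite $k$ the condition is strictly weaker than PPT (for $k=2$ it is vacuous, reducing to $V_{AB}\geq(\I\Omega_A)\oplus 0$), and conversely PPT alone does not give any finite-$k$ instance without also invoking the bona fide condition --- which is exactly why the convex-combination step must use both hypotheses together.
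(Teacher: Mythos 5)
Your argument is correct and, for the substantive direction, identical to the paper's: the paper's proof of this lemma consists precisely of your limit step, namely that positivity of $V_{AB}-(\I\Omega_A)\oplus\left(-\left(1-\frac{2}{k}\right)\I\Omega_B\right)$ for all $k$ together with closedness of the positive semidefinite cone yields $V_{AB}\geq(\I\Omega_A)\oplus(-\I\Omega_B)$ in the limit $k\to\infty$. The converse, which the paper leaves implicit (even though it is the direction actually invoked for the Werner--Wolf state, which is PPT by construction), is exactly your convex combination $\frac{1}{k}\left((\I\Omega_A)\oplus(\I\Omega_B)\right)+\left(1-\frac{1}{k}\right)\left((\I\Omega_A)\oplus(-\I\Omega_B)\right)$ of the bona fide and PPT conditions, so your write-up is if anything more complete, and your reading of the quantifier (``for all $k\geq 2$'') matches the context in which the lemma is used. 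One small correction to your closing aside: the $k=2$ instance $V_{AB}\geq(\I\Omega_A)\oplus 0$ is not vacuous --- it is the nontrivial $B\to A$ Gaussian unsteerability condition discussed in the main text and is not implied by $V_{AB}\geq\I\Omega_{AB}$ alone --- but this does not affect your proof, and your main point there (each finite-$k$ condition is strictly weaker than PPT) stands.
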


\begin{proof}
The matrices $V_{AB} - (\I\Omega_A) \oplus \left( - \left(1-\frac2k\right) \I \Omega_B \right)$ are all positive semidefinite by hypothesis. Since positive semidefinite matrices form a closed set, the limit
\begin{equation}
\lim_{k\to\infty} \left\{V_{AB} - (\I\Omega_A) \oplus \left( - \left(1-\frac2k\right) \I \Omega_B \right) \right\} = V_{AB} - (\I \Omega_A)\oplus (-\I \Omega_B)   
\end{equation}
is also positive semidefinite. This is the same as saying that the Gaussian state with QCM $V_{AB}$ is PPT, as shown in~\cite[Eq.~(10)]{Simon00}.
\end{proof}

Thanks to Lemma~\ref{lemma PPT} and leveraging the fact that the Werner--Wolf state is PPT by construction, we know that it obeys~\eqref{simplified k ext necessary 2} for all $k$. However, we now proceed to show that it is not even two-extendible. Let us first establish some technical lemmata.

\begin{lemma} \label{geometric mean classical state lemma}
	Let $\Delta\geq \I\Omega$ be a QCM. Then $\Delta\# (\Omega \Delta \Omega^{\t})\geq \mathds{1}$, where 
	\begin{equation}\label{geometric mean}
		A\#B \coloneqq A^{1/2} \left( A^{-1/2}BA^{-1/2}\right)^{1/2} A^{1/2}
	\end{equation}
	denotes the matrix geometric mean~\cite{geometric-mean, ando79}.
\end{lemma}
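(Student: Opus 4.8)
The plan is to reduce the inequality to the case of pure Gaussian states, where it holds with equality, and then propagate it to all QCMs by monotonicity of the geometric mean. Since $\Delta\geq\I\Omega$, the Williamson decomposition~\eqref{V Williamson} lets us write $\Delta=S\big(\bigoplus_{j=1}^{n}\nu_j\mathds{1}_2\big)S^{\t}$ for some symplectic $S$ and symplectic eigenvalues $\nu_j\geq 1$; in particular $\Delta>0$, and the same is true of the congruent matrix $\Omega\Delta\Omega^{\t}$, so the geometric mean $\Delta\#(\Omega\Delta\Omega^{\t})$ is well defined. Set $\gamma\coloneqq SS^{\t}$, which is the QCM of a pure Gaussian state. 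Then $\Delta-\gamma=S\big(\bigoplus_j(\nu_j-1)\mathds{1}_2\big)S^{\t}\geq 0$, so $0<\gamma\leq\Delta$; since congruence preserves the L\"owner order, this also gives $\Omega\gamma\Omega^{\t}\leq\Omega\Delta\Omega^{\t}$.

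First I would record the elementary identity $\Omega\gamma\Omega^{\t}=\gamma^{-1}$, valid for every pure QCM $\gamma=SS^{\t}$. Indeed, writing the symplectic condition in its equivalent form $S^{\t}\Omega S=\Omega$ and using $\Omega^{\t}=-\Omega=\Omega^{-1}$, one computes $\gamma\,\Omega\,\gamma\,\Omega^{\t}=S(S^{\t}\Omega S)S^{\t}\Omega^{\t}=S\Omega S^{\t}\Omega^{\t}=\Omega\Omega^{\t}=\mathds{1}$, which is exactly $\Omega\gamma\Omega^{\t}=\gamma^{-1}$. Substituting this into the definition~\eqref{geometric mean} yields $\gamma\#(\Omega\gamma\Omega^{\t})=\gamma\#\gamma^{-1}=\gamma^{1/2}\big(\gamma^{-1/2}\gamma^{-1}\gamma^{-1/2}\big)^{1/2}\gamma^{1/2}=\gamma^{1/2}\gamma^{-1}\gamma^{1/2}=\mathds{1}$. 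Thus the bound in the lemma is saturated by pure Gaussian states.

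To conclude I would invoke monotonicity of the matrix geometric mean: $0<A\leq A'$ and $0<B\leq B'$ imply $A\#B\leq A'\#B'$ (see~\cite{ando79,geometric-mean}). Applying it with $A=\gamma\leq\Delta=A'$ and $B=\Omega\gamma\Omega^{\t}\leq\Omega\Delta\Omega^{\t}=B'$ gives $\mathds{1}=\gamma\#(\Omega\gamma\Omega^{\t})\leq\Delta\#(\Omega\Delta\Omega^{\t})$, as claimed. Equivalently, one may bypass monotonicity of $\#$: from $\gamma\leq\Delta$ one gets $\Delta^{-1}\leq\gamma^{-1}=\Omega\gamma\Omega^{\t}\leq\Omega\Delta\Omega^{\t}$, so $\left(\begin{smallmatrix}\Delta & \mathds{1}\\ \mathds{1} & \Omega\Delta\Omega^{\t}\end{smallmatrix}\right)\geq 0$ by the Schur complement criterion~\eqref{eq-Schur_complement}, and since $\Delta\#(\Omega\Delta\Omega^{\t})$ is the L\"owner-largest Hermitian $Z$ with $\left(\begin{smallmatrix}\Delta & Z\\ Z & \Omega\Delta\Omega^{\t}\end{smallmatrix}\right)\geq 0$, it dominates $Z=\mathds{1}$. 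None of the steps is genuinely difficult; the only points needing care are the sign bookkeeping with $\Omega^{\t}=-\Omega=\Omega^{-1}$ in the pure-state identity and citing geometric-mean monotonicity in the correct form.
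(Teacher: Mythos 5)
Your proof is correct and follows essentially the same route as the paper: reduce to a pure QCM via monotonicity of the geometric mean (you simply make the reduction explicit by taking the pure QCM $\gamma=SS^{\t}\leq\Delta$ from the Williamson form), and then use the symplectic identity $\Omega\gamma\Omega^{\t}=\gamma^{-1}$ to obtain $\gamma\#\gamma^{-1}=\mathds{1}$. Your alternative ending, which replaces monotonicity by the characterization of $A\#B$ as the L\"owner-largest Hermitian $Z$ with $\left(\begin{smallmatrix} A & Z \\ Z & B \end{smallmatrix}\right)\geq 0$, is also valid.
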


\begin{proof}
	Since any QCM $\Delta$ is lower bounded by the QCM of some pure state, and the matrix geometric mean is monotonic in both entries, we can freely assume that $\Delta$ is the QCM of a pure state, i.e., that it is a symplectic matrix. This means that $\Delta\Omega \Delta = \Omega$ (remember that $\Delta=\Delta^{\t}$), which we can alternatively write as $\Omega \Delta \Omega^{\t} = \Delta^{-1}$. Then it is straightforward to see that $\Delta\# (\Omega \Delta \Omega^{\t}) = \Delta\# \Delta ^{-1}=\mathds{1}$, which completes the proof.
\end{proof}

The above result can be interpreted by noticing that the condition $\Delta\# (\Omega \Delta \Omega^{\t})\geq\mathds{1}$ amounts to saying that the Gaussian state with QCM $\Delta\# (\Omega \Delta \Omega^{\t})$ is a convex combination of coherent states, i.e., it is a classical state. Thus, even if $\Delta$ represents a highly squeezed state, taking the above geometric mean ``averages out'' all the squeezing.

\begin{lemma} \label{product inequality lemma}
	For all $2n\times 2n$ QCMs $\Delta\geq \I\Omega$ and all vectors $\ket{v}\in \mathds{C}^{2n}$, we have that
	\begin{equation}\label{product inequality}
		\bra{v}\Delta\ket{v} \bra{v}\Omega\Delta\Omega^{\t}\ket{v} \geq 1.
	\end{equation}
\end{lemma}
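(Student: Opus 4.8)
The plan is to reduce the claimed scalar estimate to the matrix inequality $\Delta\#(\Omega\Delta\Omega^{\t})\geq\mathds{1}$ already obtained in Lemma~\ref{geometric mean classical state lemma}, using two properties of the matrix geometric mean: that it dominates $\mathds{1}$ in the Loewner order here, and that it is outperformed by the scalar geometric mean of the two diagonal blocks along every ray. Concretely I set $A\coloneqq\Delta$ and $B\coloneqq\Omega\Delta\Omega^{\t}$; both are positive definite, since $\Delta\geq\I\Omega>0$ and $\Omega$ is invertible, so $A\#B$ is well defined via~\eqref{geometric mean}.

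First I would record the block-matrix characterization
\[
	\begin{pmatrix} A & A\#B \\ A\#B & B\end{pmatrix}\geq 0 .
\]
This follows from the Schur-complement criterion~\eqref{eq-Schur_complement}: $A>0$, and the Schur complement of the above matrix with respect to $A$ is $B-(A\#B)A^{-1}(A\#B)$, which vanishes because $(A\#B)A^{-1}(A\#B)=B$. The latter identity is immediate from~\eqref{geometric mean}: writing $C\coloneqq A^{-1/2}BA^{-1/2}$ and using $A^{1/2}A^{-1}A^{1/2}=\mathds{1}$, one gets $(A\#B)A^{-1}(A\#B)=A^{1/2}C^{1/2}A^{1/2}A^{-1}A^{1/2}C^{1/2}A^{1/2}=A^{1/2}CA^{1/2}=B$.

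Next I would extract from this the pointwise bound $\bra{v}(A\#B)\ket{v}^{2}\leq\bra{v}A\ket{v}\,\bra{v}B\ket{v}$ for every $\ket{v}\in\mathds{C}^{2n}$. Testing the positive semidefinite block matrix on the vectors $(t\ket{v},\ket{v})^{\t}$ with $t\in\mathds{R}$ yields the nonnegative quadratic $t^{2}\bra{v}A\ket{v}+2t\,\bra{v}(A\#B)\ket{v}+\bra{v}B\ket{v}\geq0$ --- note that $\bra{v}(A\#B)\ket{v}\in\mathds{R}$ since $A\#B$ is Hermitian --- and nonnegativity of its discriminant is exactly the desired inequality.

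Finally, combining these with Lemma~\ref{geometric mean classical state lemma}, which gives $A\#B=\Delta\#(\Omega\Delta\Omega^{\t})\geq\mathds{1}$ and hence $\braket{v|v}\leq\bra{v}(A\#B)\ket{v}$, I would square (both sides being nonnegative) and apply the pointwise bound to obtain $\braket{v|v}^{2}\leq\bra{v}(A\#B)\ket{v}^{2}\leq\bra{v}\Delta\ket{v}\,\bra{v}\Omega\Delta\Omega^{\t}\ket{v}$, which is~\eqref{product inequality} for unit vectors. I do not expect a genuine obstacle here; the only point requiring care is the identity $(A\#B)A^{-1}(A\#B)=B$ underpinning the block inequality. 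It is worth stressing, though, that the naive route --- applying the Cauchy--Schwarz inequality for the $\Delta$-weighted inner product to $\ket{v}$ and $\Omega^{\t}\ket{v}$ --- is too lossy: it already fails for $\Delta=\mathds{1}$ and real $\ket{v}$, so routing through the geometric mean is essential.
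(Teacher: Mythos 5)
Your argument is correct, and it follows the same route as the paper: both proofs reduce \eqref{product inequality} to Lemma~\ref{geometric mean classical state lemma} combined with the fact that the positive functional $A\mapsto\bra{v}A\ket{v}$ is dominated, on a geometric mean, by the scalar geometric mean of its values. The only difference is that the paper gets this last fact by citing Ando's theorem on positive linear maps and operator means (\cite[Theorem~3]{ando79}), whereas you rederive it for this specific functional via the block-matrix characterization $\bigl(\begin{smallmatrix} A & A\#B \\ A\#B & B\end{smallmatrix}\bigr)\geq 0$ (justified by the identity $(A\#B)A^{-1}(A\#B)=B$) and a discriminant argument; this makes your proof self-contained at the cost of a few extra lines, and your closing remark that a naive Cauchy--Schwarz bound with the $\Delta$-weighted inner product cannot work is accurate. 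Two trivial points: the quadratic $t^{2}\bra{v}A\ket{v}+2t\bra{v}(A\#B)\ket{v}+\bra{v}B\ket{v}\geq 0$ for all real $t$ forces the discriminant to be \emph{nonpositive} (you wrote ``nonnegativity''), which is what yields $\bra{v}(A\#B)\ket{v}^{2}\leq\bra{v}A\ket{v}\bra{v}B\ket{v}$; and, as you note, the conclusion is \eqref{product inequality} for normalized $\ket{v}$, which is the reading intended in the paper (the lemma is applied there only to a unit vector, and the paper's own proof likewise produces $\braket{v|v}^{2}$ on the right-hand side).
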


\begin{proof}
	The real-valued map $A\mapsto \bra{v}A\ket{v}$ is positive, i.e., it is nonnegative on positive semidefinite matrices. The claim follows from~\cite[Theorem~3]{ando79} and Lemma~\ref{geometric mean classical state lemma}. 
\end{proof}

We are now ready to prove the main result of this section.

\begin{proposition}
    The Werner--Wolf Gaussian state with QCM given by~\eqref{WW state} is not $2$-extendible on the $B$ system.
\end{proposition}

\begin{proof}
Since it can be readily verified that $\gamma_A>\I\Omega_A$ (more precisely, the symplectic spectrum of $\gamma_{A}$ is $\{\sqrt2, \sqrt2\}$), we can rephrase~\eqref{k ext necessary 2} as~\eqref{k ext necessary 1} without the need to consider generalized Schur complements. Computing the r.h.s.\ of~\eqref{k ext necessary 1} for $V_{AB}=\gamma_{AB}$ and for $k=2$ yields
\begin{equation}
H_B \coloneqq 2 \left( \gamma_B - X_\gamma^\t (\gamma_A-\I \Omega_A)^{-1} X_\gamma \right) - \I\Omega_B = \begin{pmatrix} 2 & -\I & 0 & 2 \I \\ \I & 4 & 2 \I & 0 \\ 0 & -2 \I & 2 & -\I \\ -2 \I & 0 & \I & 4 \end{pmatrix} .
\label{RHS for WW state}
\end{equation}
We want to show that there does not exist a matrix $\Delta_B$ such that $\Delta_B\leq H_B$. It is straightforward to see that the normalized vector
\begin{equation}
	\ket{v}\coloneqq \frac{1}{\sqrt{ 2\left(6+\sqrt6\right)}} \begin{pmatrix} -\left(1+\sqrt{6}\right) \I & -1 & 0 & 2 \end{pmatrix}^{\t}
\end{equation}
is an eigenvector of $H_B$ with corresponding eigenvalue $3-\sqrt{6}$. If there existed a real $\Delta_B$ such that $\I \Omega_B\leq \Delta_B\leq H_B$ then by Lemma~\ref{product inequality lemma} we would obtain
\begin{equation}
	1\leq \braket{v|\Delta_B|v} \braket{v|\Omega_B \Delta_B\Omega_B^{\t}|v} \leq \braket{v | H_B|v}\braket{v|\Omega_B H_B \Omega_B^{\t}|v} = \left(3-\sqrt{6}\right) \frac{10}{\left(1+\sqrt{6}\right)^2} < 0.463,
\end{equation}
which is a contradiction. Hence, $\gamma_{AB}$ does not satisfy~\eqref{k ext necessary 2} for any $\Delta_B\geq \I\Omega_B$, implying that the Werner--Wolf state is not $2$-extendible.
\end{proof}

\section{Extendibility of Gaussian channels}

We now provide further details of the $k$-extendibility of single-sender, single-receiver Gaussian channels. By such a $k$-extendible channel, as stated in the main text, we mean that it can be implemented as a broadcast channel from a single sender to $k$ receivers, such that the reduced channel from the sender to any one of the receivers is the same as the original channel. Recall that a Gaussian channel $\mathcal{N}$ with $n$ input and $m$ output modes is uniquely characterized by a pair of real matrices $X,Y$, where $X$ is $2m\times 2n$ and $Y$ is $2m\times 2m$, and a real vector $\delta\in \mathbb{R}^{2m}$, such that $Y+\I \Omega \geq \I X \Omega X^{\intercal}$. Since a Gaussian channel sends Gaussian states to Gaussian states, its action can be described directly at the level of the mean vector  and covariance matrix:
\begin{equation}
\mathcal{N}:\ \left\{ \begin{array}[c]{lcl} V & \longmapsto & XVX^{\intercal}+Y\, , \\ s & \longmapsto & Xs+\delta\, . \end{array} \right.
\end{equation}
In what follows, we  set $\delta=0$ without loss of generality.

Let $A,A'$ be two isomorphic quantum systems, possibly infinite-dimensional. It is well-known that any pure state $\ket{\psi}_{AA'}$ with invertible marginals defines a Choi--Jamio\l kowski isomorphism between the set of quantum channels $\mathcal{N}_{A\to B}$ and the set of bipartite states $\rho_{AB}$ on $AB$ such that $\rho_A=\psi_A\coloneqq \Tr_{A'}\ketbra{\psi}_{AA'}$ (see~\cite{Holevo-CJ} or~\cite{Holevo-CJ-arXiv}). Denoting the Schmidt decomposition of $\ket{\psi}_{AA'}$ by
\begin{equation}
    \ket{\psi}_{AA'} = \sum_{i} \lambda_i^{1/2} \ket{e_i}_A\otimes \ket{f_i}_{A'}\, ,
\end{equation}
the Choi--Jamio\l kowski isomorphism is realized by defining~\cite[Eq.~(6)]{Holevo-CJ-arXiv}
\begin{equation}
    \rho_{AB}^{\mathcal{N}} \coloneqq \left( I_A\otimes \mathcal{N}\right)(\ketbra{\psi}_{AA'})\, . \label{CJ 1}
\end{equation}
Conversely, every state $\rho_{AB}$ such that $\rho_A = \psi_A = \Tr_{A'}\ketbra{\psi}_{AA'}$ identifies a quantum channel $\mathcal{N}_{A\to B}$ via the formulae 
\begin{equation}
    \mathcal{N}(\ketbraa{e_i}{e_j}_A) \coloneqq \frac{1}{\sqrt{\lambda_i \lambda_i}} \Tr_A\left[ \left( \ketbraa{e_j}{e_i}_A\otimes \mathds{1}_B \right) \rho_{AB}\right] \label{CJ 2}
\end{equation}

With this in mind, it is not difficult to realize that a channel $\mathcal{N}_{A\to B}$ is $k$-extendible if and only any (and hence all) of its Choi states $\rho_{AB}^{\mathcal{N}}$ is $k$-extendible on $B$. On the one hand, if $\widetilde{\mathcal{N}}_{A\to B_1\ldots B_k}$ is a $k$-extension of $\mathcal{N}_{A\to B}$, then clearly the corresponding Choi state $\rho_{AB_1\ldots B_k}^{\widetilde{\mathcal{N}}}$ is a $k$-extension of $\rho_{AB}^{\mathcal{N}}$. On the other hand, if $\sigma_{AB_1\ldots B_k}$ is a $k$-extension of $\rho_{AB}^{\mathcal{N}}$, due to the identity $\Tr_{B_1\ldots B_k} [\sigma_{AB_1\ldots B_k}] = \Tr_B \rho_{AB}^{\mathcal{N}} = \psi_A$ we see that $\sigma_{A B_1\ldots B_k}$ is the Choi state of a legitimate quantum channel $\widetilde{\mathcal{N}}_{A\to B_1\ldots B_k}$. By using~\eqref{CJ 2}, it is not difficult to verify that this is indeed a $k$-extension of $\mathcal{N}_{A\to B}$. 

\begin{remark}
If a Gaussian channel is $k$-extendible, then $k$ instances of it can be implemented by the following procedure, related to the approach of~\cite{nogo3,Wolf2007,NFC09}: (1)~prepare a two-mode squeezed vacuum state, (2)~send one share of it through the $k$-extension of the channel, (3)~perform the usual bosonic continuous-variable teleportation protocol~\cite{prl1998braunstein} on the channel input state and the share of the entangled resource not sent into the channel, (4)~broadcast the Bell measurement outcomes to all $k$ receivers, who then correct their local systems. In the limit as the squeezing strength goes to infinity, this simulation of the $k$-extension of the channel converges to the original one with respect to the strong topology.
\end{remark}

The following result is a consequence of the above discussion:

\begin{corollary}
A Gaussian channel $V\mapsto XVX^\intercal + Y$ from $n$ to $m$ modes is $k$-extendible in the sense of~\cite{Pankowski2013, Kaur2018} if and only if there exists a $2m\times 2m$ real matrix $\Delta$ such that
\begin{equation}
\I\Omega\leq \Delta \leq \frac{k}{k-1} \left( Y + \I X \Omega X^\intercal \right) - \frac{1}{k-1}\, \I\Omega\, . \label{k ext Choi state}
\end{equation}
When $m=1$, this is equivalent to
\begin{equation}
Y + \I X\Omega X^\intercal + \left( 1-\frac2k\right) \I \Omega \geq 0\, . \label{k ext Choi state_2}
\end{equation}
If also $n=1=m$, then a simplified equivalent condition that takes into accounts also the complete positivity requirement reads
\begin{equation}
\sqrt{\det Y} \geq 1 - \frac1k + \left| \det X - \frac1k\right| .
\label{eq:single-mode-k-ext-condition}
\end{equation}
\end{corollary}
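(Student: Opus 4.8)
The plan is to reduce the whole question to two scalar inequalities on $2\times 2$ matrices. We start from the single-output-mode criterion $Y+\I X\Omega X^\intercal+\left(1-\frac2k\right)\I\Omega\geq 0$ of~\eqref{k ext Choi state_2}, keeping in mind that for a genuine Gaussian channel the complete-positivity constraint $Y+\I\Omega\geq \I X\Omega X^\intercal$ is also in force. The decisive simplification available when $n=m=1$ is the elementary identity $X\Omega X^\intercal=(\det X)\,\Omega$, valid for every real $2\times 2$ matrix $X$ and verified by direct computation. Substituting it, \eqref{k ext Choi state_2} reads $Y+\left(\det X+1-\frac2k\right)\I\Omega\geq 0$, while the CP constraint reads $Y+(1-\det X)\,\I\Omega\geq 0$; in both, the coefficient multiplying $\I\Omega$ is just a real number.

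Next I would invoke the standard positivity test for $2\times 2$ real symmetric matrices: for real $t$ one has $Y+t\,\I\Omega\geq 0$ if and only if $Y\geq 0$ and $\det Y\geq t^2$, because $\det(Y+t\,\I\Omega)=\det Y-t^2$ and $\Tr(Y+t\,\I\Omega)=\Tr Y$, and a real symmetric $Y$ with $\Tr Y\geq 0$ and $\det Y\geq t^2\geq 0$ is automatically positive semidefinite. (Here $Y\geq 0$ is anyway a consequence of CP alone, being what the Hermitian inequality $Y+\I\Omega\geq \I X\Omega X^\intercal$ gives on real test vectors, on which $\I\Omega$ and $\I X\Omega X^\intercal$ contribute nothing.) Applying the test to the two conditions above, $k$-extendibility of the channel becomes equivalent to $Y\geq 0$ together with $\det Y\geq (1-\det X)^2$ and $\det Y\geq \left(1+\det X-\frac2k\right)^2$ — equivalently, $Y\geq 0$ and $\sqrt{\det Y}\geq \max\left\{\,|1-\det X|,\ \left|1+\det X-\tfrac2k\right|\,\right\}$.

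It then remains only to put that maximum in closed form. Setting $u\coloneqq 1-\frac1k$ (nonnegative for $k\geq 2$) and $v\coloneqq \det X-\frac1k$, one has $1-\det X=u-v$ and $1+\det X-\frac2k=u+v$, so the bound reads $\sqrt{\det Y}\geq \max\{|u-v|,|u+v|\}$. Since $\max\{(u+v)^2,(u-v)^2\}=u^2+2|uv|+v^2=(|u|+|v|)^2$, this maximum equals $|u|+|v|=\left(1-\frac1k\right)+\left|\det X-\frac1k\right|$, so (the raw positivity $Y\geq 0$ being the separate, standard requirement on a channel's noise matrix) we arrive at~\eqref{eq:single-mode-k-ext-condition}. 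There is no real obstacle here beyond careful bookkeeping; the two points that deserve a word of care are that one must intersect~\eqref{k ext Choi state_2} with the channel's CP constraint — it is precisely this intersection that the single inequality~\eqref{eq:single-mode-k-ext-condition} records — and the value $k=2$, where $1-\frac2k=0$ but every step above goes through verbatim (and for the Choi state of a two-mode squeezed vacuum $V_A-\I\Omega_A$ is strictly positive, so no generalized Schur complements enter).
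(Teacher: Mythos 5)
There is a genuine gap: your argument proves only the last third of the corollary. You begin by ``starting from the single-output-mode criterion'' $Y+\I X\Omega X^\intercal+\left(1-\frac2k\right)\I\Omega\geq 0$, but that criterion is itself one of the claims to be established, and the general condition $\I\Omega\leq \Delta \leq \frac{k}{k-1}\left(Y+\I X\Omega X^\intercal\right)-\frac{1}{k-1}\I\Omega$ for $n$-to-$m$-mode channels is never addressed at all. The substance of the paper's proof lies precisely there: one forms the Choi state $\rho_{AB}^{\mathcal{N}}=(\mathrm{id}\otimes\mathcal{N})(\ketbra{\psi_r}^{\otimes n})$, uses the fact that the channel is $k$-extendible if and only if this state is $k$-extendible on $B$ (which itself needs the Choi--Jamio\l kowski correspondence for infinite-dimensional systems, discussed before the corollary), writes down its QCM
\begin{equation*}
V_{AB}^{\mathcal{N}}(r)=\begin{pmatrix} \cosh(2r)\mathds{1}_{2n} & \sinh(2r)\Sigma_n X^\intercal \\ \sinh(2r)X\Sigma_n & \cosh(2r)XX^{\intercal}+Y \end{pmatrix},
\end{equation*}
and then computes the Schur complement appearing in the state-level criterion of Theorem~\ref{G k ext thm}, finding that $V_B^{\mathcal{N}}(r)-\sinh^2(2r)\,X\Sigma_n(\cosh(2r)\mathds{1}-\I\Omega_n)^{-1}\Sigma_n X^{\intercal}=Y+\I X\Omega X^{\intercal}$, remarkably independent of $r$. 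Only this calculation converts the covariance-matrix conditions of Theorem~\ref{G k ext thm} into the two displayed channel conditions; without it, the statement you take as your starting point is unsupported.

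The part you do prove --- passing from the $m=1$ condition, intersected with complete positivity, to the closed-form inequality $\sqrt{\det Y}\geq 1-\frac1k+\left|\det X-\frac1k\right|$ --- is correct and matches the paper's route (the identity $X\Omega X^\intercal=(\det X)\Omega$, the determinant test for $Y+t\,\I\Omega\geq 0$, and the observation $\max\{|u-v|,|u+v|\}=|u|+|v|$ with $u=1-\frac1k$, $v=\det X-\frac1k$). Indeed you are somewhat more explicit than the paper about why $Y\geq 0$ follows from CP alone and about the fact that the single inequality encodes the intersection of the extendibility and CP constraints. But as a proof of the corollary as stated, the Choi-state reduction and the Schur-complement computation establishing both displayed matrix conditions must be supplied; they are the missing core of the argument.
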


\begin{proof}
Denote the channel under consideration by $\mathcal{N}$. Then, define its Choi state by $\rho_{AB}^{\mathcal{N}}\coloneqq (\text{id}_A\otimes\mathcal{N}_{A'\to B})(\ket{\psi_r}\bra{\psi_r}^{\otimes n})$, where $\ket{\psi_r}$ is the two-mode squeezed vacuum given by
\begin{equation}
    \ket{\psi_r}=\frac{1}{\cosh(r)}\sum_{j=0}^{\infty}\tanh(r)^j\ket{j,j}.
    \label{two-mode squeezed vacuum}
\end{equation}
The channel $\mathcal{N}$ is $k$-extendible (by definition) if and only if the state $\rho_{AB}^{\mathcal{N}}$ is $k$-extendible, which is true if and only if there exists $\Delta_B$ such that $\Delta_B\geq \I\Omega_B$ and such that the corresponding covariance matrix $V_{AB}^{\mathcal{N}}(r)$ satisfies~\eqref{k ext necessary 1}, where
\begin{equation}
    V_{AB}^{\mathcal{N}}(r)=\begin{pmatrix} \cosh(2r)\mathds{1}_{2n} & \sinh(2r)\Sigma_n X^\intercal \\ \sinh(2r)X\Sigma_n & \cosh(2r)XX^{\intercal}+Y \end{pmatrix},\qquad\Sigma_n=\begin{pmatrix} 1 & 0 \\ 0 & -1 \end{pmatrix}^{\oplus n}.
\end{equation}
Then,
\begin{equation}
    (V_A^{\mathcal{N}}(r)-\I\Omega_A)^{-1}=(\cosh(2r)\mathds{1}_{2n}-\I\Omega_n)^{-1}=\frac{1}{\sinh^2(2r)}(\cosh(2r)\mathds{1}_{2n}+\I\Omega_n).
\end{equation}
Identifying the block $X$ in~\eqref{k ext necessary 1} with the off-diagonal block $\sinh(2r)\Sigma_n X^{\intercal}$ of $V_{AB}^{\mathcal{N}}(r)$, and using $\Sigma_n\Omega_n\Sigma_n=-\Omega_n$, we get
\begin{align}
    &V_B^{\mathcal{N}}(r)-\sinh(2r)X\Sigma_n(\cosh(2r)\mathds{1}_{2n}-\I\Omega_n)^{-1}\Sigma_n X^{\intercal}\sinh(2r)\notag\\
    &=\cosh(2r)XX^{\intercal}+Y-\sinh(2r)X\Sigma_n(\cosh(2r)\mathds{1}_{2n}-\I\Omega_n)^{-1}\Sigma_n X^{\intercal}\sinh(2r)\\
    &=Y+\I X\Omega X^{\intercal},
\end{align}
which leads to~\eqref{k ext Choi state}.

In the case $m=1$, we apply the same calculations above to the condition in~\eqref{simplified k ext necessary 1} in order to obtain~\eqref{k ext Choi state_2}. Finally,~\eqref{eq:single-mode-k-ext-condition} is derived by applying the usual $2\times 2$ determinant formula $M\Omega M^\t=(\det M) \Omega$ to~\eqref{k ext Choi state_2}. The same result can be obtained by resorting directly to Proposition~\ref{2-mode ext prop}.
\end{proof}

We now illustrate the $k$-extendibility conditions of single-mode Gaussian channels. A classification of all such channels into six different categories has been given in~\cite{Holevo2007} (see also~\cite{H12}). Here we can exploit this classification in order to determine $k$-extendibility of all single-mode Gaussian channels because $k$-extendibility of a single-sender, single-receiver channel is invariant under arbitrary input and output unitaries, and the procedure from~\cite{Holevo2007} exploits input and output Gaussian unitaries in order to arrive at the classification.

The categories of single-mode channels from the classification of~\cite{Holevo2007} are labeled as $A_1$, $A_2$, $B_1$, $B_2$, $C$, and $D$. All channels in classes $A_1$, $A_2$, and $D$ are entanglement breaking, as proved in~\cite{Holevo2008}. Thus, these channels are $k$-extendible for all $k\geq 2$. It thus remains to consider the channels in the classes $B_1$, $B_2$, and $C$. Channels in the class $B_1$ have $X = \mathds{1}$ and $Y=(\mathds{1}-\sigma_Z)/2$. Applying the condition in~\eqref{eq:single-mode-k-ext-condition}, we find, for all $k\geq 2$, that channels in this class are not $k$-extendible. This is consistent with the fact that their unconstrained quantum capacity is infinite~\cite{Holevo2007}. The remaining channels are the most important for applications, as stressed in~\cite[Section~3.5]{HG12} and~\cite[Section~12.6.3]{H12}. Channels in the class $B_2$ are called additive-noise channels, and channels in the class $C$ are either thermal channels or  amplifier channels. By applying~\eqref{eq:single-mode-k-ext-condition}, we find the necessary and sufficient conditions for their $k$-extendibility:
\begin{itemize}

\item The \textbf{thermal channel} of transmissivity $\eta\in(0,1)$ and environment thermal photon number $N_S \geq 0$, defined by $X=\sqrt{\eta} \mathds{1}$ and $Y=(1-\eta)(2N_B +1) \mathds{1}$,  is $k$-extendible if  and only if
\begin{equation}
\eta \leq \frac{N_B + 1/k}{N_B+1}     .
\end{equation}
If the channel is a \textbf{pure-loss} channel with $N_B = 0$, then we see that it is $k$-extendible if and only if $\eta \leq 1/k$. 

\item The \textbf{amplifier} channel of gain $G>1$ and environment thermal photon number $N_B \geq 0$, defined by $X=\sqrt{G} \mathds{1}$ and $Y=(G-1)(2N_B+1) \mathds{1}$, is $k$-extendible if and only if
\begin{equation}
 G \geq \frac{N_B + 1 - 1/k}{N_B}   .  
\end{equation}
 If $N_B = 0$, as is the case for the \textbf{pure-amplifier}  channel, then the channel is not $k$-extendible for all $k\geq 2$ and $G > 1$.

\item The \textbf{additive noise} channel defined by $X=\mathds{1}$ and $Y = \xi \mathds{1}$, with noise parameter $\xi > 0$ is $k$-extendible  for $k\geq 2$ if and only if
\begin{equation}
 \xi \geq 2\left( 1-1/k\right).   
\end{equation}
\end{itemize}

As expected, these conditions for $k$-extendibility of the channels imply the entanglement-breaking conditions from~\cite{Holevo2008} in the limit $k\to \infty$. We also recover the conditions for two-extendibility (antidegradability) from~\cite[Eq.~(4.6)]{CGH2006}, for thermal and amplifier channels. See Fig.~\ref{fig:single_mode_regions} for a plot of the parameter space of the single-mode Gaussian channels.

\begin{figure}
    \centering
    \includegraphics{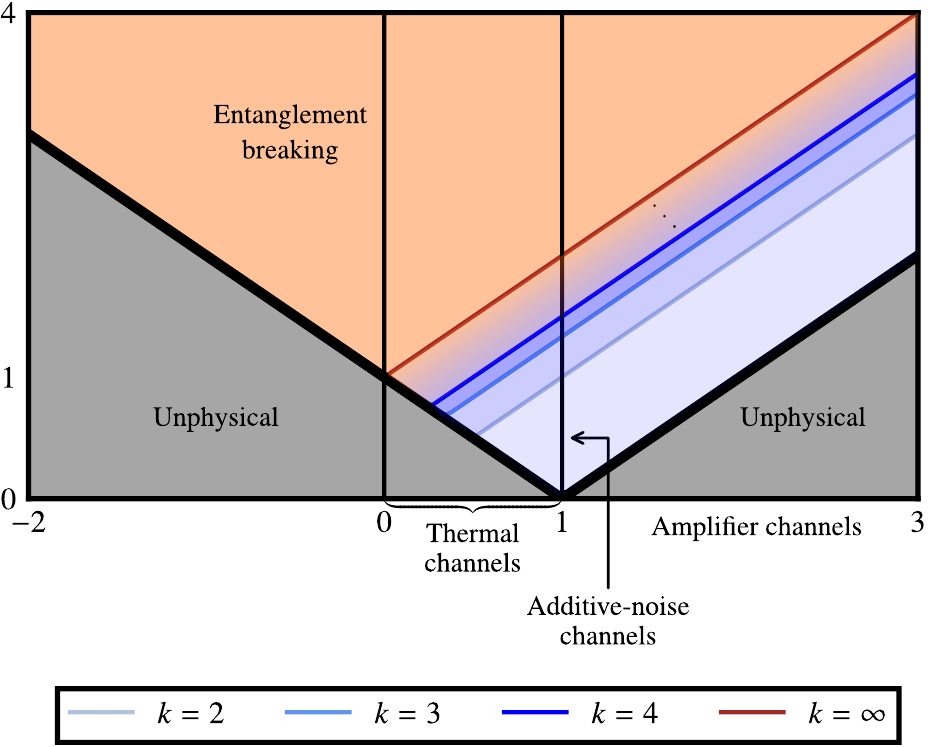}
    \caption{Parameter space of the single-mode Gaussian channels $V\mapsto XVX^{\intercal}+Y$, such that $x\coloneqq\det X$ is on the horizontal axis and $y\coloneqq\sqrt{\det Y}$ is on the vertical axis. All channels occupy the region given by $y\geq |x-1|$, and all entanglement-breaking channels occupy the region given by $y\geq |x|+1$. The $k$-extendibility regions are given from \eqref{eq:single-mode-k-ext-condition} by the inequality $y\geq 1-\frac{1}{k}+\left|x-\frac{1}{k}\right|$.}\label{fig:single_mode_regions}
\end{figure}

The multi-mode \textbf{additive noise} channel defined by $X=\mathds{1}$ and some $Y\geq 0$ is $k$-extendible ($k\geq 2$) if and only if $Y>0$ and $\nu_{\min}(Y)\geq 2\left( 1-1/k\right)$, where $\nu_{\min}$ indicates the minimal symplectic eigenvalue.

\section{Proof of Theorem~\ref{thm-dist_sep_k_ext}}

In this section, we prove Theorem~\ref{thm-dist_sep_k_ext}, which is the statement that any Gaussian state $\rho_{AB}^\G$ of $n=n_A+n_B$ modes that is $k$-extendible satisfies
\begin{align}
	\left\|\,\rho_{AB}^\G-\text{SEP}(A\!:\!B)\,\right\|_1 &\leq\frac{2n}{k} \label{distance k ext_supp} \\
	E_{R,\alpha}\left(\rho_{AB}^\G\right)\leq n \ln\left(1+\frac{\eta_{k,\alpha}}{k-1}\right) &\leq\frac{n\, \eta_{k,\alpha}}{k-1}, \label{E R k ext_supp}
\end{align}
where
\begin{align}
	\left\|\,\rho_{AB}^\G-\text{SEP}(A\!:\!B)\,\right\|_1 &\coloneqq \inf_{\sigma_{AB}\in\text{SEP}(A:B)}\left\|\,\rho_{AB}^\G-\sigma_{AB}\,\right\|_1 \label{distance SEP} \\
	E_{R,\alpha}\left(\rho_{AB}^\G\right) &\coloneqq \inf_{\sigma_{AB}\in\text{SEP}(A:B)}D_\alpha\left(\rho_{AB}^\G\, \big\Vert\,  \sigma_{AB}\right)
\end{align}
are the distances of $\rho_{AB}^\G$ to the set of separable states as measured by the trace norm and the Petz--R\'enyi relative entropy $D_\alpha(\rho\Vert \sigma)\coloneqq \frac{1}{\alpha-1}\ln \Tr[\rho^\alpha \sigma^{1-\alpha}]$, respectively, and
\begin{equation}
\eta_{k,\alpha} = \left\{ \begin{array}{ll} 1 & \text{if $\alpha\leq k+1$,} \\ 2 & \text{otherwise.} \end{array}\right.
\label{eta k alpha}
\end{equation}

The crucial fact that will allow us to derive all these results is the following.

\begin{proposition} \label{separable QCM prop}
	Let $\rho_{AB}^\G$ be a Gaussian state with QCM $V_{AB}$. If it is $k$-extendible on $B$, then the Gaussian state with QCM $\frac{k+1}{k-1}\, V_{AB}$ is separable.
\end{proposition}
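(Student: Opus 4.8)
The plan is to derive the desired separability from two matrix inequalities that are already at our disposal, without ever invoking Schur complements. Throughout I take $k\geq 2$.

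First, since $\rho_{AB}^\G$ is $k$-extendible, the necessary condition of Theorem~\ref{G k ext thm} supplies a genuine quantum covariance matrix $\Delta_B\geq\I\Omega_B$ --- in particular, a \emph{real} matrix --- such that
\begin{equation*}
V_{AB}\geq \I\Omega_A\oplus\left(\left(1-\tfrac1k\right)\Delta_B+\tfrac1k\,\I\Omega_B\right).
\end{equation*}
Second, because $V_{AB}$ is real and symmetric, complex-conjugating the bona fide condition $V_{AB}\geq\I\Omega_{AB}$ yields $V_{AB}\geq-\I\Omega_{AB}=(-\I\Omega_A)\oplus(-\I\Omega_B)$.

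The key step is to form the linear combination $\tfrac{k}{k-1}$ times the first inequality plus $\tfrac1{k-1}$ times the second; both coefficients are positive for $k\geq 2$, so the combination is legitimate. On the left this produces $\tfrac{k+1}{k-1}\,V_{AB}$. On the right, the coefficients are tuned precisely so that (i) the $A$-block collapses to $\I\Omega_A$, because $\tfrac{k}{k-1}-\tfrac1{k-1}=1$, and (ii) the stray $\I\Omega_B$ contributions in the $B$-block cancel, because $\tfrac{k}{k-1}\cdot\tfrac1k-\tfrac1{k-1}=0$, while $\tfrac{k}{k-1}\left(1-\tfrac1k\right)=1$ leaves exactly $\Delta_B$. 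The outcome is
\begin{equation*}
\frac{k+1}{k-1}\,V_{AB}\geq\I\Omega_A\oplus\Delta_B,\qquad \Delta_B\geq\I\Omega_B.
\end{equation*}

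To finish, I would invoke the separability criterion for Gaussian states of~\cite[Theorem~5]{revisited} --- equivalently, the $k\to\infty$ limit of~\eqref{k ext necessary 2} recorded in the main text --- which states that a Gaussian state with quantum covariance matrix $W_{AB}$ is separable if and only if there exists a quantum covariance matrix $\Delta_B\geq\I\Omega_B$ with $W_{AB}\geq\I\Omega_A\oplus\Delta_B$. Applying this with $W_{AB}=\tfrac{k+1}{k-1}V_{AB}$, which is itself a legitimate QCM since $\tfrac{k+1}{k-1}V_{AB}\geq V_{AB}\geq\I\Omega_{AB}$, immediately gives the claim. I do not anticipate a genuine obstacle: the whole argument is a single scalar linear combination of two matrix inequalities, so in particular it needs no separate treatment of the degenerate case where $V_A-\I\Omega_A$ fails to be invertible. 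The only point worth flagging is that the $\Delta_B$ produced by Theorem~\ref{G k ext thm} is a bona fide real QCM, which is exactly what makes it an admissible witness for the separability criterion applied to the real matrix $\tfrac{k+1}{k-1}V_{AB}$.
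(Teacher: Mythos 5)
Your proof is correct and takes essentially the same route as the paper's: both start from the $k$-extendibility condition~\eqref{k ext necessary 2} with a real $\Delta_B\geq\I\Omega_B$, combine it with the complex-conjugated bona fide condition $V_{AB}\geq-\I\Omega_{AB}$ (the paper adds $\tfrac1k$ of the latter and rescales, which is your linear combination up to an overall factor of $\tfrac{k}{k-1}$), and conclude via the separability criterion of~\cite[Theorem~5]{revisited}. There are no gaps.
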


\begin{proof}
	Thanks to Theorem~\ref{G k ext thm}, we know that if $\rho_{AB}^{\G}$ is $k$-extendible on $B$, then~\eqref{k ext necessary 2} is satisfied for some $\Delta_B\geq \I \Omega_B$. Now, write the complex conjugate bona fide condition for $V_{AB}$ as $V_{AB}\geq (-\I\Omega_A)\oplus (-\I\Omega_B)$. Adding $\frac{1}{k}$ of this inequality to~\eqref{k ext necessary 2} yields
	\begin{equation}
		\left(1+\frac1k\right) V_{AB} \geq \left(1-\frac1k\right) \left( \I \Omega_A\oplus \Delta_B\right) ,
	\end{equation}
	i.e.,
	\begin{equation}
		\frac{k+1}{k-1}\, V_{AB} \geq  \I \Omega_A\oplus \Delta_B .
	\end{equation}
	By~\cite[Theorem~5]{revisited}, this is equivalent to the separability of the Gaussian state with QCM $\frac{k+1}{k-1}\, V_{AB}$.
\end{proof}

\begin{proof}[Proof of Theorem~\ref{thm-dist_sep_k_ext}] We start by proving~\eqref{distance k ext_supp}. Drawing inspiration from the techniques in~\cite{Koenig2009}, we construct an ansatz for the separable state $\sigma_{AB}$ to be inserted into~\eqref{distance SEP}: namely, let $\sigma_{AB}=\sigma^\G_{AB}$ be the Gaussian state with the same (vanishing) first moments as $\rho_{AB}^\G$ and with QCM $W_{AB}\coloneqq \frac{k+1}{k-1}V_{AB}$. By Proposition~\ref{separable QCM prop}, we know that $\sigma_{AB}^\G$ is separable. The crucial property of $\sigma_{AB}^\G$, however, is that it \emph{commutes} with $\rho_{AB}^\G$. In fact, since the two QCMs are proportional to each other, they can be brought into Williamson form by the \emph{same} symplectic matrix. Comparing~\eqref{V Williamson} with~\eqref{rho canonical}, we deduce that $\rho_{AB}^\G$ and $\sigma_{AB}^\G$ are simultaneously diagonalizable, which is equivalent to them commuting. Naturally, the symplectic eigenvalues of $W_{AB} $ are $\mu_j = \lambda \nu_j$, where we set $\lambda \coloneqq \frac{k+1}{k-1}$ and $\nu_j$ are the symplectic eigenvalues of $V_{AB}$. We can then write:
\begin{align}
	\left\|\,\rho^\G_{AB} - \mathrm{SEP}\,\right\|_1 &\leq \left\|\,\rho^\G_{AB} - \sigma_{AB}^\G\, \right\|_1 \\
	&= \left\| U_S^\dag \left( \bigotimes\nolimits_j \rho^\G(\nu_j) - \bigotimes\nolimits_j \rho^\G(\lambda \nu_j) \right) U_S \right\|_1 \\
	&= \left\| \bigotimes\nolimits_{j=1}^n \rho^\G(\nu_j) - \bigotimes\nolimits_{j=1}^n \rho^\G(\lambda \nu_j)\right\|_1 \\
	&\textleq{1} \sum_{j=1}^n \left\|\, \rho^\G(\nu_j) - \rho^\G(\lambda \nu_j)\,\right\|_1 \\
	&\textleq{2} n \sup_{\nu_j\geq 1} \left\|\, \rho^\G(\nu_j) - \rho^\G(\lambda \nu_j)\,\right\|_1 \\
	&\texteq{3} 2n\, \frac{\lambda -1}{\lambda+1} \\
	&= \frac{2n}{k}\, .
\end{align}
The above derivation can be justified as follows: step~1 descends from a telescopic inequality of the form
\begin{align}
	& \left\|\,\rho_1\otimes\ldots \otimes \rho_N - \sigma_1\otimes\ldots \otimes \sigma_N\, \right\|_1 \notag\\
	&= \left\|\, (\rho_1 - \sigma_1)\otimes \rho_2\otimes\ldots\otimes \rho_N + \sigma_1 \otimes \left( \rho_2\otimes\ldots \otimes \rho_N - \sigma_2\otimes\ldots \otimes \sigma_N \right)\, \right\|_1 \\
	&\leq \|\,\rho_1-\sigma_1\,\|_1 + \left\|\, \rho_2\otimes\ldots \otimes \rho_N - \sigma_2\otimes\ldots \otimes \sigma_N\, \right\|_1 \\[-1.5ex]
	&\, \; \vdots \\[-1.5ex]
	&\leq \sum_j \|\,\rho_j - \sigma_j\,\|_1\, ;
\end{align}
The inequality in step~2 holds true because all symplectic eigenvalues $\nu_j$ must be at least $1$. As for step~3, we read from~\cite[Eq.~(10)]{Koenig2009} that
\begin{equation}
	\left\|\,\rho^\G(\nu) - \rho^\G(\mu)\,\right\|_1 = \max_{l\in \mathds{N}} 2\left\{ \left( \frac{\mu-1}{\mu+1} \right)^{l+1} - \left(\frac{\nu-1}{\nu+1} \right)^{l+1} \right\} ,
\end{equation}
which for $\mu=\lambda \nu$ and upon maximization over $\nu\geq 1$ yields~\cite{Koenig2009}
\begin{equation}
\sup_{\nu\geq 1} \left\|\,\rho^\G(\nu) - \rho^\G(\lambda \nu)\,\right\|_1 = 2\, \frac{\lambda -1}{\lambda+1} = \frac2k\, .
\end{equation}
This concludes the proof of~\eqref{distance k ext_supp}.

The inequality in~\eqref{E R k ext_supp} can be proved with an analogous calculation. The relative entropy is even better behaved in this context, as it already factorizes over multiple copies; hence, there is no need for the above telescopic inequality. 
Namely, for a $k$-extendible Gaussian state $\rho_{AB}^\G$ with QCM $V_{AB}$ we can denote as usual with $\sigma_{AB}^\G$ the Gaussian state with the same first moments and QCM $W_{AB}\coloneqq \frac{k+1}{k-1}V_{AB}$. Since $\sigma_{AB}^\G$ is separable by Proposition~\ref{separable QCM prop}, we write
\begin{align}
	E_{R,\alpha}\left(\rho^\G_{AB}\right) &\leq D_\alpha\left( \rho^\G_{AB}\,\Vert\, \sigma^\G_{AB}\right) \\
	&= \sum_{j=1}^n D_\alpha\left( \rho^\G(\nu_j)\,\Vert\, \rho^\G(\lambda \nu_j)\right) \\
	&\leq n \sup_{\nu\geq 1} D_\alpha\left( \rho^\G(\nu)\,\Vert\, \rho^\G(\lambda \nu)\right) ,
\end{align}
where as above $\lambda = \frac{k+1}{k-1}$, and $\nu_1,\ldots, \nu_n$ are the symplectic eigenvalues of $V_{AB}$. Now, observe that the Petz--R\'enyi relative entropy $D_\alpha(\rho\|\sigma)$ is a non-decreasing function of $\alpha\geq 0$~\cite[\S~4.4]{TOMAMICHEL}. For commuting states $[\rho,\sigma]=0$, we have that $\lim_{\alpha\to\infty} D_\alpha(\rho\Vert\sigma) = \ln \left\|\sigma^{-1/2} \rho\sigma^{-1/2}\right\|_\infty=D_\infty(\rho\Vert\sigma)$, where the latter quantity is the max-relative entropy~\cite{Datta2009}. Thus, we have to prove exactly two statements:
\begin{align}
\sup_{\nu\geq 1} D_{k+1}\left( \rho^\G(\nu)\,\Vert\, \rho^\G(\lambda \nu)\right) &= \ln \left( \frac{k}{k-1} \right) , \label{claim 1} \\
\sup_{\nu\geq 1} D_{\infty}\left( \rho^\G(\nu)\,\Vert\, \rho^\G(\lambda \nu)\right) &= \ln \left( \frac{k+1}{k-1} \right) . \label{claim 2}
\end{align}

In general, the Petz--R\'enyi relative entropies between Gaussian states can be computed thanks to the formulae found in~\cite{LL-Renyi}. In the present case, our task is made much easier by the fact that the states commute, and hence the quantum Petz--R\'enyi relative entropies reduce to their classical counterparts. Simple calculations using the expression~\eqref{1 mode canonical} reveal that
\begin{align}
D_{k+1}\left( \rho^\G(\nu)\, \big\Vert\, \rho^\G(\lambda\nu) \right) &= \frac{1}{k} \left( \ln 2 - \ln \left( (\nu+1)^{k+1} (\lambda\nu+1)^{-k} - (\nu-1)^{k+1} (\lambda\nu-1)^{-k} \right)\right) , \label{k+1 relative entropy 1 mode canonical} \\
D_\infty\left( \rho^\G(\nu)\, \big\Vert\, \rho^\G(\lambda\nu) \right) &= \ln \left(\frac{\lambda\nu+1}{\nu+1}\right) . \label{max relative entropy 1 mode canonical}
\end{align}

We start by proving~\eqref{claim 1}. Let us define the function
\begin{equation}
f_k(\nu)\coloneqq (\nu+1)^{k+1} (\lambda\nu+1)^{-k} - (\nu-1)^{k+1} (\lambda\nu-1)^{-k} ,
\label{f}
\end{equation}
where $\lambda=\frac{k+1}{k-1}$. We will now show that $f_k$ is monotonically increasing in $\nu\geq 1$ for all $k>1$. In fact, computing its derivative, one obtains that 
\begin{align}
f'_k(\nu) &= \lambda (\nu^2-1) \frac{(\nu+1)^{k-1}}{(\lambda\nu+1)^{k+1}}\left\{ 1 - \left(\frac{\nu-1}{\nu+1}\right)^{k-1}\left( \frac{\lambda\nu+1}{\lambda\nu-1}\right)^{k+1} \right\} \label{f' eq1}\\
&= \lambda (\nu^2-1) \frac{(\nu+1)^{k-1}}{(\lambda\nu+1)^{k+1}}\left\{ 1 - e^{-(k-1) g(\lambda, \nu)} \right\} ,
\label{f' eq2}
\end{align}
where
\begin{equation}
g(\lambda, \nu) \coloneqq \ln \left( \frac{\nu+1}{\nu-1} \right) - \lambda \ln \left( \frac{\lambda\nu+1}{\lambda\nu-1} \right) .
\label{g}
\end{equation}
Now, we claim that $g(\lambda,\nu) > 0$ for all $\lambda>1$ and $\nu\geq 1$. In fact, $g(1,\nu)\equiv 0$, and
\begin{align}
\frac{\partial g(\lambda,\nu)}{\partial \lambda} &= \frac{2\lambda \nu}{\lambda^2\nu^2-1} + \ln\left( \frac{\lambda\nu-1}{\lambda\nu+1}\right) \\
&= \frac{2\lambda \nu}{\lambda^2\nu^2-1} + \ln\left( 1- \frac{2}{\lambda\nu+1}\right) \\
&> 0\, .
\end{align}
Here, the last inequality is a consequence of the elementary relation $\ln (1-x) + \frac{x(2-x)}{2(1-x)} > 0$, valid for all $0<x<1$. This can in turn be proved by using for instance a power series expansion:
\begin{align}
\ln (1-x) &= - \sum_{r=1}^\infty \frac{x^r}{r} \\
&> - x -\sum_{r=2}^\infty \frac{x^r}{2} \\
&= - x - \frac{x^2}{2} \sum_{r=0}^{\infty} x^r \\
&= -x - \frac{x^2}{2} \frac{1}{1-x} \\
&= - \frac{x(2-x)}{2(1-x)}\, .
\end{align}
We have thus shown that $g(\lambda,\nu) > 0$ for all $\lambda>1$ and all $\nu\geq 1$. Going back to~\eqref{f' eq2}, this implies that $f'_k(\nu) > 0$ for $\nu>1$ and hence that $f_k$ is monotonically increasing whenever $k>1$. Via~\eqref{k+1 relative entropy 1 mode canonical}, this amounts to saying that $D_{k+1}\left( \rho^\G(\nu)\, \big\Vert\, \rho^\G(\lambda\nu) \right)$ is decreasing in $\nu\geq 1$, in turn entailing that
\begin{equation}
\sup_{\nu\geq 1} D_{k+1}\left( \rho^\G(\nu)\, \big\Vert\, \rho^\G(\lambda\nu) \right) = D_{k+1}\left( \rho^\G(1)\, \big\Vert\, \rho^\G(\lambda) \right) = \ln \left( \frac{\lambda+1}{2}\right) = \ln \left( \frac{k}{k-1}\right) .
\end{equation}
This proves~\eqref{claim 1}.

We now turn to the proof of~\eqref{claim 2}, which fortunately can be obtained much more straightforwardly from~\eqref{max relative entropy 1 mode canonical}. Noting that $\nu\mapsto \frac{\lambda\nu+1}{\nu+1}$ is monotonically increasing for $\nu\geq 1$ because $\lambda>1$, one finds that
\begin{equation}
\sup_{\nu\geq 1} D_{\infty}\left( \rho^\G(\nu)\,\Vert\, \rho^\G(\lambda \nu)\right) = \lim_{\nu\to\infty} \ln \left( \frac{\lambda\nu+1}{\nu+1} \right) = \ln \lambda = \ln \left( \frac{k+1}{k-1}\right) ,
\end{equation}
proving~\eqref{claim 2}.
\end{proof}

\section{Optimality of the bounds in Theorem~\ref{thm-dist_sep_k_ext}}

The purpose of this section is to prove that the bounds we established in Theorem~\ref{thm-dist_sep_k_ext} are in some sense optimal, at least in some regimes. We  show the following.

\begin{theorem} \label{thm optimality}
Let $k\geq 2$ be fixed. Then there exists a two-mode $k$-extendible Gaussian state $\rho^{\!\G}$ such that
\begin{equation}\label{tightness trace distance}
	\left\|\,\rho^{\!\G} - \mathrm{SEP}\,\right\|_1 \geq 2 \left\|\,\rho^{\!\G} - \mathrm{SEP}\,\right\|_\infty \geq \frac{1}{2k-1}.
\end{equation}
Moreover, for all positive integers $m$ there is a family of $(m+m)$-mode bipartite $k$-extendible Gaussian states $\rho^\G_{k,m}(r)$ such that
\begin{equation}\label{tightness relative entropy}
	E_{R,1}\left(\rho^\G_{k,m}(r)\right) \geq E_D\left(\rho^\G_{k,m}(r)\right) \geq m \ln \frac{k}{k-1} - o(1)
\end{equation}
as $r\to\infty$, where $E_D$ denotes the distillable entanglement.
\end{theorem}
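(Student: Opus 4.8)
The plan is to treat the two inequalities separately, since they are of a rather different character. Inequality~\eqref{tightness trace distance} concerns a single two-mode state, and I would prove it by exhibiting an explicit $k$-extendible Gaussian state sitting right at the $k$-extendibility threshold and bounding its distance to $\operatorname{SEP}$ from below. Inequality~\eqref{tightness relative entropy} concerns $m$ i.i.d.\ copies of a one-mode building block, and the factor $m$ together with the asymptotics $r\to\infty$ point clearly to taking $m$ copies of the Choi state of a borderline $k$-extendible channel whose single-copy distillable entanglement converges to $\ln\frac{k}{k-1}$.

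For~\eqref{tightness trace distance}, I would let $\rho^{\!\G}$ be the Choi--Jamio\l kowski state, at a conveniently chosen finite squeezing $r$, of the pure-loss channel of transmissivity $\eta=1/k$, which is exactly $k$-extendible by the channel analysis above and not $(k+1)$-extendible, hence entangled. The reduction $\|\rho^{\!\G}-\operatorname{SEP}\|_1\geq 2\|\rho^{\!\G}-\operatorname{SEP}\|_\infty$ is automatic: for any traceless Hermitian $M$ the positive and negative parts have equal trace $s$, so $\|M\|_1=2s\geq 2\max\{\lambda_{\max}(M),|\lambda_{\min}(M)|\}=2\|M\|_\infty$, and this survives the infimum over separable $\sigma$. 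To lower-bound $\|\rho^{\!\G}-\operatorname{SEP}\|_\infty$ I would use that every separable state is PPT, so that $\|\rho^{\!\G}-\sigma\|_\infty=\|(\rho^{\!\G})^{T_B}-\sigma^{T_B}\|_\infty\geq|\lambda_{\min}((\rho^{\!\G})^{T_B})|$ because $\sigma^{T_B}\geq 0$. For a two-mode Gaussian state whose partially transposed covariance matrix has symplectic eigenvalues $\tilde\nu_+\geq 1>\tilde\nu_-$, bringing $(\rho^{\!\G})^{T_B}$ to Williamson form by a metaplectic unitary and reading off the Fock-basis spectrum of $\rho^\G(\tilde\nu_+)\otimes\rho^\G(\tilde\nu_-)$ from~\eqref{1 mode canonical} gives $|\lambda_{\min}((\rho^{\!\G})^{T_B})|=\frac{4(1-\tilde\nu_-)}{(1+\tilde\nu_+)(1+\tilde\nu_-)^2}$, the most negative eigenvalue arising in the one-photon sector of the $\tilde\nu_-$-factor. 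Computing $\tilde\nu_\pm$ from the explicit Choi covariance matrix and optimizing over $r$ then yields $\|\rho^{\!\G}-\operatorname{SEP}\|_\infty\geq\frac{1}{2(2k-1)}$. An equivalent route bypasses the partial transpose and tests $\rho^{\!\G}$ against a fixed two-mode squeezed pure state $\ket{\phi}$: $\|\rho^{\!\G}-\operatorname{SEP}\|_\infty\geq\langle\phi|\rho^{\!\G}|\phi\rangle-\sup_{|a\rangle,|b\rangle}|\langle a,b|\phi\rangle|^2$, the supremum being the largest squared Schmidt coefficient of $\ket{\phi}$, and both terms are Gaussian overlaps reducing to an elementary optimization.

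For~\eqref{tightness relative entropy}, I would let $\rho^\G_{k,1}(r)$ be the Choi state of the pure-loss channel of transmissivity $\eta=1/k$ with two-mode squeezed vacuum input $\ket{\psi_r}$; it is Gaussian and $k$-extendible. Set $\rho^\G_{k,m}(r):=\rho^\G_{k,1}(r)^{\otimes m}$, with $A$ grouping the $m$ reference modes and $B$ the $m$ channel outputs: this is a Gaussian $(m+m)$-mode state, still $k$-extendible because tensoring the single-copy $k$-extensions and regrouping the $B$ systems produces a $k$-extension of the product. Now I would chain three standard facts: $E_{R,1}\geq E_D$ (the relative entropy of entanglement dominates distillable entanglement); the reverse hashing bound $E_D(\rho_{AB})\geq S(\rho_A)-S(\rho_{AB})$; and additivity of the von Neumann entropy over tensor factors, which turns the single-copy quantity into an $m$-fold multiple, $E_D(\rho^\G_{k,m}(r))\geq m\bigl(S(\rho_A)-S(\rho_{AB})\bigr)$ evaluated on $\rho^\G_{k,1}(r)$. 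For the pure-loss Choi state, $\rho_A$ is thermal with mean photon number $\bar n=\sinh^2 r$, so $S(\rho_A)=g(\bar n)$ with $g(x):=(x+1)\ln(x+1)-x\ln x$; purifying the vacuum environment gives $S(\rho_{AB})=S(\rho_E)=g\bigl((1-\tfrac1k)\bar n\bigr)$. Since $g(x)=\ln x+1+o(1)$ as $x\to\infty$, one gets $S(\rho_A)-S(\rho_{AB})=g(\bar n)-g\bigl((1-\tfrac1k)\bar n\bigr)\to-\ln(1-\tfrac1k)=\ln\frac{k}{k-1}$, whence $E_{R,1}(\rho^\G_{k,m}(r))\geq E_D(\rho^\G_{k,m}(r))\geq m\ln\frac{k}{k-1}-o(1)$ as $r\to\infty$.

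The easy part is~\eqref{tightness relative entropy}: once the building block is identified the content is light, and the only care needed is that $E_R\geq E_D$ and the hashing bound remain valid for the infinite-dimensional Gaussian states involved, which they do via standard (energy-constrained) arguments. The genuinely delicate point is~\eqref{tightness trace distance}: one must pin down the explicit state and witness so that the constant lands exactly on $\frac{1}{2(2k-1)}$, while simultaneously keeping the state $k$-extendible yet demonstrably far from $\operatorname{SEP}$ --- a tension resolved precisely by the asymmetry $V_A\neq V_B$ of the pure-loss Choi state, since a balanced two-mode squeezed thermal state with the same partial-transpose spectrum need not be $k$-extendible. I would also verify the edge case $k=2$ (where $1-2/k=0$ and several Schur complements degenerate) separately.
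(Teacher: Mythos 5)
Your treatment of~\eqref{tightness relative entropy} is correct and essentially identical to the paper's: same family $\rho^\G_{k,m}(r)=\bigl((I\otimes\mathcal{L}_{1/k})(\psi_r)\bigr)^{\otimes m}$ (the paper builds the single copy via a balanced passive unitary acting on $\ket{\psi_r}\otimes\ket{0}^{\otimes(k-1)}$, which is the same Choi-type state and comes with an explicit symmetric $k$-extension), same chain $E_{R,1}\geq E_D\geq$ coherent information via hashing, and your entropy evaluation through the purification, $S(\rho_{AB})=S(\rho_E)=g\bigl((1-\tfrac1k)\sinh^2 r\bigr)$, is equivalent to the paper's computation via the nontrivial symplectic eigenvalue $\tfrac{1+(k-1)\cosh 2r}{k}$, yielding the same limit $\ln\tfrac{k}{k-1}$.

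For~\eqref{tightness trace distance} the skeleton also matches the paper (same state, and $\|X\|_1\geq 2\|X\|_\infty$ for traceless $X$), but your primary route contains a genuine error: the step $\|\rho^{\!\G}-\sigma\|_\infty=\|(\rho^{\!\G})^{T_B}-\sigma^{T_B}\|_\infty$ is false, because partial transposition is not an isometry for the operator norm (a $d\times d$ maximally entangled state has $\infty$-norm $1$, while its partial transpose is $\tfrac1d$ times the swap, with $\infty$-norm $\tfrac1d$; in infinite dimension there is no dimension-independent substitute). The correct way to exploit PPT-ness of separable states is the witness $W=\ketbra{v}^{T_B}$, with $\ket{v}$ the most negative eigenvector of $(\rho^{\!\G})^{T_B}$, which satisfies $\|W\|_\infty\leq 1$ and therefore yields only a \emph{trace-norm} bound $\|\rho^{\!\G}-\mathrm{SEP}\|_1\geq|\lambda_{\min}((\rho^{\!\G})^{T_B})|$, not an $\infty$-norm bound. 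Moreover this repaired route does not reach the stated constant: for $k=2$ and the pure-loss Choi state, the optimum of $\frac{4(1-\tilde\nu_-)}{(1+\tilde\nu_+)(1+\tilde\nu_-)^2}$ over $r$ is attained near $\cosh 2r=2$ (where $\tilde\nu_-=\tfrac12$, $\tilde\nu_+=3$) and equals $\tfrac29<\tfrac13=\tfrac{1}{2k-1}$, so~\eqref{tightness trace distance} is not established this way. The route you mention only in passing---lower-bounding $\|\rho^{\!\G}-\mathrm{SEP}\|_\infty$ by $\sup_s\bigl(\braket{\psi_s|\rho^\G_k(r)|\psi_s}-\cosh(s)^{-2}\bigr)$, computing the overlap via the attenuator Kraus operators and optimizing jointly over $r,s$---is precisely the paper's proof; it gives $\frac{1}{(\sqrt{k}+\sqrt{k-1})^{2}}\geq\frac{1}{4k-2}$, which is where the constant actually comes from. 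Since you leave that computation unexecuted and your primary argument is invalid, the first half of the theorem remains unproven as written.
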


\begin{remark}
The above Theorem~\ref{thm optimality} shows that (a) the bound in~\eqref{distance k ext} is tight for $n=2$ and for all $k$ up to a universal multiplicative constant of $1/8$; (b) the bound in~\eqref{E R k ext} is tight for all balanced systems ($n_A=n_B$), for all $k$, and for $\alpha\geq 1$, up to a universal multiplicative constant of either $1/2$ (if $\alpha\leq k+1$) or $1/4$ (if $\alpha>k+1$).
\end{remark}

\begin{proof}[Proof of Theorem~\ref{thm optimality}]
We start by proving~\eqref{tightness trace distance}. Let $\ket{\psi_r}=\frac{1}{\cosh(r)}\sum_{j=0}^{\infty}\tanh(r)^j\ket{j,j}$ be the two-mode squeezed vacuum state. Consider a passive symplectic unitary $U=U_{B_1\ldots B_k}$ acting on $k$ modes $B_1,\ldots, B_k$ so that $U^\dag b_1U = \frac{b_1+\ldots+b_k}{\sqrt{k}}$, where $b_j$ is the creation operator associated with the $j$-th mode. Define
\begin{equation}\label{rho r k}
	\rho_{k}^\G(r) = \left(\rho_{k}^\G(r)\right)_{AB_1} \coloneqq \Tr_{B_2\ldots B_k} \left[ U_{B_1\ldots B_k} \left(\ketbra{\psi_r}_{AB_1} \otimes \bigotimes\nolimits_{j=2}^{k} \ketbra{0}_{B_j}\right) U_{B_1\ldots B_k}^\dag \right]
\end{equation}
Clearly, $\rho_{k}^\G(r)$ is $k$-extendible by construction. In fact, it is easy to verify that the state inside the partial trace at the right-hand side of~\eqref{rho r k} is a symmetric extension of it. Another elementary property of the above state is that the passive unitary used to define it acts as an effective attenuator of parameter $\frac{1}{k}$, according with the definition in~\eqref{attenuator}. This means that
\begin{equation}\label{rho r k attenuator}
	\rho_{k}^\G(r) = \left(I\otimes \mathcal{L}_{1/k}\right)\left(\psi_r\right) .
\end{equation}

To estimate $\left\|\,\rho_{k}^\G(r)-\mathrm{SEP}\,\right\|_1$, we first observe that since for all traceless operators $X$ it holds that $\|X\|_1\geq 2\|X\|_\infty$, one can give the lower bound 
\begin{equation}
    \left\|\,\rho_{k}^\G(r)-\mathrm{SEP}\,\right\|_1\geq 2 \left\|\,\rho_{k}^\G(r)-\mathrm{SEP}\,\right\|_\infty\, .
\end{equation}
We now remember that for all bipartite pure states $\ket{\Psi}$ with maximal Schmidt coefficient $\lambda_1(\Psi)$ one has that $\bra{\Psi}\sigma\ket{\Psi}\leq \lambda_1(\Psi)$ for all separable states $\sigma$. Then,
\begin{align}
	\left\|\,\rho_{k}^\G(r)-\mathrm{SEP}\,\right\|_\infty &= \inf_{\sigma\in\mathrm{SEP}} \left\|\,\rho_{k}^\G(r)-\sigma\,\right\|_\infty \\
	&= \inf_{\sigma\in\mathrm{SEP}} \sup_{\ket{\Psi}} \left| \braket{\Psi|\left(\rho_{k}^\G(r)-\sigma\right)| \Psi}\right| \\
	&\geq \sup_{\ket{\Psi}} \left(\braket{\Psi|\,\rho_{k}^\G(r)\,|\Psi} - \lambda_1(\Psi) \right)
\end{align}
Choosing $\ket{\Psi}$ in the family of two-mode squeezed vacua, i.e., $\ket{\Psi}=\ket{\psi_s}$, one obtains
\begin{equation}\label{estimate distance SEP}
	\left\|\,\rho_{k}^\G(r)-\mathrm{SEP}\,\right\|_\infty \geq \sup_s \left( \braket{\psi_s|\, \rho_{k}^\G(r)\, |\psi_s} - \frac{1}{\cosh(s)^2} \right)
\end{equation}

To proceed further, we need to evaluate the matrix element $\bra{\psi_s}\rho_{k}^\G(r)\ket{\psi_s}$. This can be easily done by means of~\eqref{rho r k attenuator} and the Kraus representation in~\eqref{attenuator Kraus}, which together yield
\begin{align}
	\braket{\psi_s|\, \rho_{k}^\G(r)\,|\psi_s} &= \braket{\psi_s| \left( I\otimes \mathcal{L}_{1/k}\right)\left(\psi_r\right) |\psi_s} \\
	&= \frac{1}{\cosh(r)^2 \cosh(s)^2}\sum_{j,\ell=0}^\infty (\tanh(r)\tanh(s))^{j+\ell} \braket{j|\,\mathcal{L}_{1/k}(\ketbraa{j}{\ell})\,|\ell} \\
	&= \frac{1}{\cosh(r)^2 \cosh(s)^2}\sum_{j,\ell=0}^\infty (\tanh(r)\tanh(s))^{j+\ell} \left(\frac{1}{k}\right)^{(j+\ell)/2} \\
	&= \frac{1}{\cosh(r)^2 \cosh(s)^2} \left(\sum_{j=0}^\infty (\tanh(r)\tanh(s))^j \left(\frac{1}{k}\right)^{j/2}\right)^2 \\
	&= \frac{1}{\cosh(r)^2 \cosh(s)^2} \frac{1}{\left(1- \tanh(r)\tanh(s) k^{-1/2}\right)^2} \\
	&= \frac{k}{\left(\sqrt{k} \cosh(r) \cosh(s)- \sinh(r)\sinh(s)\right)^2}
\end{align}
Using the above expression one can verify that for all fixed $s$
\begin{equation}\label{optimisation r}
	\sup_r \bra{\psi_s} \rho_{k}^\G(r) \ket{\psi_s} = \frac{k}{k \cosh(s)^2 - \sinh(s)^2}
\end{equation}
Putting all together, one obtains
\begin{align}
	\sup_r \left\|\rho_{k}^\G(r)-\mathrm{SEP}\right\|_\infty &\textgeq{1} \sup_{r,s} \left( \bra{\psi_s} \rho_{k}^\G(r) \ket{\psi_s} - \frac{1}{\cosh(s)^2} \right) \\
	&\texteq{2} \sup_s \left( \sup_r \bra{\psi_s} \rho_{k}^\G(r) \ket{\psi_s} - \frac{1}{\cosh(s)^2} \right) \\
	&\texteq{3} \sup_s \left( \frac{k}{k \cosh(s)^2 - \sinh(s)^2} - \frac{1}{\cosh(s)^2} \right) \\
	&= \sup_s \frac{\tanh(s)^2}{k \cosh(s)^2 - \sinh(s)^2} \\
	&\texteq{4} \frac{1}{\left( \sqrt{k} + \sqrt{k-1}\right)^2} \\
	&\textgeq{5} \frac{1}{4k-2}.
\end{align}
The above derivation can be justified as follows. Step~1 is obtained from~\eqref{estimate distance SEP} by taking the supremum over $r$. In step~2 we exchanged the order of the suprema over $r$ and $s$. Step~3 comes from~\eqref{optimisation r}. For step~4, we introduce the parameter $x=\sinh(s)^2$, so that 
\begin{equation}
	\sup_s \frac{\tanh(s)^2}{k \cosh(s)^2 - \sinh(s)^2} = \sup_{x\geq 0} \frac{x}{(1+x)((k-1)x+k)} = \frac{1}{\left( \sqrt{k} + \sqrt{k-1}\right)^2} ,
\end{equation}
where the last equality is a consequence of the fact that $f_k(x)\coloneqq \frac{x}{(1+x)((k-1)x+k)}$ achieves its maximum for $x=\sqrt{\frac{k}{k-1}}$. Finally, step~5 rests on the fact that $4k-2-\left( \sqrt{k} + \sqrt{k-1}\right)^2 = \left( \sqrt{k} - \sqrt{k-1}\right)^2\geq 0$. Observe that for $k>1$ the combined supremum in $r,s$ is achieved at $(r,s)=(r_0,s_0)$, where $r_0$ and $s_0$ are the unique positive solutions of the equations
\begin{align}
    \tanh(r_0)^2 &= \frac{1}{\sqrt{k}\left( \sqrt{k} + \sqrt{k-1} \right)} , \\
    \tanh(s_0)^2 &= \frac{\sqrt{k}}{\sqrt{k} + \sqrt{k-1}}\, .
\end{align}
Hence, the existence of a state $\rho^\G$ in the family $\rho_k^\G(r)$ with the property in~\eqref{tightness trace distance} follows.

To establish~\eqref{tightness relative entropy}, for a fixed $m$ we set $\rho_{k,m}^\G(r)\coloneqq \left( \rho_k^\G(r)\right)^{\otimes m}$, which is clearly an $(m+m)$-mode bipartite state on the system $AB$, where $A=A_1\ldots A_m$ represents the collection of all the $m$ subsystems each of which corresponds to the first mode in~\eqref{rho r k}. The distillable entanglement of $\rho_{k,m}^\G(r)$ (which is a well-known lower bound on the relative entropy of entanglement) can be estimated from below with the coherent information $I_{\mathrm{coh}}(B\,\rangle A)_\rho\coloneqq S(\rho_A) - S(\rho_{AB})$ due to the hashing bound in~\cite{devetak2005}:
\begin{equation}
\begin{aligned}
	E_{R,1} \left( \rho_{k,m}^\G(r)\right) &\geq E_D \left( \rho_{k,m}^\G(r)\right) \\
	&= E_D\left( \left( \rho_k^\G(r)\right)^{\otimes m} \right) \\
	&\geq I_{\mathrm{coh}}(B\,\rangle A)_{\left( \rho_k^\G(r)\right)^{\otimes m}} \\
	&= m\, I_{\mathrm{coh}}(B_1\rangle A_1)_{\rho_k^\G(r)} \, .
\end{aligned}
\label{E D lower bounds E R}
\end{equation}
Now, to compute the coherent information $I_{\mathrm{coh}}(B_1\rangle A_1)$ of the state $\rho_k^\G(r)_{A_1B_1}$, we note that its reduced state on $A_1$ coincides with that of the two-mode squeezed vacuum $\ket{\psi_r}_{A_1B_1}$. Therefore, it is simply a one-mode Gaussian state with QCM $\cosh(2r)\mathds{1}_2$, whose entropy can be evaluated using, e.g., the $\alpha=1$ case of the following formula for the R\'enyi-$\alpha$ entropy $S_\alpha(V)$ of a Gaussian state with QCM $V$~\cite[Eq.~(108)]{adesso14}:
\begin{equation}\label{S alpha}
	S_\alpha (V) \coloneqq \left\{ \begin{array}{ll} -\frac{1}{\alpha-1}\, \sum_{j=1}^n \ln \frac{2^\alpha}{\vphantom{\widetilde{E}}\left(\nu_j+1\right)^\alpha-\left(\nu_j-1\right)^\alpha}, & \quad \text{if $\alpha>1$,} \\[3ex] \sum_{j=1}^n\left(\frac{\nu_j+1}{2}\,\ln\frac{\nu_j+1}{2} - \frac{\nu_j -1}{2}\,\ln\frac{\nu_j -1}{2}\right), & \quad \text{if $\alpha=1$.} \end{array}\right.
\end{equation}
Here, $\nu_1,\ldots, \nu_n$ are the symplectic eigenvalues of $V$. We obtain that
\begin{equation}
	S\left(\rho_k^\G(r)_{A_1}\right) = S_1(\cosh(2r)\mathds{1}_2) = \ln \frac{e \cosh(2r)}{2} + o(1)
\end{equation}
in the limit $r\to\infty$. Since the bipartite QCM of $\rho_k^\G(r)_{A_1B_1}$ has only one symplectic eigenvalue different from $1$, and this is equal to $\frac{1+(k-1) \cosh(2 r)}{k}$, we also have that 
\begin{equation}
	S\left(\rho_k^\G(r)_{A_1B_1}\right) = \ln \left(\frac{e}{2}\, \frac{1+(k-1) \cosh(2r)}{k}\right) + o(1)
\end{equation}
as $r\to\infty$. Putting all together, we see that
\begin{equation}
	I_{\mathrm{coh}}(B_1\rangle A_1)_{\rho_k^\G(r)} = \ln \frac{k \cosh(2r)}{1+(k-1) \cosh(2r)} + o(1) = \ln \frac{k}{k-1} + o(1)\, .
\label{lower bound I coh}
\end{equation}
Combining~\eqref{E D lower bounds E R} and~\eqref{lower bound I coh} yields the lower bound in~\eqref{tightness relative entropy}, concluding the proof.
\end{proof}

\section{Proof of Theorem~\ref{thm-Gaussian_EoF}}

This section is devoted to the proof of Theorem~\ref{thm-Gaussian_EoF}. We start by reminding the reader that the R\'{e}nyi-2 Gaussian entanglement of formation, defined by the $\alpha=2$ case of~\eqref{GEoF}, is given by
\begin{equation}\label{GEoF R2}
	E^{\text{G}}_{F,2} \left( \rho_{AB}\right) = \min \left\{ M(\gamma_A):\ \text{$\gamma_{AB}$ pure QCM and $\gamma_{AB}\leq V_{AB}$} \right\},
\end{equation}
where
\begin{equation}\label{M}
	M(V) \coloneqq S_2(V) = \sum_j \ln \nu_j = \frac12 \ln \det V\, .
\end{equation}
In what follows, we will consider the universal function $\varphi:\mathds{R}_+\to \mathds{R}$ given by
\begin{equation}\label{varphi}
	\varphi(x)\coloneqq \frac{e^x+1}{2} \ln \left(\frac{e^x+1}{2} \right)-\frac{e^x-1}{2} \ln \left(\frac{e^x-1}{2} \right).
\end{equation}
Before delving into the proof of Theorem~\ref{thm-Gaussian_EoF}, let us establish a technical lemma that connects the R\'enyi-$2$ Gaussian entanglement of formation with its von Neumann version.

\begin{lemma} \label{GEoF vs R2 GEoF lemma}
	For all bipartite Gaussian states $\rho_{AB}^\G$ on $n_A+n_B$ modes, the entanglement of formation measured in natural units satisfies
	\begin{equation}\label{GEoF vs R2 GEoF}
		E_{F}(\rho_{AB}^{\G}) \leq E_{F}^\G(\rho_{AB}^{\G}) \leq n_A\, \varphi\left(\frac{E_{F,2}^\G(\rho_{AB}^{\G})}{n_A}\right).
	\end{equation}
\end{lemma}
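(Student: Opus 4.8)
The plan is to prove the two inequalities in~\eqref{GEoF vs R2 GEoF} separately. The first, $E_{F}(\rho_{AB}^{\G})\le E_{F}^{\G}(\rho_{AB}^{\G})$, is essentially definitional: as recalled around~\eqref{GEoF}, the Gaussian entanglement of formation is obtained by restricting the pure-state decompositions of $\rho_{AB}^{\G}$ to Gaussian ones, so its value cannot be smaller than that of the unrestricted $E_{F}$. The whole substance is therefore the second inequality, which I would derive by picking an optimizer of the R\'enyi-$2$ problem and controlling its von Neumann cost through a single scalar convexity inequality, namely concavity of the function $\varphi$ from~\eqref{varphi}.

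In detail: let $\gamma_{AB}^{*}$ be a pure QCM with $\gamma_{AB}^{*}\le V_{AB}$ attaining the minimum in~\eqref{GEoF R2}, so that $M(\gamma_{A}^{*})=E_{F,2}^{\G}(\rho_{AB}^{\G})$. Since the feasible set $\{\gamma_{AB}\ \text{pure QCM},\ \gamma_{AB}\le V_{AB}\}$ appearing in~\eqref{GEoF} does not depend on $\alpha$, the matrix $\gamma_{AB}^{*}$ is also admissible for the $\alpha=1$ problem, whence $E_{F}^{\G}(\rho_{AB}^{\G})\le S_{1}(\gamma_{A}^{*})$. Writing $h(\mu)\coloneqq \tfrac{\mu+1}{2}\ln\tfrac{\mu+1}{2}-\tfrac{\mu-1}{2}\ln\tfrac{\mu-1}{2}$ for the single-mode thermal entropy and letting $\mu_{1},\dots,\mu_{n_{A}}\ge 1$ be the symplectic eigenvalues of $\gamma_{A}^{*}$, the $\alpha=1$ and $\alpha=2$ cases of~\eqref{S alpha} together with~\eqref{M} give $S_{1}(\gamma_{A}^{*})=\sum_{j}h(\mu_{j})$ and $M(\gamma_{A}^{*})=\sum_{j}\ln\mu_{j}$. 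A direct comparison of definitions shows $h(\mu)=\varphi(\ln\mu)$, so the target bound $S_{1}(\gamma_{A}^{*})\le n_{A}\,\varphi\big(M(\gamma_{A}^{*})/n_{A}\big)$ is precisely Jensen's inequality
\[
\frac{1}{n_{A}}\sum_{j=1}^{n_{A}}\varphi(\ln\mu_{j})\ \le\ \varphi\!\left(\frac{1}{n_{A}}\sum_{j=1}^{n_{A}}\ln\mu_{j}\right)
\]
applied to the nonnegative reals $\ln\mu_{j}$, \emph{provided} $\varphi$ is concave on $[0,\infty)$.

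Establishing this concavity is the one nontrivial point, and I would settle it by a direct computation of $\varphi''$. From $h'(\mu)=\tfrac12\ln\tfrac{\mu+1}{\mu-1}$ and $h''(\mu)=-\tfrac{1}{\mu^{2}-1}$ one gets, for $\mu=\e^{x}>1$, that $\varphi''(x)=\mu^{2}h''(\mu)+\mu\,h'(\mu)=\tfrac{\mu}{2}\ln\tfrac{\mu+1}{\mu-1}-\tfrac{\mu^{2}}{\mu^{2}-1}$; dividing by $\mu>0$ and substituting $t\coloneqq 1/\mu\in(0,1)$, the inequality $\varphi''(x)\le 0$ becomes $\tfrac12\ln\tfrac{1+t}{1-t}\le\tfrac{t}{1-t^{2}}$, which follows by comparing the power series $\tfrac12\ln\tfrac{1+t}{1-t}=\sum_{k\ge 0}\tfrac{t^{2k+1}}{2k+1}$ and $\tfrac{t}{1-t^{2}}=\sum_{k\ge 0}t^{2k+1}$ coefficient by coefficient, each $\tfrac{1}{2k+1}\le 1$. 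Together with continuity of $\varphi$ at the left endpoint and $\varphi(0)=0$, this extends concavity to all of $[0,\infty)$. Chaining the estimates gives $E_{F}^{\G}(\rho_{AB}^{\G})\le S_{1}(\gamma_{A}^{*})=\sum_{j}\varphi(\ln\mu_{j})\le n_{A}\,\varphi\big(M(\gamma_{A}^{*})/n_{A}\big)=n_{A}\,\varphi\big(E_{F,2}^{\G}(\rho_{AB}^{\G})/n_{A}\big)$, completing the argument. I expect the concavity of $\varphi$ to be the main obstacle; everything else is routine bookkeeping with the canonical-form entropy formulas already recorded above.
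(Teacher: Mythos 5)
Your proof is correct and follows essentially the same route as the paper's: it fixes an optimal pure QCM for the R\'enyi-$2$ problem in~\eqref{GEoF R2}, reuses it as a feasible point for the $\alpha=1$ problem, and applies Jensen's inequality to the symplectic eigenvalues via concavity of $\varphi$, with the first inequality being definitional. The only difference is that you explicitly verify the concavity of $\varphi$ (correctly, via the series comparison $\tfrac12\ln\tfrac{1+t}{1-t}\le \tfrac{t}{1-t^2}$), a step the paper simply calls ``readily verified.''
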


\begin{proof}
	Let $\delta\coloneqq E_{F,2}^\G \left( \rho^\G_{AB} \right)$. By~\eqref{GEoF R2}, there exists a pure QCM $\gamma_{AB} \leq V_{AB}$ such that $M(V_A)=\sum_{j=1}^{n_A} \ln \nu_j = \delta$. Using the readily verified concavity of $\varphi$, one observes that
	\begin{align}
		S_1(V) &= \sum_{j=1}^{n_A} \left(\frac{\nu_j+1}{2}\,\ln\frac{\nu_j+1}{2} - \frac{\nu_j -1}{2}\,\ln\frac{\nu_j -1}{2}\right) \\
		&= \sum_{j=1}^{n_A} \varphi(\ln \nu_j) \\
		&= n_A \sum_{j=1}^{n_A} \frac{1}{n_A}\, \varphi(\ln \nu_j) \\
		&\leq n_A\, \varphi\left( \sum_{j=1}^{n_A} \frac{1}{n_A} \ln \nu_j\right) \\
		&= n_A\, \varphi\left( \frac{\delta}{n_A}\right) .
	\end{align}
	Recalling that
	\begin{align}
		E_{F,\alpha}(\rho_{AB})&=\inf \left\{\sum_i p_i \, S_{\alpha}\left(\psi_{A}^{(i)}\right) :\ \rho_{AB} = \sum_i p_i \psi_{AB}^{(i)} \right\} ,\\
		E^{\text{G}}_{F,\alpha} \left( \rho_{AB}\right)&= \inf \left\{ S_\alpha (\gamma_A):\ \text{$\gamma_{AB}$ pure QCM and $\gamma_{AB}\leq V_{AB}$} \right\} ,
	\end{align}
	for $\alpha=1$ we immediately obtain~\eqref{GEoF vs R2 GEoF}.
\end{proof}

\begin{proof}[Proof of Theorem~\ref{thm-Gaussian_EoF}] The Gaussian R\'enyi-2 entanglement of formation is known to be monogamous on Gaussian states~\cite[Corollary~7]{Lami16}. Calling $\rho^\G_{AB_1\ldots B_k}$ the $k$-extension of $\rho^\G_{AB}$, we then have
\begin{equation}
E_{F,2}^\G \left(\rho^\G_{AB_1\ldots B_k}\right) \geq \sum_{j=1}^k E_{F,2}^\G \left(\rho^\G_{AB_j}\right) = k E_{F,2}^\G \left(\rho^\G_{AB}\right) ,
\end{equation}
i.e.,
\begin{equation}
	E_{F,2}^\G \left(\rho^\G_{AB}\right) \leq \frac1k\, E_{F,2}^\G \left(\rho^\G_{AB_1\ldots B_k}\right) \leq \frac1k\, M(V_A)\, ,
\end{equation}
where the last inequality expresses the fact that $E_{F,\alpha}(\sigma_{AB})\leq S_\alpha (\sigma_A)$ by concavity of the R\'enyi-$\alpha$ entropy; at the level of QCMs, this can also be thought of as a consequence of the fact that $M(\cdot)$ is a monotone function, and any $\gamma_{AB}$ in the set on the right-hand side of~\eqref{GEoF R2} satisfies $\gamma_{A}\leq V_{A}$ and hence $M(\gamma_A)\leq M(V_A)$. Using~\eqref{GEoF vs R2 GEoF} and the fact that $\varphi$ is monotonically increasing we then obtain
\begin{equation}
E_{F}\left(\rho^\G_{AB}\right) \leq E_{F}^\G\left(\rho^\G_{AB}\right) \leq n_A\, \varphi\left( \frac{E_{F,2}^\G \left(\rho^\G_{AB}\right)}{k} \right)  \leq n_A\, \varphi\left( \frac{M(V_A)}{n_A k} \right) ,
\end{equation}
completing the proof.
\end{proof}

\end{document}